\newtheorem{theorem}{Theorem}[section]
\newtheorem{lemma}[theorem]{Lemma}
\newtheorem{proposition}[theorem]{Proposition}
\newtheorem{corollary}[theorem]{Corollary}
\newcommand{\cA}{\mathcal{A}}
\newcommand{\cC}{\mathcal{C}}
\newcommand{\cF}{\mathcal{F}}
\newcommand{\cG}{\mathcal{G}}
\newcommand{\cJ}{\mathcal{J}}
\newcommand{\cN}{\mathcal{N}}
\newcommand{\cO}{\mathcal{O}}
\newcommand{\cQ}{\mathcal{Q}}
\newcommand{\cR}{\mathcal{R}}
\newcommand{\cW}{\mathcal{W}}
\newcommand{\cX}{\mathcal{X}}
\newcommand{\cZ}{\mathcal{Z}}
\newcommand{\bC}{\mathbb{C}}
\newcommand{\bP}{\mathbb{P}}
\newcommand{\bQ}{\mathbb{Q}}
\newcommand{\bR}{\mathbb{R}}
\newcommand{\bZ}{\mathbb{Z}}
\newcommand{\fA}{\mathfrak{A}}
\newcommand{\fB}{\mathfrak{B}}
\newcommand{\fm}{\mathfrak{m}}
\newcommand{\sabs}[1]{\left| \phantom{\rule{0pt}{7pt}} #1 \right|}
\newcommand{\ie}{{i.e.}}
\newcommand{\eg}{{e.g.}}
\DeclareMathOperator{\sign}{sign}
\DeclareMathOperator{\Tr}{Tr}
\DeclareMathOperator{\re}{\mathbb{R}e}
\DeclareMathOperator{\im}{\mathbb{I}m}
\newcommand{\ex}{{\bf x}}
\newcommand{\why}{{\bf y}}
\newcommand{\eN}{{\bf N}}
\newcommand{\zee}{{\bf z}}
\newcommand{\Qyoo}{{\bf Q}}
\def\vev#1{\langle#1\rangle}
\def\eq#1{(\ref{#1})}
\def\p{\partial}
\def\Om{{\Omega}}
\def\a{{\alpha}}
\def\b{{\beta}}
\def \ra {\rightarrow}
\def\tir{{\tilde r}}
\def\Ga{{\Gamma}}
\def\bz{{\bar{z}}}
\def\s{{\sigma}}
\def\S{{\Sigma}}
\def\vF{{\vec F}}
\def\tS{{\widetilde{\Sigma}}}
\def\tir{{\tilde r}}
\def\ts{{\tilde \sigma}}
\def\tis{{\tilde s}}
\def\m{\mathfrak{m}}
\def\tm{\tilde{\mathfrak{m}}}
\def\Id{{I'}}
\def\Jd{{J'}}
\def\vFc{{\vec F^c}}
\newcommand{\be}{\begin{equation}}
\newcommand{\ee}{\end{equation}}
\newcommand{\bea}{\begin{equation} \begin{aligned}}
\newcommand{\eea}{\end{aligned} \end{equation}}
\newcommand{\nn}{\nonumber}
\newcommand{\bln}{\begin{align}}
\newcommand{\eln}{\end{align}}
\newcommand{\bst}{\begin{split}}
\newcommand{\est}{\end{split}}
\newcommand{\bi}{\begin{itemize}}
\newcommand{\ei}{\end{itemize}}
\newcommand{\ben}{\begin{enumerate}}
\newcommand{\een}{\end{enumerate}}
\title{Cluster algebras from \\
dualities of 2d $\boldsymbol{\cN{=}(2,2)}$ quiver gauge theories}
\author{Francesco Benini,}
\author{Daniel S. Park}
\author{and Peng Zhao}
\affiliation{Simons Center for Geometry and Physics \\
State University of New York \\
Stony Brook, NY 11794-3636, USA}
\emailAdd{fbenini}
\emailAdd{dpark}
\emailAdd{pzhao {\rm at} scgp.stonybrook.edu}
\abstract{We interpret certain Seiberg-like dualities of two-dimensional $\cN{=}(2,2)$ quiver gauge theories with unitary groups as cluster mutations in cluster algebras, originally formulated by Fomin and Zelevinsky. In particular, we show how the complexified Fayet-Iliopoulos parameters of the gauge group factors transform under those dualities and observe that they are in fact related to the dual cluster variables of cluster algebras. This implies that there is an underlying cluster algebra structure in the quantum K\"ahler moduli space of manifolds constructed from the corresponding K\"ahler quotients. We study the $S^2$ partition function of the gauge theories, showing that it is invariant under dualities/mutations, up to an overall normalization factor whose physical origin and consequences we spell out in detail. We also present similar dualities in $\cN{=}(2,2)^*$ quiver gauge theories, which are related to dualities of quantum integrable spin chains.}
\begin{document}

\setcounter{tocdepth}{2}

\maketitle

%%%%%%%%%%%%%%%%%%%%%%%%%%%%%%%%%%%%%%%%%%%%
%
%              Body of paper
%
%%%%%%%%%%%%%%%%%%%%%%%%%%%%%%%%%%%%%%%%%%%%

\section{Introduction}

Two-dimensional quantum field theory has been proven to be a fruitful subject in physics, playing a crucial role in key developments in statistical physics, condensed matter physics and, of course, string theory. The physics of two-dimensional theories---in particular, those with supersymmetry or superconformal symmetry---also is well-known to have far-reaching implications for topics in mathematics, such as quantum cohomology, mirror symmetry and integrable systems. In this paper, we add to the list of the many links between 2d physics and mathematics, a relation between certain two-dimensional gauge theories with $\cN=(2,2)$ supersymmetry and cluster algebras. The core of the connection relies on a set of infrared (IR) dualities, reminiscent of four-dimensional Seiberg duality \cite{Seiberg:1994pq}.

Cluster algebras are intricate discrete dynamical systems based on simple algebraic recurrences, that were originally formulated by mathematicians Fomin and Zelevinsky \cite{FominZ1} (see also \cite{FominZ2, BFominZ3, FominZ4}) to describe the coordinate rings of groups and Grassmannians.%
\footnote{It is possible to argue (see \eg{}, \cite{ZelevinskyWCM}) that certain elements of cluster algebras were already known to physicists in the form of Seiberg dualities of four-dimensional quiver gauge theories \cite{Berenstein:2002fi}.}
Since then, cluster algebra has developed into a rich subject, finding use in a wide range of topics in mathematics and physics. In mathematics, cluster algebras have appeared in the study of Teichm\"uller theory \cite{gekhtman2005, FockGoncharov, Fock:2003xxy}, tilting theory, pre-projective and Hall algebras, Donaldson-Thomas invariants and wall-crossing \cite{Kontsevich:2008, Kontsevich:2009}. They have also found applications in many topics in physics such as Zamolodchikov periodicity in the thermodynamic Bethe ansatz and integrable Y-systems \cite{FominZ:Y}, the identification of BPS spectra in gauge theories \cite{Gaiotto:2010be, Alim:2011ae, Alim:2011kw, Xie:2012gd, Cecotti:2014zga}, and more recently, the study of four-dimensional quiver gauge theories \cite{Xie:2012mr, Heckman:2012jh, Franco:2014nca}, 3d theories constructed from M5-branes \cite{Terashima:2013fg, Dimofte:2013iv}, line operators \cite{Xie:2013lca, Cordova:2013bza}, and scattering amplitudes \cite{ArkaniHamed:2012nw, Golden:2013xva}.%
\footnote{A more extensive list of applications of cluster algebras can be found in \cite{clusterportal}.}

Cluster algebra puts three discrete dynamical systems under one roof. The basic objects used in defining a cluster algebra are ``seeds'' $(B,\why, \ex)$ consisting of three sets of data. Each seed contains a skew-symmetric $n\times n$ matrix $B = b_{ij}$ with integer entries, which can be thought of as the adjacency matrix of an oriented quiver diagram consisting of $n$ nodes and arrows connecting them. To each node of this quiver diagram, a ``coefficient'' $y_i$ and a ``cluster variable'' $x_i$ are assigned, which in turn constitute the coefficient $n$-tuple $\why$ and the ``cluster'' $\ex$. Given a seed, one can mutate it by a set of rules. Mutations are involutions that can be applied to any given node: $\mu_k$, the mutation at node $k$, acts on a given seed to produce a new seed
\be
\label{mutations rough}
(B', \why', \ex') = \mu_k( B, \why, \ex) \;.
\ee
The precise transformation rules are spelled out in section \ref{ss:CA}. The mutations define a discrete dynamical system; by repeated applications of mutations in arbitrary sequences one generates a ``tree" of new seeds. One can choose to study different dynamical subsystems of the full cluster algebra structure, since it is possible to consistently restrict to matrix (or quiver) mutations $B' = \mu_k(B)$, coefficient dynamics $(B',\why') = \mu_k(B,\why)$, or cluster algebras with vanishing coefficients $(B',\ex') = \mu_k(B,\ex)$.

The cluster $\ex$ can be thought of as a particular set of coordinates on a manifold, and the mutations $\mu_k(B,\why): \ex \mapsto \ex'$ can be interpreted as birational coordinate transformations. The cluster algebra is then defined to be a commutative algebra generated by all possible coordinates related to each other by such birational relations. We present precise definitions in section \ref{ss:CA}.

In this paper, we show that the cluster algebra mutation rules are realized in two-dimensional $\cN=(2,2)$ quiver gauge theories in the guise of Seiberg-like dualities. To do so, we first study some aspects of 2d $\cN=(2,2)$ supersymmetric SQCD-like gauge theories with unitary groups, which are used as building blocks of the quiver theories that we eventually study. In particular, we study the Seiberg-like IR dualities among the SQCD-like theories, originally proposed in \cite{Benini:2012ui} building on the works \cite{Hanany:1997vm, Hori:2006dk, Hori:2011pd}.%
\footnote{\label{foo: 3d case}%
Such dualities have similar features with the three-dimensional dualities proposed in \cite{Benini:2011mf}, and some elements of cluster algebras have been noticed in the latter context in \cite{Closset:2012eq, Xie:2013lya}.}
The duality is between a $U(N)$ gauge theory with $N_f$ fundamentals and $N_a$ antifundamentals, and a $U(N')$ gauge theory where
$$
N' = \max(N_f,N_a)-N \;,
$$
with $N_a$ fundamentals, $N_f$ antifundamentals and $N_fN_a$ extra gauge singlets coupled by a cubic superpotential. Both theories are deformed by a Fayet-Iliopoulos (FI) term parametrized by $\xi$ and the theta angle of the unitary gauge group, $\theta$. These can be grouped into the complexified FI parameter
\be
t = 2\pi \xi + i\theta \;,
\ee
which, under the duality, simply transforms as $t \to -t$. The theory can also have twisted masses for the flavor symmetries and superpotential interactions. Our prime tool of study is the two-sphere partition function $Z_{S^2}$, \ie{}, the Euclidean path integral of the non-twisted theory on $S^2$. The partition function, defined and computed in \cite{Benini:2012ui,Doroud:2012xw}, allows us to check the duality and, more importantly, to determine the precise map of parameters.\footnote{More recently, the hemisphere partition function has been computed \cite{Sugishita:2013jca, Honda:2013uca, Hori:2013ika} which can also be used to test dualities.}
In fact, $Z_{S^2}$ detects some subtle contact terms which, although irrelevant for the SQCD-like theories, become crucial in determining the precise map of parameters under dualities of more complicated theories obtained by putting together such ``building blocks."

Indeed, in the second part of the paper we study \emph{quiver} gauge theories, in which the gauge group is a product of unitary groups
$U(N_1) \times \cdots \times U(N_n)$ and all the chiral fields transform in the bifundamental representation of a pair of factors. We also consider more general quiver theories in which some of the gauge groups are un-gauged, \ie{}, ``demoted" to flavor symmetry groups. The gauge and matter content of the theory can be represented by the very same quiver diagrams that appear in cluster algebras. We can perform a Seiberg-like duality on a single node, say $k$, of the quiver%
---we refer to this duality as \emph{cluster duality}. This duality precisely realizes the cluster algebra mutation $\mu_k$. The flavor symmetry groups correspond to ``frozen nodes" that do not mutate. The appearance of mutations and the cluster algebra structure in this setting should not come as a surprise: it has been known for some time \cite{Cachazo:2001sg, Berenstein:2002fi, Feng:2002kk, Herzog:2003zc, Derksen1} that four-dimensional Seiberg duality transforms the quiver diagram of a quiver gauge theory as a mutation,%
\footnote{This is true up to some subtleties related to the superpotential, which we discuss in the context of 2d theories in section \ref{sec: nondegenerateFTs}. Moreover, the quivers and the rank assignments admissible in four dimensions are limited by gauge anomalies, while there are no such constraints in two dimensions.}
and that the gauge group ranks transform as the tropical limit of the cluster variables.%
\footnote{Similar features have been observed in three dimensions. See footnote \ref{foo: 3d case}.}
The extent, however, to which the full cluster algebra structure appears in two dimensions is quite intriguing and, to our knowledge, previously unnoticed. While it remains true that the quiver and the gauge group ranks transform according to the mutation rules of cluster algebra, more elements turn out to be present in 2d theories.

In 2d $\cN=(2,2)$ gauge theories, the FI parameters are classically marginal and their quantum beta functions are one-loop exact. It turns out that the beta function coefficients $\beta_i$ can be identified with the cluster algebra coefficients $y_i$. More precisely, given that $y_i$ are elements of the tropical semifield $\bP$ with
\be
y_i = u^{\beta_i} \,,
\ee
the transformation of $y_i$ under cluster mutations precisely reproduce the transformation of the beta functions $\beta_i$ under cluster dualities.%
\footnote{Precise definitions and explanation of notations regarding tropical semifields are presented in section \ref{ss:CA}.}
Furthermore, $u$ can be interpreted as a ratio of renormalization scales given a certain ultraviolet (UV) construction of the quiver theories.

More impressively, the FI parameters can be related to the cluster coordinates $x_i$, whose behavior under mutations exhibits the most intricate structure among the data contained in a seed. More precisely, upon defining the \emph{K\"ahler coordinates}
\be
z_i \,\sim\, e^{-t_i}
\ee
up to a subtle sign explained in the main text, $z_i$ can be identified with the ``dual cluster variables"
\be
z_i =\prod\nolimits_j x_j^{b_{ji}} \;,
\ee
whose transformation rules follow from those of $x_i$. In fact, for quiver gauge theories that flow to conformal fixed points in the IR (and thus all beta functions $\beta_i$ vanish), the K\"ahler coordinates $z_i$ can be identified with the ``$\mathcal{X}$-coordinates" of Fock and Goncharov \cite{FockGoncharov}. We will see that instanton corrections arising from vortices play a crucial role in the duality transformation rules; in fact, the FI parameters do not transform as cluster algebra variables in other dimensions, for instance in quiver quantum mechanics.

Finally, cluster algebra mutations can be observed in the twisted chiral sector of the quiver gauge theory. The twisted chiral ring of the quiver theory is generated by the coefficients of the $Q$-polynomials $Q_i(x) = \det (x-\sigma_i)$,%
\footnote{Here $x$ is a formal variable, which is not to be confused with the cluster variables.}
where $\sigma_i$ are the adjoint complex scalar operators in the vector multiplets. We find that the ``dressed'' $Q$-polynomials $\cQ_i(x) \sim x_i \, Q_i(x)$ transform under cluster dualities as cluster variables.%
\footnote{We thank Davide Gaiotto for suggesting this possibility.}

Our simple observation potentially has many fascinating consequences, three of which we elaborate on. First, the existence of a cluster algebra structure in a discrete dynamical system automatically implies certain properties, including total positivity, the Laurent phenomenon and the existence of a natural Poisson structure \cite{FominZ1, gekhtman2003cluster}. We do not fully understand the implications that these properties have on the physics of the gauge theories, but nevertheless, they lie at our disposal.

Second, since two-dimensional $\cN=(2,2)$ quiver gauge theories can be used to engineer many interesting geometries, our results imply that there is a cluster algebra structure in the geometric moduli spaces of the manifolds they engineer. For instance, the A-twisted theories compute the (equivariant) quantum cohomology of K\"ahler manifolds obtained as holomorphic sub-manifolds of K\"ahler quotients \cite{Witten:1988xj, Witten:1991zz, Witten:1993yc}. The gauge theory FI parameters---or equivalently the K\"ahler coordinates $z_i$---are in fact coordinates on their K\"ahler moduli spaces, and cluster mutations can be thought of as coordinate transformations. Our observation then implies the existence of a cluster algebra structure in the quantum cohomology of such manifolds, some of which are compact and Calabi-Yau (CY).

Third, since a quantum field theory has many more observables than a mere quiver, such observables may provide additional data associated to the mathematical systems where cluster algebras appear. For instance, in the two-dimensional gauge theory we can explicitly compute the sphere partition function, which provides a non-trivial K\"ahler metric on the space of K\"ahler coordinates $z_i$ \cite{Jockers:2012dk}. Can partition functions of quiver gauge theories produce useful metrics for other spaces whose coordinates exhibit cluster algebra structures, \eg{}, Teichm\"uller space? We leave these questions to future work.

Lastly, we touch upon how cluster dualities are related to the dualities of $\cN=(2,2)^*$ gauge theories, which are $\cN=(4,4)$ gauge theories whose supersymmetry is softly broken to $\cN=(2,2)$ by twisted masses. A beautiful connection between $\cN=(2,2)^*$ theories and quantum integrable systems---the so-called Gauge/Bethe correspondence---has been discovered by Nekrasov and Shatashvili \cite{Nekrasov:2009uh, Nekrasov:2009ui}. While $\cN=(2,2)^*$ SQCD-like theories do not exhibit the Seiberg-like dualities we mainly study, they have ``Grassmannian dualities" which can be interpreted as ``particle-hole" dualities of the corresponding quantum spin chain. We verify the equality of the $S^2$ partition functions under these dualities, extend the duality to $\cN=(2,2)^*$ quiver gauge theories, and briefly discuss a decoupling limit in which the Grassmannian dualities can be related to cluster dualities. More details on this subject will appear elsewhere.

The paper is organized as follows. In section \ref{sec: dualities}, we study the Seiberg-like dualities of $U(N)$ SQCD-like theories. We compare the chiral and twisted chiral rings of the dual descriptions, as well as their two-sphere partition functions $Z_{S^2}$ disclosing some peculiar contact terms. In the process, we also study the behavior of the vortex partition functions under the duality. In section \ref{s:CA}, exploiting the Seiberg-like dualities as building blocks, we reveal the full cluster algebra structure within the dualities of quiver gauge theories. For completeness, we review cluster algebras in section \ref{ss:CA}, and summarize some subtle aspects related to the superpotential that lead to the concept of ``non-degenerate'' quiver theories in section \ref{sec: nondegenerateFTs}. In section \ref{s:applications}, we comment on the implications of our results on the Gromov-Witten theory of Calabi-Yau manifolds obtained from quivers. Section \ref{sec: 22star} focuses on the dualities of $\cN=(2,2)^*$ quiver theories: we present them, compare the sphere partition functions of dual descriptions and discuss their relation to cluster dualities. We draw our conclusions in section \ref{s:future}, while technical computations are collected in the appendices.

\medskip

\emph{Note: During the completion of this paper, we became aware of a related work \cite{Gomis} which also studies dualities of two-dimensional theories on a two-sphere.}

\section{Dualities of 2d SQCD-like theories}
\label{sec: dualities}

\begin{figure}[t]
\begin{center}
\includegraphics[width=.8\textwidth]{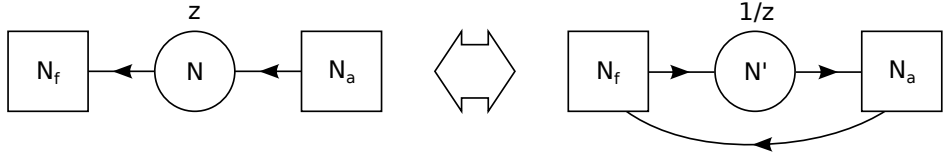}
\end{center}
\caption{Seiberg-like duality of SQCD-like theories in quiver notation. The circles represent unitary gauge groups, and the K\"ahler coordinates associated to the gauge groups are indicated above them. The squares represent unitary flavor groups, and arrows represent chiral multiplets transforming in the fundamental representation of the group at the tail and in the antifundamental of the group at the head. Since all matter fields are bifundamental, the flavor symmetry is actually given by $S\big[ U(N_f) \times U(N_a) \big]$. On the left is theory $\fA$: a $U(N)$ theory with $N_f$ fundamentals and $N_a$ antifundamentals. On the right is theory $\fB$: a $U(N')$ theory, where $N' = \max(N_f,N_a) - N$, with $N_a$ fundamentals, $N_f$ antifundamentals and $N_fN_a$ extra gauge singlets, as well as a superpotential $W = q'M\tilde q'$. The two theories are IR dual.
\label{fig: Seiberg duality}}
\end{figure}

Let us begin by studying certain Seiberg-like dualities of two-dimensional $\cN=(2,2)$ supersymmetric SQCD-like theories, \ie{}, of $U(N)$ gauge theories with $N_f$ chiral multiplets in the fundamental representation and $N_a$ in the antifundamental, as proposed in \cite{Benini:2012ui} on the wave of \cite{Hori:2006dk, Hori:2011pd, Jockers:2012zr}. Details on the exact contents and Lagrangians of such theories can be found in \cite{Witten:1993yc}, although we adhere to the conventions of \cite{Benini:2012ui}. Note that in two dimensions, contrary to the four-dimensional case, $N_f$ can be taken to differ from $N_a$, as this does not induce a gauge anomaly. In quiver notation, the matter content is represented in figure \ref{fig: Seiberg duality} on the left: the circle represents the $U(N)$ gauge group, while the squares represent the unitary flavor groups. The arrows represent bifundamental chiral multiplets (fundamental of the group at the tail, antifundamental of the group at the head). In the case at hand, the flavor group is actually $S\big[ U(N_f) \times U(N_a)\big]$ as the remaining $U(1)$ is gauged. We consider deforming these theories by twisted masses (that generically break the flavor group to $U(1)^{N_f + N_a - 1}$), a superpotential and a complexified Fayet-Iliopoulos parameter, whose imaginary part parametrizes the topological theta-angle term. As we see in section \ref{s:CA}, understanding these dualities is instrumental to understanding those of general quiver theories: the cluster dualities are local operations with respect to a node of the quiver, and most of the relevant structure is already encoded in the dualities of theories with a single gauge group.

We denote the chiral fields of the theory under consideration, which we call ``theory $\fA$," as $q_F$ and $\tilde q_A$, where the indices lie in the domains $F\in [N_f]$ and $A\in [N_a]$ respectively. We use the shorthand notation
\be
[n] = \{ 1, 2, \ldots, n \}
\ee
to condense our expressions. The gauge indices are left implicit unless stated otherwise. The twisted masses for the flavor symmetry are denoted by $s_F$ and $\tis_A$. We recall that, in a general $\cN=(2,2)$ theory, every time there is a flavor symmetry whose conserved current sits in a linear multiplet,%
\footnote{The conserved current of a flavor symmetry could sit in a twisted linear multiplet or in more general multiplets, and the gauge multiplet used to gauge it changes accordingly \cite{Lindstrom:2007vc, Lindstrom:2008hx}. In theories where all matter multiplets are chiral, the flavor currents sit in linear multiplets.}
we can introduce twisted masses. That is, we can couple the flavor symmetry to an external vector multiplet, and give a vacuum expectation value (VEV) to the complex adjoint scalar in that multiplet. Thus, up to flavor rotations, twisted masses take values in the Cartan subalgebra of the flavor symmetry. Denoting the VEV of the adjoint scalar by $s$, the actual mass of a chiral multiplet transforming as a weight $\rho$ under the flavor symmetry is $m = \rho(s)$. Here, $s_F, \tilde s_A$ are in fact the diagonal VEVs of the external vector multiplet scalars, not the actual masses. In the present case, although the flavor symmetry has maximal torus $U(1)^{N_f + N_a - 1}$, it turns out to be convenient to use $N_f + N_a$ parameters with an equivalence relation $\{s_F, \tilde s_A \} \simeq \{s_F + s, \tilde s_A + s\}$ for any $s\in\bC$. Such redundancy can be fixed if desired.

When $\max(N_f,N_a)<N$, the theory has no supersymmetric vacua, irrespective of the choice of the superpotential. This is explained in more detail in section \ref{sec: twisted chiral ring}. As a result of supersymmetry breaking, such theories turn out to have a vanishing sphere partition function \cite{Benini:2012ui, Doroud:2012xw}.

When $\max(N_f,N_a) \geq N$, the theory is IR dual to a $U(N')$ theory with $N_f'$ fundamentals $q'_{F'}$, $N_a'$ antifundamentals $\tilde q'_{A'}$ where
\be
\label{map ranks}
N'=\max(N_f,N_a)-N \;,\qquad\qquad N_f' = N_a \;, \qquad\qquad N_a' = N_f \;,
\ee
and gauge singlets $M_{F'A'}$ transforming in the bifundamental representation of the flavor symmetry group. We use primed variables to refer to quantities of the dual theory. The quiver diagram of this theory is depicted in figure \ref{fig: Seiberg duality} on the right. The theory also has a superpotential
\be
\label{suppot}
W_\text{dual} = \sum_{F',A'} \Tr \big( q'_{F'} M_{F'A'} \tilde q'_{A'} \big) \;,
\ee
where the trace is taken to be over gauge indices. We denote this theory as ``theory $\fB$."

We can further describe the map of parameters under the duality. Since the twisted masses are related to flavor symmetries, their map is trivial:
\be
\label{map masses}
s'_{F'} = \tilde s_{F'} \;,\qquad\qquad \tilde s'_{A'} = s_{A'} \;.
\ee
We define the complexified FI parameter of theory $\fA$
\be
t = 2\pi\xi + i\theta \;,
\ee
where $\xi$ is the real FI parameter and $\theta \simeq \theta+2\pi$ is the topological coupling; we denote the FI parameter of theory $\fB$ by $t'$. It is convenient to work with the \emph{K\"ahler coordinates}
\be
\label{def z SQCD}
z = (-1)^\text{\# outgoing arrows - \# colors} \; e^{-t} = (-1)^{N_f - N} \, e^{-t} \;,
\ee
and $z' = (-1)^{N_a - N'} e^{-t'}$. This is because the extra minus signs, which can be thought of as a shift of the theta angles, make the duality look simpler. In these coordinates, the map is given by
\be
\label{map FI}
z' = z^{-1} \;.
\ee

This duality has a nice geometric origin \cite{Witten:1993xi, Donagi:2007hi, Jia:2014ffa}. To illustrate, let us first consider the case when $N_a = 0$. When the FI parameter $\xi$ is large and positive,%
\footnote{If we take $\xi$ large negative in theory $\fA$, the Higgs branch is empty but there are $\binom{N_f}{N}$ quantum vacua on the Coulomb branch \cite{Witten:1993yc}. The geometric analysis presented here is not valid in this case. Similarly, the geometric analysis at large and positive $\xi$ is valid only as long as $N_a \leq N_f$ for non-zero $N_a$.}
the Higgs branch of theory $\fA$ is given by the Grassmannian $\text{Gr}(N,N_f)$: the space of complex $N$-planes in $\bC^{N_f}$. In the IR, this theory flows to a nonlinear sigma model (NLSM) with that target space. In the dual theory $\fB$, $\xi'$ becomes large in the negative direction and the theory flows to the NLSM of the Grassmannian $\text{Gr}(N_f-N,N_f)$. The two NLSMs are equivalent as
\be
\text{Gr}(N,N_f) = \text{Gr}(N_f-N,\, N_f) \;.
\ee
Hence, when $N_a =0$, the duality can be understood as the canonical isomorphism of Grassmannians. Let us now consider the case when $N_a \leq N_f$. In theory $\fA$, each antifundamental realizes a copy of the ``tautological bundle" $S$,%
\footnote{The fundamentals define $N$ vectors $v_{i=1,\dots,N}$ in $\bC^{N_f}$: upon quotienting by the gauge group $U(N)$, the total space reduces to the space of complex $N$-planes. Then, each antifundamental defines the coordinates $w^i$ of a vector $\vec w = w^i v_i$ lying on the $N$-plane, thereby realizing the tautological bundle $S$. $S$ is also referred to as the ``universal subbundle."}
and the theory flows in the IR to the NLSM of the total space $S^{\oplus N_a} \to \text{Gr}(N,N_f)$. Under the equality of Grassmannians, we also have the equality of bundles
\be
S \to \text{Gr}(N,N_f) \qquad=\qquad Q^* \to \text{Gr}(N_f - N, N_f) \;,
\ee
where $Q^*$ is the dual of the universal quotient bundle $Q$ defined by the short exact sequence
\be
0 \to S \to \cO^{N_f} \to Q \to 0 \;.
\ee
Here, $\cO$ is the trivial bundle. In the gauge theory, $(\cO^{N_f})^{N_a}$ is realized by the gauge singlets $M_{F'A'}$, while $N_a$ copies of the short exact sequence are encoded in the F-term equations imposed by the superpotential $W_\text{dual}$.

For the rest of this section, we present various nontrivial checks of the proposed dualities. Classically, both theories have a vector-like and an axial R-symmetry, $U(1)_V \times U(1)_A$. If $N_f \neq N_a$, the axial R-symmetry is anomalous and broken to $\bZ_{2|N_f - N_a|}$. In sections \ref{sec: chiral ring} and \ref{sec: twisted chiral ring}, we compare the chiral and twisted chiral rings  \cite{Lerche:1989uy} of the dual theories. We in fact allow for a generic superpotential in theory $\fA$ that preserves the $U(1)_V$ symmetry, and we show in section \ref{sec: chiral ring} what the superpotential of the dual theory $\fB$ becomes in such circumstances. The complexified FI term itself can be understood as a twisted superpotential $\widetilde \cW_\text{FI}$, linear in the twisted chiral multiplet $\Sigma$ constructed out of the vector multiplet.%
\footnote{One could in principle consider a theory with a generic twisted superpotential that breaks $U(1)_A$ classically, but we do not do so in this paper.}

In section \ref{sec: partition function}, we study the Euclidean path integrals of the two theories on the two-sphere \cite{Benini:2012ui, Doroud:2012xw} and compare them. This enables us to determine the exact map of parameters between the two theories. The analysis reveals the presence of some subtle contact terms in theory $\fB$, which include a twisted superpotential function of the twisted mass parameters. These contact terms do not affect correlators, and can be removed by local counterterms. Therefore, they can be ignored in the SQCD theories we investigate in this section. Once, however, the flavor nodes are promoted to gauge nodes---for example, by embedding the SQCD-like theories into quivers---the contact terms become dynamical twisted superpotential terms and have physical consequences. In fact, they are responsible for the cluster duality transformations analyzed in section \ref{s:CA}. Finally, the elliptic genera of theories $\fA$ and $\fB$ have been computed and shown to match in \cite{Gadde:2013dda, Benini:2013nda, Benini:2013xpa}.

\subsection{The chiral ring}
\label{sec: chiral ring}

We proceed with checking that the dual theories $\fA$ and $\fB$ have the same chiral ring. Let us first consider the case that theory $\fA$ does not have a superpotential. In this case, all chiral gauge-invariant operators can be expressed as (linear combinations of) products of the mesonic operators $\tilde q^A q_F$. One could have tried to construct independent baryonic operators, but since the gauge group is $U(N)$ rather than $SU(N)$, the basic gauge invariants are
\be
\big( \epsilon^{a_1 \ldots a_N} \tilde q^{A_1}_{a_1} \cdots \tilde q^{A_N}_{a_N} \big)\big( \epsilon_{b_1\ldots b_N} q^{b_1}_{F_1} \cdots q^{b_N}_{F_N} \big) \;\simeq\; \tilde q^{[A_1} q_{[F_1} \; \cdots\; \tilde q^{A_N]} q_{F_N]}
\ee
up to a numerical coefficient, where we have made the gauge indices $a_i, b_i$ explicit. These dibaryons are proportional to products of mesons and are not independent, hence the mesonic operators generate the chiral ring.

In theory $\fB$, the generator of the chiral ring are the mesonic operators $\tilde q'_{A'} q'_{F'}$, as well as the gauge singlets $M_{F'A' }$. The superpotential \eq{suppot}, however, imposes that the operators $\tilde q'_{A'} q'_{F'}$ must be trivial in the chiral ring. Therefore, the gauge singlets $M_{F'A' }$ form a complete set of generators for the chiral ring of theory $\fB$. The map
\be
\label{chiral ring map}
\tilde q_A q_F = M_{AF}
\ee
identifies the generators in the dual theories.%
\footnote{Notice that in four-dimensional $SU(N)$ SQCD the chiral ring can contain quantum F-term relations which cannot be inferred from the constituent fields and the superpotential, but are instead described by Seiberg duality \cite{Seiberg:1994pq}. In our case, the chiral ring is essentially classical, since the theory is deformed by the FI parameter.}

Now consider adding a generic superpotential $W = W_0(\tilde q_A q_F)$ to theory $\fA$. This leads to the following gauge-invariant F-term relations in the chiral ring:
\be
\label{general F-terms}
0 = \sum_A \tilde q_A q_G \, \partial_{AF} W_0(\tilde q_A q_F) \;,\qquad\qquad 0 = \sum_F \tilde q_B q_F \, \partial_{AF} W_0 (\tilde q_A q_F) \;.
\ee
Theory $\fB$ then has the superpotential
\be
\label{suppot complete}
W' = W_0(M_{AF}) + W_\text{dual} = W_0 (M_{AF}) + \sum_{AF} \Tr( \tilde q'_A M_{AF} \tilde q'_F) \;.
\ee
The F-term relations tell us that $\tilde q'_F q'_A = - \partial_{AF} W_0(M_{AF})$, \ie{},  the mesonic operator of the dual theory is not independent of $M_{AF}$. Contracting this relation with $M_{AG}$ or $M_{BF}$ and using the F-term relations $0 = \sum_A q'_A M_{AG} = \sum_F M_{BF} \tilde q'_F$, we exactly reproduce the relations (\ref{general F-terms}) under the map (\ref{chiral ring map}).

\subsection{The twisted chiral ring}
\label{sec: twisted chiral ring}

Let us now turn to the twisted chiral rings of the two theories. The twisted chiral ring of theory $\fA$ is generated by the gauge-invariant operators
\be
\Tr \s^k \qquad\qquad k= 1,\dots, N
\ee
where the complex scalar $\s$ is the lowest component of the adjoint twisted chiral multiplet constructed out of the vector multiplet. An alternative basis of $N$ generators can be obtained by diagonalizing the operator $\s = \text{diag} (\s_1, \ldots, \s_N)$, and then forming symmetric polynomials in the $\s_I$'s. A gauge invariant way to do so is to construct the ``$Q$-polynomial''
\be
Q(x) = \det (x-\s) = x^N - x^{N-1} \Tr\sigma + \ldots + \det(-\sigma) \;,
\ee
where $x$ is a formal variable. This degree $N$ monic polynomial is the generating function for the elementary symmetric polynomials in the $\s_I$'s that generate the ring. The ring is then defined by the relations between these operators. A convenient way to obtain the relations is to go on the Coulomb branch \cite{Witten:1993xi} by giving generic vacuum expectation values to the $\sigma_I$'s, so that the gauge group is broken to $U(1)^N$. All chiral multiplets and off-diagonal vector multiplets are massive and can be integrated out, leaving an effective theory for the diagonal vector multiplets. The computation is valid as long as all $\sigma_I$'s are well separated and far from the origin, which can be achieved by turning on generic twisted masses $s_F$, $\tilde s_A$. It turns out that there are no vacua where some of the $\sigma_I$'s coincide. The generic twisted masses, furthermore, remove the Higgs branch of the theory so that all supersymmetric vacua are on the Coulomb branch. The effective twisted superpotential includes a linear term, which is the bare FI term, and one-loop corrections coming from integrating out the massive fields:%
\footnote{Our conventions for the effective twisted superpotential, which agree with the sphere partition function computations, are such that integrating out fields $\Phi_j$ of mass $m_j$ in representation $R_j$ (including off-diagonal vector multiplets) contributes
$$
\widetilde \cW_\text{eff} = -\hat t \sum_I \sigma_I - \sum_j \sum_{\rho\in R_j} \big( \rho(\sigma) + m_j \big) \Big[ \log \Big( -i \big( \rho(\sigma) + m_j \big) \Big) - 1 \Big] \;,
$$
to the twisted superpotential. Here, $\rho$ are the weights of the representation $R_j$. We stress that $m_j$ is the actual mass, related to the twisted mass parameter $s$ by the flavor charge.}
\begin{multline}
\widetilde \cW_\text{eff} = - \sum_{I=1}^N \bigg\{ \Big( \hat t + i\pi(N+1-2I) \Big)\sigma_I + \sum_{F=1}^{N_f} (\sigma_I - s_F) \big[ \log(\sigma_I - s_F) - \tfrac{i\pi}2 - 1 \big] \\
- \sum_{A=1}^{N_a} (\sigma_I - \tilde s_A) \big[ \log(\sigma_I - \tilde s_A) + \tfrac{i\pi}2 - 1 \big] \bigg\} \;.
\end{multline}
Here, $t = 2\pi \xi +  i \theta$ is the complexified FI parameter, while $\hat t = t + 2 \pi i n$, where $n\in\bZ$ has to be chosen to minimize the potential energy $\big| \partial \widetilde W/\partial\sigma\big|$ \cite{Witten:1993yc}.  The supersymmetric vacua are solutions to the equations
\be
\frac{\partial \widetilde \cW_\text{eff}}{\partial \sigma_I} = 0 \;,
\ee
supplemented by the condition that the $\sigma_I$'s are all distinct for generic values of the parameters, and quotiented by the Weyl group. The equations can be exponentiated and written in the form
\be
\label{sigeq}
\prod_{F=1}^{N_f} (\s_I-s_F) + i^{N_a-N_f} \, z \prod_{A=1}^{N_a} (\s_I-\tis_A) =0 \qquad\qquad \forall I = 1,\dots,N \;,
\ee
where the K\"ahler coordinate $z=(-1)^{N_f-N} e^{-t}$ was defined in (\ref{def z SQCD}). These equations, along with the condition that the $\s_I$'s are distinct, determine the $\binom{\max(N_f,N_a)}{N}$ Coulomb vacua of the theory. These vacua turn out to determine the twisted chiral ring relations, as we see shortly. Note that when $\max(N_f,N_a) < N$, there cannot be $N$ distinct roots of \eq{sigeq}, hence supersymmetry is broken.

An efficient way to rewrite the vacuum equations is to express them as
\be
\label{TCR}
\prod_{F=1}^{N_f} (x-s_F) + i^{N_a-N_f} z \prod_{A=1}^{N_a} (x-\tis_A) = C(z) \, Q(x) \, T(x) \;.
\ee
This is a polynomial equation, where $x$ is treated as a formal variable, that should be solved for $Q(x)$ and for a degree $N'=\max(N_f,N_a)-N$ monic polynomial $T(x)$. Here,
\be
\label{function C}
C(z) = \begin{cases}
1 & \text{when $N_f > N_a$} \\
1+z & \text{when $N_f=N_a$} \\
i^{N_a-N_f}z & \text{when $N_f < N_a$} \,,
\end{cases}
\ee
which insures that both $Q(x)$ and $T(x)$ are monic. Equation \eq{TCR} implies that $Q(x)$ must pick $N$ distinct roots chosen from the $\max(N_f,N_a)$ roots of the left-hand side, while the remaining $N'$ roots are taken by $T(x)$; in particular, the eigenvalues of $\sigma$ must be distinct solutions to \eq{sigeq}. When expanding (\ref{TCR}) in powers of $x$, one gets $\max(N_f,N_a)$ equations. Starting from the highest order in $x$, the first $N'$ equations can be linearly rearranged to express the coefficients of $T(x)$ in terms of those of $Q(x)$, \ie{}, they fix $T(x)$ as a function of $Q(x)$. Substituting these relations into the remaining $N$ equations, one obtains $N$ relations among the elementary symmetric polynomials of $\s_I$'s. These are precisely the twisted chiral ring relations.

There is much more to the polynomial $T(x)$ than meets the eye. Upon expressing the twisted chiral ring relations of theory $\fB$ using the dual $Q$-polynomial $Q'(x) = \det(x-\s')$, we find
\be
\label{TCRprime}
\prod_{A=1}^{N_a}(x-\tilde s_A) + i^{N_f - N_a} z^{-1} \prod_{F=1}^{N_f} (x-s_F) = C'(z^{-1}) \, Q'(x) \, T'(x) \;,
\ee
where the function $C'(z^{-1}) = i^{N_f-N_a}z^{-1} C(z)$, as in (\ref{function C}), is merely chosen to match the leading coefficient in $x$. Now $Q'(x)$ is a degree $N'$ monic polynomial, and $T'(x)$ is a degree $N$ monic polynomial. The equations \eq{TCR} and \eq{TCRprime} are in fact equivalent, and can be merged into
\be
\prod_{F=1}^{N_f} (x-s_F) + i^{N_a-N_f} \, z \prod_{A=1}^{N_a} (x-\tis_A) = C(z) \, Q(x) \, Q'(x) \;. \label{operator map}
\ee
In other words, $T(x)$ in (\ref{TCR}) can be identified with the $Q$-polynomial $Q'(x)$ of the dual theory! Then the $N'$ equations that express $T(x)$ in terms of $Q(x)$ define the operator map between the twisted chiral rings of the two theories. Given a vacuum characterized by a solution $Q(x)$, $T(x)\equiv Q'(x)$ characterizes the same vacuum in the dual description. This has been explained for the case when $N_a=0$ in \cite{Witten:1993xi}.

\subsection{The $S^2$ partition function}
\label{sec: partition function}

Further information on the duality can be obtained by studying the two-sphere partition function $Z_{S^2}$ defined in \cite{Benini:2012ui, Doroud:2012xw}. Every Euclidean two-dimensional theory with $\cN=(2,2)$ supersymmetry and a conserved vector-like R-symmetry can be placed supersymmetrically on $S^2$ without twisting. This can be done by treating the stress tensor and the supercurrent as part of the $\cR$-multiplet \cite{Gates:1983nr, Komargodski:2010rb, Dumitrescu:2011iu}, and turning on an external scalar that couples to a scalar operator in that supermultiplet. This corresponds to adding certain scalar curvature couplings to the Lagrangian, which are controlled by the R-charges of the fields. If the theory has a Lagrangian description, one can define the path integral of the theory on $S^2$, which becomes a function of the parameters in the Lagrangian.

The sphere partition function of theories with vector and chiral multiplets was computed in \cite{Benini:2012ui, Doroud:2012xw} using localization techniques. In particular, it was shown that $Z_{S^2}$ can be written as a finite dimensional integral. It was further shown that for SQCD-like theories---such as our theories $\fA$ and $\fB$---the sphere partition functions can be rewritten as the sum of products of ``vortex partition functions.'' The latter formulation is convenient for exhibiting equalities of partition functions under the Seiberg-like duality being studied. In fact in \cite{Benini:2012ui}, the equality of the partition functions for the theories $\fA$ and $\fB$ was proven when $|N_f - N_a|> 1$. We reinterpret that equality, and examine the relations between the vortex partition functions when $|N_f - N_a|\leq 1$. We will soon see that this more refined study leads to interesting physical consequences.

To preserve supersymmetry on $S^2$, the imaginary parts of the twisted masses must be accompanied by external magnetic flavor fluxes $\fm_F$, $\tilde \fm_A$ proportional to them; it follows that those imaginary parts are quantized. It proves convenient to define the complex parameters
\bea
\label{def Sigma+-}
\S_{F+} &= is_F = i \re s_F + \frac{\m_F}2 \;,
\qquad\qquad&
\tS_{A+} &= i\tis_A = i\re \tis_A + \frac{\tm_A}2 \;, \\
\S_{F-} &= i\bar s_F =i \re s_F - \frac{\m_F}2 \;,
\qquad\qquad&
\tS_{A-} &= i\bar \tis_A = i\re \tis_A - \frac{\tm_A}2 \;,
\eea
using $\fm_F, \tilde \fm_A \in \bZ$ to denote the GNO quantized \cite{Goddard:1976qe} imaginary parts of the twisted masses.%
\footnote{Here and in the following, masses are expressed in units of the inverse sphere radius.}
The partition function is analytic in $\Sigma_{F\pm}$, $\tS_{A\pm}$, if we treat these parameters as independent. Moreover, we recall that the flavor symmetry is $S\big[ U(N_f) \times U(N_a)\big]$, since the diagonal $U(1)$ is gauged, and a shift of the masses along the diagonal $U(1)$ can be reabsorbed by a shift of $\Tr\sigma$ and the dynamical magnetic flux.

The partition function also depends on the complexified FI parameter $t$, that we express through $z$, and more generally on the full twisted superpotential $\widetilde \cW$. Finally, the sphere partition function depends on the choice of R-charges for the chiral multiplets, which we denote $r_F$ and $\tir_A$.%
\footnote{In general, to obtain the partition function of an IR fixed point we should use the superconformal R-charges at that point. If not unambiguously fixed by the superpotential, they can be found using $c$-extremization \cite{Benini:2012cz, Benini:2013cda}.}
We first present our calculation for the case of vanishing R-charges, and explain how to incorporate them afterwards.

The $S^2$ partition function \cite{Benini:2012ui, Doroud:2012xw} takes the form of a sum over diagonal quantized magnetic fluxes $\fm_I$ on the sphere, and an integral over the Cartan subalgebra $\sigma_I$ of the gauge group.%
\footnote{In the integral expression for the sphere partition function we use the integration variable $\sigma_I$, which should actually be identified with the real part of the complex field $\sigma_I$ of section \ref{sec: twisted chiral ring}. Hopefully, the context in which the symbol $\s_I$ is used makes clear what it indicates.}
It is convenient, in accordance with the notation in (\ref{def Sigma+-}), to define the combinations
\be
\label{def sigma+-}
\s_{I\pm} = i \s_I \pm \frac{\m_I}2 \;.
\ee
We also introduce the differences
\bea
\S^{I}_{J\pm} &= \s_{I\pm}-\s_{J\pm} \;, \qquad & \S^{I}_{A\pm} &= \s_{I\pm}-\tS_{A\pm} \;, \qquad & \S^{I}_{F\pm} &= \s_{I\pm}-\S_{F\pm} \;, \\
\S^{F_1}_{F_2\pm} &= \S_{F_1\pm}-\S_{F_2\pm} \;, \qquad & \S^{F}_{A\pm} &= \S_{F\pm}-\tS_{A\pm} \;, \qquad & \S^{A_1}_{A_2\pm} &= \tS_{A_1\pm}-\tS_{A_2\pm}  \;,
\eea
to condense our notation. Notice that we do not distinguish the various objects using tildes, but rather by their indices: we use $I,J,K$ for the gauge indices, $F$ to label fundamentals and $A$ to label antifundamentals. The $S^2$ partition function of a $U(N)$ gauge theory with $N_f$ fundamentals and $N_a$ antifundamentals can then be conveniently written as
\begin{multline}
\label{UNmother}
Z_{U(N)}^{N_f,N_a}\big( \S_{F\pm},\tS_{A\pm};z \big) = \frac1{N!} \sum_{\m_I \in \bZ^N} \int \prod_{I=1}^N \bigg[
\frac{d\s_I}{2\pi} \, \big( e^{i\pi(N_f-1)} z \big)^{\s_{I+}} \, \big( e^{-i\pi(N_f-1)} \bz \big)^{\s_{I-}} \bigg] \\
\times \prod_{I<J}^N \big( -\S^I_{J+} \S^I_{J-} \big) \;\cdot\; \prod_{I=1}^N \prod_{F=1}^{N_f} \frac{\Ga(-\S^I_{F+})}{\Ga (1+\S^I_{F-})} \prod_{A=1}^{N_a} \frac{\Ga(\S^I_{A+})}{\Ga (1-\S^I_{A-})} \;,
\end{multline}
where the holomorphic identification between the K\"ahler coordinate $z$ and the FI parameter $t=2\pi\xi + i\theta$ is on the sheet
\be
z = e^{i\pi(N_f-N)} \, e^{-t} \;.
\ee
The integrand in (\ref{UNmother}) contains an extra factor $(-1)^{(N-1)\sum\fm_I}$ with respect to \cite{Benini:2012ui}, which has been motivated in \cite{Hori:2013ika, Hori:2013ewa}.%
\footnote{In any case, the presence or absence of this extra sign factor does not modify the findings in this paper.}
The factors on the second line come from one-loop determinants of the fields around the Coulomb branch saddle point configurations: the first factor comes from the vector multiplet, while the others come from the chiral multiplets.

Charge conjugation acts on the parameters of the theory as:
\be
\label{cc}
N_f \,\leftrightarrow\, N_a \;, \qquad\quad
\S_F \,\leftrightarrow\, -\tS_A \;, \qquad\quad
t \,\leftrightarrow\, -t \;, \qquad\quad
z \,\leftrightarrow\, e^{i\pi(N_f-N_a)} z^{-1} \;.
\ee
It is easy to check that $Z_{S^2}$ is invariant under this action. For most of the rest of this section we assume that $N_f \geq N_a$ for the $U(N)$ theory $\fA$---if this is not the case, one can simply apply charge conjugation to both sides of the duality. When $N_f > N_a$, or $N_f = N_a$ and $\xi > 0$, the integral in \eq{UNmother} can be computed by closing the contours in the lower half-planes and picking up residues. Taking into account the sum over $\fm_I$, the poles are situated at
\be
\s_{I+} =\S_{F_I +} +n_{I+} \;, \qquad\qquad \s_{I-}=\S_{F_I -} +n_{I -}
\ee
and are parametrized by $F_I \in [N_f]$ and non-negative integers $n_{I\pm} \in \bZ_{\geq0}$. As explained in \cite{Benini:2012ui, Doroud:2012xw}, this integral can be written in the form
\be
\label{zs2aszvortsum}
Z_{U(N)}^{N_f,N_a} \big(\S_{F\pm}, \tS_{A\pm};z \big) = \sum_{\vF \in C(N,N_f)} Z_0^\vF \; Z^\vF_+ \; Z^\vF_- \;,
\ee
where $C(N,N_f)$ are ordered $N$-tuples of distinct integers in the range $[N_f]$, \ie{},
\be
1 \leq F_1 < F_2 < \ldots < F_N \leq N_f \;.
\ee
This expression has a natural interpretation as coming from Higgs branch localization of the path integral on the sphere. The contributing BPS configurations are characterized by $N$ out of the $N_f$ fundamental chiral multiplets being non-vanishing, and $\vec F$ labels such sectors. In each of them, the scalars $\sigma_I$ in the twisted chiral multiplets are fixed to equal and cancel the twisted masses $s_{F_I}$, so that the $N$ fundamentals can be non-vanishing. In each sector, there is a ``Higgs'' configuration where the fundamentals are constant, as well as other configurations with vortices at one pole and antivortices at the other pole of the sphere.%
\footnote{See \cite{Benini:2013yva} for a similar analysis in three-dimensional theories.}
Then, $Z_0^{\vec F}$ is the classical and one-loop contribution to the partition function, while $Z_+^\vF$ and $Z_-^\vF$ are the vortex and antivortex contributions, respectively.

We denote the complement of $\vF$ with respect to $[N_f]$ as $\vF^c$, such that
\be
\{ F^c_\Id \} = [N_f] \setminus \{ F_I \} \;.
\ee
We use primed letters $\Id, \Jd, \ldots \in [N_f-N]$ to label the gauge indices of the $U(N')=U(N_f-N)$ dual theory. Clearly, $\vec F^c \in C(N',N_f)$. Then $Z^\vF_0$ is given by
\be
\label{ZF0}
Z^\vF_0 = \prod_{I=1}^N \big( e^{i\pi(N_f-N)} z \big)^{\S_{F_I +}} \big( e^{-i\pi(N_f-N)} \bz \big)^{\S_{F_I -}} \; \prod_{\Id=1}^{N'} \frac{\Ga\big( -\S^{F_I}_{F^c_\Id +} \big)}{\Ga\big( 1+\S^{F_I}_{F^c_\Id -}\big)} \; \prod_{A=1}^{N_a} \frac{\Ga\big(\S^{F_I}_{A+}\big)}{ \Ga\big(1-\S^{F_I}_{A -}\big)} \;,
\ee
while the vortex partition functions can be written as
\be
Z^\vF_+ = Z_\text{v}^\vF \big( \S_{F+},\tS_{A+};(-1)^{N_f-N}z \big) \;, \qquad\qquad Z^\vF_- = Z_\text{v}^\vF \big( \S_{F-}, \tS_{A-}; (-1)^{N_a-N}\bz \big) \;,
\ee
in terms of the function
\be
\label{vpfexp}
Z_\text{v}^\vF \big( \S_F, \tS_A; q \big) = \sum_{n \geq 0} q^n \sum_{\sabs{(n_I)} = n} \; \prod_{I=1}^N \frac{ \prod_{A=1}^{N_a} \big( \S^{F_I}_A \big)_{n_I} }{ \prod_{J=1}^N \big( {-\S^{F_I}_{F_J} -n_I} \big)_{n_J} \, \prod_{\Jd=1}^{N'} \big( {-\S^{F_I}_{F^c_\Jd} -n_I} \big)_{n_I} } \;.
\ee
Here, $(n_I)$ are $N$-tuples of non-negative integers, we have defined the norm $\big| (n_I) \big| = \sum_I n_I$, while $(a)_n = \Gamma(a+n)/\Gamma(a)$ is the Pochhammer symbol.

We claim that the vortex partition function $Z_\text{v}^\vF \big( \S_{F}, \tS_{A}; q\big)$ satisfies the following relation:
\begin{multline}
\label{zvortrel}
Z_\text{v}^\vF \big( \S_F, \tS_A;q \big) = Z_\text{v}^\vFc \Big( \tfrac12 -\S_{F},\,  -\tfrac12 -\tS_{A};\,  (-1)^{N_a}q \Big) \\
\times
\begin{cases}
1 &\text{if $N_f \geq N_a +2$} \\[.2em]
\exp\Big((-1)^{N_f-N+1}q\Big) &\text{if $N_f = N_a +1$} \\[.4em]
\Big( 1+(-1)^{N_f-N}q \Big)^{\sum_A \tS_A - \sum_F \S_F + N_f-N} &\text{if $N_f = N_a$} \;.
\end{cases}
\end{multline}
Since $\vFc \in C(N',N_f)$, $Z_\text{v}^{\vec F^c}$ is the vortex partition function of a $U(N')$ theory with $N_f$ fundamentals and $N_a$ antifundamentals. The identity for the case $N_f \geq N_a +2$ was proven in \cite{Benini:2012ui}, using an integral representation \cite{Dimofte:2010tz} for the coefficients of the vortex partition function. For each $n\geq 0$, the coefficient of $q^n$ in \eq{vpfexp} can be expressed as a contour integral:
\begin{multline}
\sum_{\sabs{(n_I)}=n} \, \prod_I \frac{\prod_A \big( \S^{F_I}_A \big)_{n_I} }{ \prod_J \big( {-\S^{F_I}_{F_J} -n_I} \big)_{n_J} \, \prod_\Jd \big( {-\S^{F_I}_{F^c_\Jd} -n_I} \big)_{n_I} } \\
= \frac{(-1)^n}{n!} \int_{\cC} \prod_{\a=1}^n \frac{d \varphi_\a}{2\pi i} \, \frac{\prod_A (\varphi_\a -\tS_A) }{ \prod_I (\varphi_\a - \S_{F_I}) \, \prod_\Id (\Sigma_{F^c_{I'}} - \varphi_\a - 1)} \;\cdot\; \prod_{\a < \b}^n \frac{(\varphi_\a -\varphi_\b)^2 }{(\varphi_\a -\varphi_\b)^2 -1} \;.
\end{multline}
The contour $\cC$ is defined to encircle the codimension-$n$ poles such that for a partition $(n_I)$ of $n$ into $N$ non-negative parts,
\be
\{  \varphi_\a \}  = \bigcup_{I=1}^N \, \Big\{  \S_{F_I},\; \S_{F_I}+1,\ldots
,\; \S_{F_I}+ n_I-1 \Big\} \;.
\ee
A practical way of describing this contour is the following. Assume that all $\S_F$ satisfy
\be
0 < \re \S_F < 1 \;.
\ee
The contour can then be taken to be the product $\cC = \prod_{\a=1}^n \cC_\a$, where $\cC_\a$ is a contour in the $\varphi_\alpha$-plane winding counterclockwise along the boundary of the half-disk of infinite radius that lies inside the half-plane $\re \varphi_\a \geq 0$, and with diameter along the imaginary axis. When the integrand has no poles at infinity, each $\cC_\a$ can be ``flipped over" to become the boundary of the infinite half-disk lying inside $\re \varphi_\a \leq 0$. Then the integral becomes precisely the coefficient of the vortex partition function $Z_\text{v}^\vFc \big( \frac12 -\S_F,\, -\frac12 - \tilde\S_A;\, (-1)^{N_a}q \big)$. For $N_f \geq N_a +2$ indeed there are no poles at infinity, therefore
\bea
\label{NfNa2}
& \hspace{-1em} \sum_{\sabs{(n_I)}=n} \; \prod_I \frac{\prod_A \big(\S^{F_I}_A \big)_{n_I}}{\prod_J \big( {-\S^{F_I}_{F_J} -n_I} \big)_{n_J} \, \prod_\Jd \big( {-\S^{F_I}_{F^c_\Jd} - n_I} \big)_{n_I} } \qquad\qquad\qquad \text{for $N_f \geq N_a +2$} \\
&\qquad\quad =(-1)^{nN_a} \sum_{\sabs{(n_\Id)}=n} \; \prod_\Id \frac{\prod_A \big(1 - \S^{F^c_\Id}_A \big)_{n_\Id} }{ \prod_\Jd \big( \S^{F^c_\Id}_{F^c_\Jd} - n_\Id \big)_{n_\Jd} \, \prod_J \big( \S^{F^c_\Id}_{F_J} - n_\Id \big)_{n_\Id} } \;.
\eea
Recall that we use the indices $I, J \in [N]$ as gauge indices of theory $\fA$, while the primed indices $\Id, \Jd \in [N']$ are gauge indices of theory $\fB$. In particular, $(n_\Id)$ on the right-hand side are $N'$-tuples of non-negative integers.

When $|N_f -N_a| \leq 1$, however, there are ``poles at infinity," and the relation between the coefficients of the two vortex partition functions becomes more complicated. One way to determine it is to properly account for the behavior of the integrand at infinity. Another method is to start with the equalities for $N_f = N_a + 2$, and ``decouple'' one or two fundamental fields by taking their twisted masses to infinity. We take the latter approach to prove the equalities \eq{zvortrel} in appendices \ref{ap:NfNa10} and \ref{ap:forNfNa0}.

The classical and one-loop piece (\ref{ZF0}) of the partition function satisfies the identity
\begin{multline}
\label{z0rel}
Z^\vF_0 \big( \S_F,\tS_A;z \big)= |z|^{-N'} \, \prod_F \big( e^{i\pi N'}z \big)^{\S_{F+}} \big( e^{-i\pi N'}\bz \big)^{\S_{F-}} \,\cdot\, \prod_A (-1)^{N'( \tS_{A+} - \tS_{A-})} \\
\times \prod_{F,A} \frac{\Ga\big(\S^F_{A+}\big)}{\Ga\big(1-\S^F_{A-}\big)} \;\cdot\; Z^\vFc_0 \Big( \tfrac12 -\S_F,\; - \tfrac12 - \tS_A;\; e^{i\pi (N_f-N_a)}z \Big) \;.
\end{multline}
Here, $Z_0^{\vec F^c}$ represents the classical and one-loop piece for a $U(N')$ theory with $N_f$ fundamentals and $N_a$ antifundamentals, while the first factor on the second line is the one-loop determinant of the extra gauge singlets. The factors relating the two $Z_0$ functions are independent of the vortex sector $\vF$.

The relations (\ref{zvortrel}) and (\ref{z0rel}) imply a relation between the $S^2$ partition functions of a $U(N)$ theory with $N_f,N_a$ flavors and a $U(N')$ theory with $N_f,N_a$ flavors and $N_fN_a$ extra singlets. It is convenient to describe the latter theory in terms of charge conjugate fields, applying the map (\ref{cc}). We then find the following relation:
\begin{multline}
\label{UNUNprime}
Z_{U(N)}^{N_f,N_a} \big( \Sigma_{F\pm}, \tS_{A\pm}; z \big) = G \; |z|^{-N'} \, \prod_F \big( e^{i\pi N'}z \big)^{\Sigma_{F+}} \big( e^{-i\pi N'} \bar z\big)^{\Sigma_{F-}} \,\cdot\, \prod_A (-1)^{N'(\tS_{A+} - \tS_{A-})} \\
\times \prod_{F,A} \frac{\Gamma\big( \Sigma^F_{A+} \big)}{\Gamma\big( 1-\Sigma^F_{A-}\big)} \;\cdot\; Z_{U(N')}^{N_a,N_f} \Big( \tS_{A\pm} + \tfrac12,\, \Sigma_{F\pm} - \tfrac12 ;\, z^{-1} \Big) \;,
\end{multline}
where $G$ is the following function:
\be
\label{G function}
G = \begin{cases} 1 & \text{if $N_f \geq N_a + 2$} \\
e^{-2i\im z} & \text{if $N_f = N_a + 1$} \\
(1+z)^{\sum_A \tS_{A+} - \sum_F \Sigma_{F+} + N_f - N} (1+\bar z)^{\sum_A \tS_{A-} - \sum_F \Sigma_{F-} + N_f - N} &\text{if $N_f = N_a$}\;. \end{cases}
\ee
The second line of (\ref{UNUNprime}) is the partition function of a $U(N')$ theory with $N_a,N_f$ flavors as well as $N_fN_a$ extra singlets, which is the content of theory $\fB$. To understand the physical implications of the first line, it is convenient to split it into its absolute value and its phase. We can thus write  the relation in the compact form:
\be
\label{UNUNprime compact}
Z_{U(N)}^{N_f,N_a} \big( \Sigma_{F\pm}, \tS_{A\pm}; z \big) = f_\text{imp} \; f_\text{ctc} \; \prod_{F,A} \frac{\Gamma\big( \Sigma^F_{A+} \big)}{\Gamma\big( 1-\Sigma^F_{A-}\big)} \;\cdot\; Z_{U(N')}^{N_a,N_f} \Big( \tS_{A\pm} + \tfrac12,\, \Sigma_{F\pm} - \tfrac12 ;\, z^{-1} \Big) \;.
\ee
This equation, in fact, holds for the case $N_a > N_f$ as well, as can be checked upon charge conjugation. For general $N_f$ and $N_a$, the dual rank is $N' = \max(N_f,N_a)-N$ as we have already stated. We thereby reproduce the maps \eq{map ranks} of the duality between $\fA$ and $\fB$.

The function $f_\text{imp}$ is real positive:
\be
f_\text{imp} = |z|^{-N'} \times
\begin{cases}
|1+z|^{2(N_f-N)} & \text{if $N_f = N_a$} \\
1 &\text{otherwise} \;.
\end{cases}
\ee
This factor comes from a local counterterm---called the ``improvement Lagrangian'' in \cite{Closset:2014pda}---of theory $\fB$. This counterterm vanishes in the limit where the radius of the sphere is taken to infinity, and hence is not present on flat space. In general, the improvement Lagrangian is constructed out of a linear multiplet $\cJ$, and implements an improvement transformation of the $\cR$-multiplet of the theory such that the R-symmetry current is mixed with the conserved current in $\cJ$. A way of constructing $\cJ$ is to take $\cJ = \Omega + \overline\Omega$ for a twisted chiral multiplet $\Om$. In our case, $\Omega$ is a holomorphic function of $z$ (or $t$), which is a background multiplet. The resulting counterterm is then given precisely by $\log f_\text{imp}$---it is given by a function of the parameters of the theory with no dynamical fields involved. It is also independent of the twisted mass parameters of the theory, that can be promoted to dynamical fields when SQCD-like theories are embedded in quiver theories. For this reason, we can safely ignore $f_\text{imp}$ for the purposes of this paper.

When $N_f \neq N_a$, $f_\text{imp}$ can be absorbed by shifting the $\cR$-multiplet of the theory by a linear multiplet constructed from the gauge current. This means that the factor $f_\text{imp}$ can be absorbed by a shift of the R-charges of the fields of the theory by a multiple of the gauge charge. Meanwhile, when $N_f = N_a$, the irrelevance of the factor $f_\text{imp}$ has a geometric interpretation. It has been noticed in \cite{Jockers:2012dk} (see also \cite{Gomis:2012wy, Gerchkovitz:2014gta}) that the $S^2$ partition function produces the K\"ahler potential on the K\"ahler moduli space of the underlying Calabi-Yau manifold, whenever the gauge theory flows in the IR to a NLSM:
\be
Z_{S^2}(z,\bar z) = e^{-K_\text{K\"ahler}(z,\bar z)} \;.
\ee
The K\"ahler moduli $z,\bar z$ of the Calabi-Yau are controlled by the K\"ahler coordinates $z,\bar z$ in the gauge theory. Multiplication of $Z_{S^2}$ by the real positive function $f_\text{imp}(z)$ then implements a K\"ahler transformation of the K\"ahler potential, which does not affect the metric.

The function $f_\text{ctc}$ in \eq{UNUNprime compact} is a pure phase, given by
\bea
\label{f ctc}
&f_\text{ctc} = \\
&\begin{cases}
\prod\limits_F \big[ e^{i\pi N'}z \big]^{\Sigma_{F+}} \big[ e^{-i\pi N'} \bar z \big]^{\Sigma_{F-}} \prod\limits_A \big[ e^{i\pi N'}\big]^{\tS_{A+}} \big[ e^{-i\pi N'} \big]^{\tS_{A-}} & \text{$N_f \geq N_a + 2$} \\
\prod\limits_F \big[ e^{i\pi N'}z \big]^{\Sigma_{F+}} \big[ e^{-i\pi N'} \bar z \big]^{\Sigma_{F-}} \prod\limits_A \big[ e^{i\pi N'}\big]^{\tS_{A+}} \big[ e^{-i\pi N'} \big]^{\tS_{A-}} \; e^{-2i\im z} & \text{$N_f = N_a + 1$} \\
\prod\limits_F \Big[ \dfrac{e^{i\pi N'}z}{1+z} \Big]^{\Sigma_{F+}} \Big[ \dfrac{e^{-i\pi N'} \bar z}{1+\bar z} \Big]^{\Sigma_{F-}} \prod\limits_A \Big[ e^{i\pi N'} (1+z) \Big]^{\tS_{A+}} \Big[ e^{-i\pi N'} (1+\bar z) \Big]^{\tS_{A-}} & \text{$N_f = N_a$} \\
\prod\limits_F \big[ e^{i\pi N'} \big]^{\Sigma_{F+}} \big[ e^{-i\pi N'}  \big]^{\Sigma_{F-}} \prod\limits_A \big[ e^{i\pi (N_f-N)} z\big]^{\tS_{A+}} \big[ e^{-i\pi (N_f-N)} \bar z \big]^{\tS_{A-}}\; e^{2i\im z^{-1}} & \text{$N_f = N_a - 1$} \\
\prod\limits_F \big[ e^{i\pi N'} \big]^{\Sigma_{F+}} \big[ e^{-i\pi N'}  \big]^{\Sigma_{F-}} \prod\limits_A \big[ e^{i\pi (N_f -N)} z\big]^{\tS_{A+}} \big[ e^{-i\pi (N_f-N)} \bar z \big]^{\tS_{A-}} & \text{$N_f \leq N_a - 2$} \;.
\end{cases}
\eea
The cases $N_f \leq N_a$ can be obtained from the cases $N_f \geq N_a$ by charge conjugation. This factor also comes from a contact term of theory $\fB$, which does not vanish in the flat space limit: it, in fact, comes from a twisted superpotential that is a function of the background twisted chiral multiplets $z$ (or $t$), $s_F$ and $\tilde s_A$.%
\footnote{See \cite{Benini:2012ui, Doroud:2012xw} or the more general analysis in \cite{Closset:2014pda}.}
In the SQCD-like theory we can ignore the function $f_\text{ctc}$ as it cancels out when computing expectation values of operators.%
\footnote{Although $f_\text{ctc}$ does not affect the computation of correlators, it can be detected in the operator map between theory $\fA$ and $\fB$ if we define the operators through functional derivatives of the partition function. We see this in section \ref{sec: operator map}.}
Once, however, we gauge the flavor symmetries to construct more complicated theories---\eg{}, quiver gauge theories---the background fields $s_F$, $\tilde s_A$ become dynamical fields on the same footing as $\sigma_I$, and the full partition function becomes an integral/sum over $\Sigma_{F\pm}$, $\tS_{A\pm}$ on the same footing as $\sigma_{I\pm}$ in (\ref{UNmother}). Contact terms involving $s_F, \tilde s_A$ then become standard twisted superpotential terms. The theory would also have FI terms $t_f$, $t_a$ for the newly gauged symmetries, which appear in the partition function as terms
$$
\big( e^{i\pi(N_f-1)} \, e^{-t_f} \big)^{\Sigma_{F+}} \big( e^{-i\pi(N_f-1)} \, e^{-\bar t_f} \big)^{\Sigma_{F-}} \big( e^{i\pi(N_a-1)} \, e^{-t_a} \big)^{\tS_{A+}} \big( e^{-i\pi(N_a - 1)} \, e^{-\bar t_a} \big)^{\tS_{A-}} \;.
$$
Comparison with (\ref{f ctc}) reveals that $f_\text{ctc}$ can be physically interpreted as a nontrivial shift of the neighboring FI terms $t_f$, $t_a$ by a function of $t$ after the duality. This is the origin of the cluster algebra structure within quiver gauge theories, which is the subject of section \ref{s:CA}. By expanding at small $z$, \ie{}, in the large volume limit where $t \to\infty$, we see that in the case $N_f = N_a$ the shift involves instanton corrections.

We note that when $N_f = N_a \pm 1$, $f_\text{ctc}$ also gives rise to a twisted superpotential term linear in $z^{\pm1}$. This term does not involve dynamical fields, even after gauging the flavor symmetries. It does, however, play a role in the construction of the twisted chiral operator map, as we see in section \ref{sec: operator map}.

Let us finally explain the shift of the twisted masses inside $Z_{U(N')}^{N_a,N_f}$ in (\ref{UNUNprime compact}): it is due to the map of R-charges of chiral fields in the dual theories. The R-charges of chiral multiplets affect the Lagrangian on $S^2$; therefore, $Z_{S^2}$ depends on $r_F$, $\tilde r_A$ as well. The R-charges are incorporated in the integral localization formula (\ref{UNmother}) by making the replacements
\be
\label{replacements}
\S_{F \pm} \,\to\, \S_{F\pm} + \frac{r_F}2 \;, \qquad\qquad \tS_{A \pm} \,\to\, \tS_{A\pm} - \frac{\tir_A}2 \;.
\ee
Hence all equations from (\ref{UNmother}) to (\ref{G function}) are still valid with general R-charges. Making the R-charge dependence explicit, we arrive at
\begin{multline}
\label{UNUNprimewithR}
Z_{U(N)}^{N_f,N_a} \big( \Sigma_{F\pm}, \tS_{A\pm}, r_F, \tilde r_A; z \big) = \\
= f^{(r)}_\text{imp} \; f_\text{ctc} \; \prod_{F,A} \frac{\Gamma \big( \S^F_{A+} + \frac{r_F + \tir_A}2 \big) }{ \Gamma \big( 1-\S^F_{A-} - \frac{r_F + \tir_A}2 \big)} \;\cdot\; Z_{U(N')}^{N_a, N_f} \Big( \tS_{A\pm},\, \Sigma_{F\pm},\, 1- \tilde r_A,\, 1-r_F;\, z^{-1} \Big) \;.
\end{multline}
Here $f_\text{ctc}$ is exactly the same as in (\ref{f ctc}), while $f^{(r)}_\text{imp}$ is a deformation of $f_\text{imp}$ with dependence on the R-charges. Hence the map of parameters between theory $\fA$ and $\fB$ is still given as in (\ref{map ranks}), (\ref{map masses}), and (\ref{map FI}), while the map of R-charges is given by
\be
r'_{F'} = 1- \tilde r_{F'} \;,\qquad\qquad \tilde r'_{A'} = 1- r_{A'} \;.
\ee
From the gauge-singlet factor in (\ref{UNUNprimewithR}), we can deduce that the gauge singlet $M_{FA}$ has R-charge $r_F + \tilde r_A$, consistent with the chiral ring map (\ref{chiral ring map}). Moreover, the R-charges of the quarks in theory $\fB$ are such that the superpotential term $W_\text{dual} = \sum_{F',A'} \Tr \big( q'_{F'} M_{F'A'} \tilde q'_{A'} \big)$ in (\ref{suppot}) has R-charge 2, consistent with its presence in theory $\fB$.

\subsubsection{The operator map}
\label{sec: operator map}

We can check the presence of the contact terms implied by $f_\text{ctc}$ in (\ref{UNUNprime compact}) against the twisted chiral operator map between theory $\fA$ and $\fB$ encoded in (\ref{operator map}). Expanding equation (\ref{operator map}) and comparing the coefficient of the term $x^{\max(N_f,N_a) -1}$, we obtain the following linear relation between the operators with lower dimension:
\be
\label{trace map}
\Tr\sigma= - \Tr\sigma' + \begin{cases}
\sum_F s_F & \text{$N_f \geq N_a + 2$} \\
\sum_F s_F + i\mu\, z(\mu) & \text{$N_f = N_a + 1$} \\
\frac1{1+z}\sum_F s_F + \frac z{1+z} \sum_A \tilde s_A & \text{$N_f = N_a$} \\
\sum_A \tis_A + i\mu\, z(\mu)^{-1} & \text{$N_f = N_a - 1$} \\
\sum_A \tis_A & \text{$N_f \leq N_a - 2$} \;.
\end{cases}
\ee
For the cases $N_f = N_a\pm1$, we have reinstated the correct dimensions using the cutoff scale $\mu$, at which the K\"ahler coordinate $z(\mu)$ is defined.

We can alternatively extract the relation between $\Tr\sigma$ and $\Tr\sigma'$ from the equality of partition functions (\ref{operator map}). We can compute expectation values of twisted chiral operators inserted at the north pole of the sphere with localization by inserting their expectation value on the BPS configurations in the integral formula \eq{UNmother}. For $\Tr \s$ and $\Tr \s'$ this is implemented efficiently by taking a derivative:
\be
\langle \Tr\sigma \rangle = -i \, \frac{\partial \log Z^{N_f,N_a}_{U(N)}}{\partial \log z} \;, \qquad\qquad
\langle \Tr\sigma' \rangle = -i \, \frac{\partial \log Z^{N_a,N_f}_{U(N')}}{\partial \log z'} \;,
\ee
where the expectation values are on $S^2$. To compare with the flat space analysis, we should be careful to reinstate the correct dimensions using powers of the sphere radius $r$, and take the flat space limit $r \to \infty$. We should also recall that the complexified FI parameter runs according to the one-loop exact beta function $\beta = N_a - N_f$, therefore in the non-conformal cases it must be defined at some scale. The couplings at different scales are related by
\be
\frac{z(\mu_1)}{\mu_1^\b} = \frac{z(\mu_2)}{\mu_2^\b} \;.
\label{couplings and beta}
\ee
While the FI term in \eqref{trace map} is defined at some generic scale $\mu$, it is naturally defined at the scale $1/r$ on the sphere. Taking these subtleties into account, the equality (\ref{UNUNprime compact}) implies
\be
\label{trace vevs}
\vev{\Tr\sigma} = -\vev{\Tr\sigma'} + \begin{cases}
\sum_F s_F + \Big[ \frac{iN'}{2r} \Big] & \text{$N_f \geq N_a + 2$} \\
\sum_F s_F + \frac ir z\big( \frac1r \big) + \Big[ \frac{iN'}{2r} \Big] & \text{$N_f = N_a + 1$} \\
\frac1{1+z} \sum_F s_F + \frac z{1+z} \sum_A \tilde s_A  + \Big[ \frac{1-z}{1+z} \, \frac{iN'}{2r} \Big] & \text{$N_f = N_a$} \\
\sum_A \tis_A + \frac ir z\big(\frac1r\big)^{-1} + \Big[ \frac{iN'}{2r} \Big] & \text{$N_f = N_a - 1$} \\
\sum_A \tis_A + \Big[ \frac{iN'}{2r} \Big] & \text{$N_f \leq N_a - 2$} \;.
\end{cases}
\ee
The terms in brackets come from $f_\text{imp}$, and depend on how we mix the R-symmetry with the gauge symmetry; in any case, we see that they vanish in the $r\to\infty$ limit. All other correction terms come from $f_\text{ctc}$ and survive in the flat space limit, including the FI terms due to the relation \eq{couplings and beta}. We hence get a perfect matching between (\ref{trace map}) and (\ref{trace vevs}) in the flat space limit.

\section{Dualities of quiver gauge theories and cluster algebras}
\label{s:CA}

The dualities of $\cN=(2,2)$ SQCD-like theories can be directly applied to more complicated theories obtained by gauging (part of) the flavor symmetry. We can construct theories with more chiral and vector multiplets, in which the SQCD-like theory appears as a small ``block." Upon (partially) gauging the flavor symmetry, the twisted mass parameters $s_F$, $\tilde s_A$ get identified with the twisted chiral multiplets of some other gauge groups, which have their own FI interactions and so on.

An interesting class of theories is given by \emph{quiver gauge theories}, described in section \ref{ss:SDandCA}: they are gauge theories whose gauge/matter content is represented by a quiver diagram, which Seiberg-like dualities can act locally on. The main result of this section is the observation that these dualities precisely reproduce all elements of the \emph{cluster algebra} structure introduced by Fomin and Zelevinsky, which we briefly summarize in section \ref{ss:CA}. As we comment later on, this observation has potentially far-reaching consequences.

\subsection{Review of cluster algebras}
\label{ss:CA}

For completeness, we present a brief review of the basic definitions and facts about cluster algebras and mutations, following \cite{FominZ4}. More information can be found in the original papers \cite{FominZ1, FominZ2, BFominZ3, FominZ4} and other mathematical accounts \cite{FominZnotes, clusterportal}.

A cluster algebra $\cA$ of rank $n$ is a commutative ring with a unit and no zero divisors, defined as a subalgebra of an ambient field $\cF$ of rational functions in $n$ variables. The cluster algebra comes equipped with a distinguished set of generators called \emph{cluster variables}; this set is given by the (non-disjoint) union of a distinguished collection of overlapping algebraically independent $n$-subsets of $\cF$, called \emph{clusters}. The clusters are related to each other by special birational transformations: for each cluster $\mathbf{x}$ and every cluster variable $x\in\mathbf{x}$, there is another cluster $\mathbf{x}'$ obtained from $\mathbf{x}$ by replacing $x$ with a new cluster variable $x'$ determined by an \emph{exchange relation} of the form
$$
x\, x' = p^+ M^+ + p^- M^- \;,
$$
where $M^\pm$ are two monomials without common divisors in the $(n-1)$ variables $\mathbf{x} \setminus\{x\}$, while $p^\pm$ are coefficients valued in a \emph{semifield} $\bP$. Each exchange relation involves two different kinds of data: an \emph{exchange matrix} $B$ encoding the non-negative exponents in $M^\pm$, and the two \emph{coefficients} $p^\pm$. Furthermore, any two clusters can be obtained from each other by a sequence of exchanges of this kind.

In order to define a cluster algebra $\cA$, we first need to introduce its ground ring. The ground ring is constructed from a semifield $(\bP, \,\cdot\,, \oplus)$, \ie{}, an abelian multiplicative group endowed with an \emph{auxiliary addition} $\oplus$, which is commutative, associative and distributive, but not with a subtraction. The ground ring is taken to be the group ring $\bZ\bP$.%
\footnote{Given a ring $\bR$ and a group $G$, the group ring $\bR G$ consists of the formal sums
$$
\sum\nolimits_{y_\in G} c_y\, y \qquad\qquad\text{with } c_y \in \bR \;.
$$
If $\bR$ is a field, the group ring can be further thought of as a vector space over $\bR$ whose basis consists of the elements of $G$. A semifield $\bP$, as a multiplicative group, is automatically torsion-free (\cite{FominZ1} sec. 5), hence its group ring $\bZ\bP$ has no zero divisors.}

In this paper, we only consider tropical semifields. A \emph{tropical semifield} $\bP$ is an abelian group (with respect to multiplication) that is freely generated  by a finite set of labels (or formal variables) $u_1, \ldots, u_J$ of size $J$:
\be
\label{tropical semifield}
\bP = \Big\{ \prod\nolimits_{j=1}^J u_j^{a_j} \;\Big|\; a_j \in \bZ \Big\} \;.
\ee
Multiplication and the auxiliary addition $\oplus$ are defined to act on the elements as:
\be
\prod_j u_j^{a_j} \cdot \prod_j u_j^{b_j} = \prod_j u_j^{a_j + b_j} \;,\qquad\qquad \prod_j u_j^{a_j} \oplus \prod_j u_j^{b_j} = \prod_j u_j^{\max(a_j,b_j)} \;.
\ee
The group ring $\bZ\bP$ is then the ring of Laurent polynomials in the variables $u_j$. Notice that $1\oplus1=1$, so for $J=0$ the tropical semifield is trivial, \ie{},  it is a semifield with one element. In the next section, we see that the semifield with $J=1$ is relevant for studying the dualities of two-dimensional theories---the single label $u$ is identified with a ratio of renormalization scales in the field theory.

To define the core of the cluster algebra $\cA$, we need to define the exchange matrix and the coefficients for every possible exchange relation: in fact, they are determined in a very peculiar fashion. One starts by defining a \emph{seed} $(B, \why, \ex)$. The element $B = b_{ij}$ is an $n\times n$ skew-symmetric integer matrix.%
\footnote{This is not the most general type of cluster algebra. For example, the constraint that the matrix $b_{ij}$ should be skew-symmetric can be relaxed: one could have taken it to be skew-symmetrizable \cite{FominZ1}.}
The matrix $b_{ij}$ can be represented by a directed quiver diagram $B$ with $n$ nodes, where $b_{ij}$ counts, with sign, the number of arrows from node $i$ to node $j$ (\ie{}, in the notation of (\ref{[]+}), $[b_{ij}]_+$ is the number of arrows $i\to j$). Such a quiver diagram does not have any 1-cycles (an arrow that starts and ends at the same node) or oriented 2-cycles (a pair of arrows with opposite orientation connecting a given pair of nodes): we refer to quivers satisfying these conditions as \emph{cluster quivers} throughout this paper. The coefficient $n$-tuple $\why = (y_1,\dots,y_n)$ consists of coefficients $y_i \in \bP$, while the cluster $\ex=(x_1,\dots,x_n)$ is an $n$-tuple of cluster variables, each attached to the respective $i$-th node of the quiver.

Given a seed $(B,\why, \ex)$, for each node $k$ we define a \emph{mutation} $\mu_k$ which maps the seed to a new seed
\be
(B',\why',\ex') = \mu_k (B,\why,\ex) \;.
\ee
The mutation $\mu_k$ acts in the following way on the elements of the seed:
\bea
\label{cluster algebra transformations}
b_{ij}' &= \begin{cases} - b_{ij} & \text{if $i=k$ or $j=k$} \\ b_{ij} + \sign(b_{ik})\, [b_{ik} b_{kj}]_+ & \text{otherwise} \end{cases} \\[1em]
y_i' &= \begin{cases} y_k^{-1} & \text{if $i=k$} \\ y_i \, y_k^{[b_{ki}]_+} (y_k \oplus 1)^{-b_{ki}} & \text{otherwise} \end{cases} \\[1em]
x_i' &= \begin{cases} \dfrac1{x_k} \bigg( \dfrac{y_k}{y_k \oplus 1} \prod_j x_j^{[b_{jk}]_+} + \dfrac1{y_k \oplus 1} \prod_j x_j^{[-b_{jk}]_+} \bigg) & \text{if $i=k$} \\ x_i & \text{otherwise} \;. \end{cases}
\eea
Here we have defined
\be
\label{[]+}
[ x ]_+ = \max(x,0) \;, \qquad \qquad
\sign(x) = \begin{cases} 0 &\text{if $x=0$} \\ x/|x| &\text{otherwise} \;. \end{cases}
\ee
The mutation of the cluster $\ex$ in (\ref{cluster algebra transformations}) precisely implements an exchange relation $x_k \to x'_k$, while all other $(n-1)$ variables are left invariant. Therefore, the exchange matrix $B$ and the coefficient $n$-tuple $\why$ determine the $n$ possible exchange relations that involve $\ex$. On the other hand, $\mu_k$ also transforms $B$ and $\why$, and $(B',\why')$ in turn determine the $n$ possible exchange relations that involve $\ex'$, and so on. This gives rise to a convoluted dynamical system, and the full set of all possible exchange relations is determined by any one of the seeds. Notice that the mutations $\mu_k$ are involutions.

Finally, we define $\cX$ as the union of the clusters in all the seeds that can be generated by mutations from an ``initial seed" $(B_0,\why_0,\ex_0)$. Denoting the cluster variables of the initial seed as $\ex_0 = (x_1,\cdots,x_n)$, the ambient field $\cF$ of the cluster algebra is taken to be the field of rational functions of $(x_1,\cdots,x_n)$ with coefficients in $\bQ\bP$, \ie{},  $\cF = \bQ\bP(x_1, \ldots, x_n)$. The cluster algebra $\cA$ is the $\bZ\bP$-subalgebra of the ambient field $\cF$ generated by all cluster variables in $\cX$:
\be
\label{def cluster algebra}
\cA(B,\why,\ex) = \bZ\bP[\cX] \;.
\ee
We note that we could have used any one of the seeds obtained by mutating the initial seed to obtain an isomorphic definition of the cluster algebra, hence the notation $\cA = \cA(B,\why,\ex)$.

\

The matrix mutation of $B$ in (\ref{cluster algebra transformations}) can be realized graphically in the quiver representation. Upon acting with $\mu_k$, $B'$ is obtained from $B$ through the following three steps:
\begin{enumerate}[\; \; {Step} 1)]
\item For each ``path'' (a sequence of two arrows) $i \to k \to j$ passing through $k$, add an arrow $i \to j$.
\item Invert the direction of all arrows that start or end at $k$.
\item If, as a result of the manipulations of step one, two nodes $i$ and $j$ are connected by arrows in both directions, remove pairs of opposite arrows until the remaining arrows (if any) point in a unique direction.
\end{enumerate}
This procedure does not generate any 1-cycles or 2-cycles, thus a cluster quiver is mapped to a cluster quiver.

The action of $\mu_k$ in (\ref{cluster algebra transformations}) has an interesting hierarchical structure, which enables one to focus on how a subset of the data constituting the seeds transform under mutations, without worrying about the behavior of the other components. For instance, the action of the mutations on quivers does not depend on the coefficients or the clusters. Hence, the quiver (or matrix) mutation $B' = \mu_k(B)$ can be examined on its own. Also, the mutation rules of quivers and coefficients do not depend on the cluster variables, \ie{}, the so-called coefficient dynamics $(B',\why') = \mu_k(B,\why)$ can be studied independently. Another interesting restriction is to set all the coefficients of a seed to be trivial, \ie{}, $y_i = 1$; then the coefficients remain trivial in all other seeds and the cluster algebra structure reduces to $(B',\ex')= \mu_k(B,\ex)$.

We have now introduced all the elements of cluster algebra we need. For the rest of the section, we show that the Seiberg-like dualities of two-dimensional gauge theories based on cluster quivers elegantly realize those elements. Cluster algebras exhibit many interesting properties, such as total positivity, the Laurent phenomenon and the existence of a natural Poisson structure. Their relevance for two-dimensional physics certainly deserves further study.

\subsection{Dualities of quiver gauge theories}
\label{ss:SDandCA}

Let us consider two-dimensional $\cN=(2,2)$ gauge theories whose gauge group and matter content is based on a quiver diagram. We consider diagrams whose nodes are either circles or squares. Let $m$ be the number of circles, and $n$ the total number of nodes.%
\footnote{In the mathematical literature, given a cluster algebra of rank $m$ with coefficients in the tropical semifield in $J$ variables $\bP_J$, one can introduce an extended $m \times (m+J)$ exchange matrix $\tilde B$ which includes both $B$ and $\why$, in such a way that the extended matrix transforms as in the first line of (\ref{cluster algebra transformations}). In other words, one can represent the coefficients by the number of arrows connecting the $m$ ``gauge" nodes to $J$ extra ``flavor" nodes (as mutations cannot be taken on these nodes), or vice versa. Unfortunately, this description loses track of the arrows between flavor nodes, so we do not follow this route. We rather connect the field theory to a cluster algebra of rank $n$ with coefficients in $\bP_{J=1}$, allowing mutations only at the gauge nodes.}
The gauge group is a product of unitary factors
$$
U(N_1) \times \cdots \times U(N_m) \;,
$$
each associated to a circular node of the diagram, while the flavor group includes
$$
S \big[ U(N_{m+1}) \times \cdots \times U(N_n) \big] \;,
$$
where each factor is associated to a square node. The ``missing" $U(1)$ factor in the flavor group is gauged.%
\footnote{If there are no flavor nodes, then the diagonal $U(1)$ of the gauge group is decoupled and free. Moreover, the flavor group can be much larger: each independent loop brings an extra $U(1)$ factor, and arrows with multiplicity $|b_{ij}|$ bring an $SU\big( |b_{ij}| \big)$ factor, unless they are broken by superpotential interactions. The superpotential interactions, as we explain further in section \ref{ss:chiralring}, are also encoded in the R-charge assignments of the chiral fields.}
For each arrow from node $i$ to node $j$, there is a chiral multiplet in the bifundamental representation, which transforms as a fundamental of the node at the tail and an antifundamental of the node at the head. Notice that, from the point of view of the gauge group, these multiplets are either bifundamentals, (anti)fundamentals or singlets. We restrict our attention to theories whose quivers are cluster quivers, which are free of 1- and 2-cycles: they are represented by an $n\times n$ skew-symmetric matrix $B = b_{ij}$ such that $[b_{ij}]_+$ is the number of arrows pointing from $i$ to $j$.

The field theory attaches additional data to the nodes of the quiver diagram. First, each node $i \in [n]$ is assigned a non-negative integer $N_i$ representing the rank of the corresponding group. Each gauge node also has an associated complexified FI parameter
\be
t_i = 2\pi\xi_i + i\theta_i \;,
\ee
which is more conveniently expressed through the \emph{K\"ahler coordinate}
\be
\label{defz}
z_i = e^{i\pi \left( N_f(i) - N_i \right)} \; e^{-t_i} \;.
\ee
Here, $N_f(i)$ and $N_a(i)$ are the total number of fundamentals and antifundamentals with respect to the $i$-th node, and are given by
\be
N_f(i) = \sum\nolimits_j [b_{ij}]_+ \, N_j \;, \qquad\qquad N_a(i) = \sum\nolimits_j N_j \, [b_{ji}]_+ \;.
\ee
To make the notation more homogeneous, one can choose to introduce ``flavor FI parameters'' for the flavor nodes as well: these parameters do not have physical significance on flat space since they do not involve dynamical fields, but are useful for keeping track of the contact terms necessary to match the $S^2$ partition functions. In two dimensions, FI terms are classically marginal, but quantum mechanically they can have logarithmic running with the dynamical generation of a scale; with $\cN=(2,2)$ supersymmetry, the beta function is one-loop exact and is given by
\be
\beta_i = N_a(i) - N_f(i) = - \sum\nolimits_j b_{ij} N_j \;.
\ee

\begin{figure}[t]
\begin{center}
\includegraphics[width=.7\textwidth]{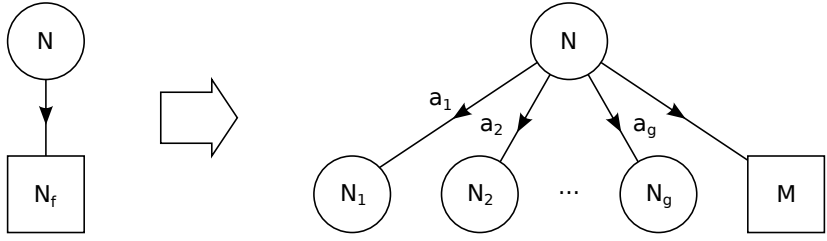}
\end{center}
\caption{Gauging the flavor symmetry. We promote a $U(N_1) \times \cdots \times U(N_g)$ subgroup of the $U(N_f)$ flavor symmetry to be a gauge symmetry, in such a way that $U(N_i)$ has embedding index $a_i$, with $\sum_i a_i N_i + M = N_f$. This leaves an unbroken $U(M)$ flavor symmetry, as well as a $\prod_i SU(a_i)$ symmetry, the latter of which is not represented by any node. These additional flavor symmetries can be broken by superpotential terms. The resulting quiver has $a_i$ arrows pointing to $U(N_i)$, and one arrow pointing to $U(M)$.
\label{fig: gauging}}
\end{figure}

Let us now apply the Seiberg-like duality of section \ref{sec: dualities} to the $k$-th gauge node in the quiver---the duality, obviously, cannot be applied to a flavor node. We can think of the full quiver as constructed by taking blocks of $U(N_k)$ gauge groups with $N_f(k)$ fundamentals and $N_a(k)$ antifundamentals, and identifying part or all of the flavor symmetries as gauge symmetries of other gauge groups of the theory. This process is depicted in figure \ref{fig: gauging}. Let us consider, for concreteness, a gauge node with $N_f$ fundamentals: we can embed $U(N_1) \times \cdots \times U(N_g) \times U(M)$ into $U(N_f)$ in such a way that the embedding index of $U(N_i)$ into $U(N_f)$ is $a_i$ and that of $U(M)$ is 1, as long as $M+ \sum_i a_i N_i = N_f$.%
\footnote{If a representation $\mathbf{r}$ of $G$ decomposes into $\oplus_i \mathbf{r}_i$ of $H$ under the embedding $H \subset G$ of Lie algebras, the Dynkin embedding index
\be
I_{H \hookrightarrow G} = \frac{\sum_i T(\mathbf{r}_i)}{T(\mathbf{r})}
\ee
is independent of $\mathbf{r}$. Here, $T(\mathbf{r})$ is the quadratic Casimir, normalized such that $T(\mathbf{n})=1$ for the fundamental representation $\mathbf{n}$ of $\mathfrak{su}(n)$.}
When we gauge $U(N_1), \ldots, U(N_g)$, the original arrow pointing to $U(N_f)$ breaks into $a_i$ arrows pointing to $U(N_i)$ for each $i$, and one arrow pointing to the flavor node $U(M)$. The gauge fields $U(N_i)$ will in general also be coupled to other chiral multiplets, therefore the nodes will be connected to the rest of the quiver. An example of this process is given in figure \ref{fig: UNtoGen}.

Based on what we already know about the dualities of SQCD-like theories, we can infer the action of the duality applied to the $U(N_k)$ node of the quiver theory. The action on the quiver diagram can be described in the following way. Let us, as in section \ref{sec: dualities}, refer to the theories before and after the duality as theories $\fA$ and $\fB$, respectively. Recalling the action of the duality on the SQCD-like theory depicted in figure \ref{fig: Seiberg duality}, we see that it realizes the first and second steps of the quiver mutation rule summarized at the end of section \ref{ss:CA}. The fields $M_{AF}$, the singlets with respect to $U(N_k)$ appearing in theory $\fB$, are the arrows added in step 1). The fact that fundamentals and antifundamentals of $U(N_k)$ get exchanged corresponds to step 2). What is missing in the SQCD-like example is step 3). Let us consider the case when an oriented 2-cycle is generated in theory $\fB$ upon the addition of the fields $M_{AF}$: we denote the two chiral multiplets forming the 2-cycle as $X_1$ and $X_2$, where $X_2$ is the newly added field. Suppose that in theory $\fB$ there is a quadratic superpotential term $W_\text{cycle} = X_1X_2$: then the fields $X_{1,2}$ are massive and can be integrated out. In other words, they disappear in pair at low energies. This mechanism would realize step 3). Going back to theory $\fA$, this implies that there must have been a 3-cycle $X_1$-$\tilde q$-$q$ involving some quarks of the $U(N_k)$ node, and a cubic superpotential $W_\text{cycle} = X_1 \tilde qq$: only then is the quadratic term $X_1 X_2$ present in theory $\fB$, as the operators $\tilde qq$ and $X_2$ are identified under the map (\ref{chiral ring map}).

For the rest of this section, we only consider theories for which this is always the case: we assume that the theories under consideration have a quiver and a superpotential such that, whenever the application of the Seiberg-like duality to a node generates oriented 2-cycles in the quiver, there are enough quadratic superpotential terms to make all fields involved massive. With this assumption, we always integrate such pairs out when defining the duality map. In fact, we assume something stronger: that this is the case for all possible sequences of dualities. This is a very nontrivial assumption, which we come back to in section \ref{sec: nondegenerateFTs}. We denote theories with this property as \emph{non-degenerate} quiver theories. Moreover, since the dualities considered here involve one more step than the Seiberg-like dualities of section \ref{sec: dualities}---namely the integrating out of massive pairs of chiral multiplets---we sometimes choose to refer to them as \emph{cluster dualities}.

Hence, under the assumption of non-degeneracy, the transformation of the quiver diagram under duality at $U(N_k)$ is the same as the matrix mutation $\mu_k$ of $b_{ij}$ described by the first equation in (\ref{cluster algebra transformations}). In particular, a cluster quiver is mapped to a cluster quiver.

Let us now examine how the ranks of the gauge groups transform. From equation (\ref{map ranks}), we see that only the rank of the dualized node $U(N_k)$ changes, therefore:
\be
\label{bN}
N_i' = \begin{cases} \max \Big( \sum_j [b_{kj}]_+ \, N_j ,~ \sum_j N_j \, [b_{jk}]_+ \Big) - N_k & \text{if $i=k$} \\
N_i & \text{otherwise} \;. \end{cases}
\ee
We can represent the ranks by elements of the tropical semifield $\bP$ in a single variable $u$ (introduced in (\ref{tropical semifield}), with $J=1$) by attaching a variable $\mathsf{n}_i = u^{N_i}$ to the $i$-th node of the quiver diagram. Then, the transformation rules for the ranks can be elegantly written in $\bP$ as
\be
\label{ranks mutation}
\mathsf{n}_i' = \begin{cases} \mathsf{n}_k^{-1} \Big( \prod_j \mathsf{n}_j^{[b_{jk}]_+} \oplus \prod_j \mathsf{n}_j^{[-b_{jk}]_+} \Big) & \text{if $i=k$} \\ \mathsf{n}_i & \text{otherwise} \;. \end{cases}
\ee
We can similarly introduce elements $y_i \in \bP$ associated to the beta functions of the nodes:
\be
y_i = u^{\beta_i} = u^{N_a(i) - N_f(i)} = \prod\nolimits_j \mathsf{n}_j^{b_{ji}} \;.
\ee
We can identify $y_i$ with the cluster algebra coefficients, as it is a simple exercise to show that they transform according to the rules given by the second equation of (\ref{cluster algebra transformations}).

We say that a quiver gauge theory is ``conformal'' if the FI beta functions $\beta_i$ of all gauge nodes vanish, \ie{}, $\beta_i = 0$.%
\footnote{The gauge theory is not conformal because the gauge coupling is dimensionful. The condition $\beta_i=0$, however, implies that the theory flows to a nontrivial fixed point in the IR. In this case, the FI parameters are marginal couplings that can be thought of as coordinates on the conformal manifold of the IR theory.}
In this case, $y_i = 1$ for all gauge nodes $i$. The transformation of $\mathbf{y}$ implies that conformal quivers are mapped to conformal quivers under the dualities, as it should.

\begin{figure}[t]
\centering\includegraphics[width=.35\textwidth]{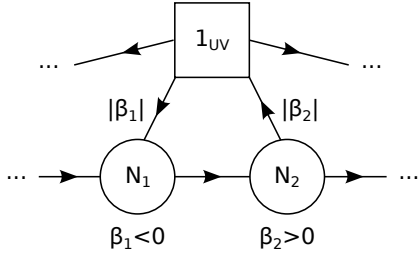}
\caption{\small The ``UV completion'' of a non-conformal quiver where we extend the theory by a $U(1)$ flavor node. The twisted mass with respect to this flavor symmetry sets the UV renormalization scale $u$. The new node is connected to the $i$-th node by $|\beta_i|$ arrows, in the direction such that the extended quiver is conformal.}
\label{f:Extended}
\end{figure}

Finally, let us study the transformation of the complexified FI parameters---or, equivalently, of the K\"ahler coordinates $z_i$---under cluster dualities. We already know from section \ref{sec: dualities} that dualization at the node $U(N_k)$ maps its K\"ahler coordinate from $z_k$ to $z_k^{-1}$. The contact terms represented by the function $f_\text{ctc}$ in (\ref{f ctc}) are responsible for the transformation of the K\"ahler coordinates of the neighboring nodes. Let us first consider this map in the case of conformal quivers. From (\ref{f ctc}) or, more precisely, from the analysis that we present at the end of this section, we deduce the map
\be
\label{zmapconformal}
z_i' = \begin{cases} z_k^{-1} & \text{if $i=k$} \\ z_i \, z_k^{[b_{ki}]_+} (z_k + 1)^{-b_{ki}} & \text{otherwise} \;. \end{cases}
\ee
These are precisely the transformation rules for the $\mathcal{X}$-coordinates of Fock and Goncharov \cite{FockGoncharov, Fock:2003xxy}.

In the non-conformal case, the complexified FI parameters run logarithmically along the RG flow, and must be defined at some scale $\mu$. An alternative way to define the theory is to embed the non-conformal quiver $B$ into a conformal UV quiver $B_\text{UV}$. The UV quiver is constructed by adding an extra $U(1)_\text{UV}$ flavor node, as shown in figure \ref{f:Extended}, and $|\beta_i|$ arrows connecting the node $U(N_i)$ to the node $U(1)_\text{UV}$ for each $i$ in such a way that the extended quiver is conformal. The original gauge theory can be recovered by turning on a large twisted mass $s_0$ for the $U(1)_\text{UV}$ flavor symmetry, and scaling the UV parameters $z_i^\text{UV}$ appropriately. The identification of parameters is
\be
\label{UVIR}
z_i^\text{UV} = u^{\beta_i} \, z_i(\mu) \;,
\ee
where $u = -is_0/\mu$ is the ratio of the UV and IR scales.%
\footnote{Our conventions are adapted to the $S^2$ partition function, which can be used to derive the precise maps under scale matching. We present the details of the computation with all relative signs in appendix \ref{ap:UVIR}.}
The IR limit $T'_\text{IR}$ of the UV theory $T'_\text{UV}$ dual to a given theory $T_\text{UV}$, is itself the dual to the IR limit $T_\text{IR}$ of $T_\text{UV}$. In other words, the diagram
\be
\begin{CD}
T_\text{UV}	@>{\text{cluster dual}}>>	T_\text{UV}' \\
@V\text{IR limit}VV	@VV{\text{IR limit}}V \\
T_\text{IR}	@>{\text{cluster dual}}>>	T_\text{IR}'
\end{CD}
\ee
commutes. Hence, upon identifying
\be
y_i = u^{\beta_i} = z_i^\text{UV} / z_i(\mu) \;,
\ee
we find an interpretation for the base $u$ of the tropical semifield $\bP$ as the ratio of the UV and IR scales. Since the UV theory is conformal by construction, its K\"ahler coordinates $z^\text{UV}_i$ transform as in (\ref{zmapconformal}). Thus, from $z_i(\mu) = y_i^{-1} z_i^\text{UV}$ and the transformation laws (\ref{cluster algebra transformations}) and (\ref{zmapconformal}), we obtain the transformation of the K\"ahler coordinates in the general case:
\be
\label{zrule}
z'_i = \begin{cases} z_k^{-1} & \text{if $i=k$} \\ z_i \, z_k^{[b_{ki}]_+} \bigg( \dfrac{y_k}{y_k \oplus 1} z_k + \dfrac{1}{y_k \oplus 1} \bigg)^{-b_{ki}} & \text{otherwise} \;. \end{cases}
\ee
While objects with this transformation law do not appear directly in the work of Fomin and Zelevinsky, they follow from the transformation of the cluster coordinates $x_i$ in (\ref{cluster algebra transformations}) if we define
\be
\label{zandx}
z_i = \prod\nolimits_j x_j^{b_{ji}} \;.
\ee
The variables $z_i$ are sometimes referred to as ``dual cluster variables."

The physical content of the formal expression (\ref{zrule}) is not immediately apparent. Recall, however, that its derivation still involves the UV construction with finite UV scale. The original quiver theory is obtained in the IR limit of this construction, that is, when $u\to\infty$. The transformation (\ref{zrule}) then reduces to
\be
\label{Kahler coord transform}
z'_i = \begin{cases} z_k^{-1} & \text{if $i=k$} \\
z_i \, z_k^{[-b_{ki}]_+} & \text{if $i \neq k$ when $\b_k >0$} \\
z_i \, z_k^{[b_{ki}]_+}(z_k+1)^{-b_{ki}} & \text{if $i \neq k$ when $\b_k =0$} \\
z_i \, z_k^{[b_{ki}]_+} & \text{if $i \neq k$ when $\b_k <0$} \;.
\end{cases}
\ee
In fact, taking the limit $u\to\infty$ in (\ref{zrule}) is the same as selecting only those terms in $z_i'$ whose coefficient is $1 = u^0 \in\bP$. Now, \eqref{Kahler coord transform} is precisely the transformation law that follows from the contact terms represented by $f_\text{ctc}$ in (\ref{f ctc}), derived from the $S^2$ partition function.

\

For the rest of this section, we check that the duality map (\ref{Kahler coord transform}) agrees with the $S^2$ partition function of the whole quiver theory. This is a direct consequence of our analysis of the SQCD-like theories in section \ref{sec: dualities}. To do so, let us consider the $S^2$ partition function of a quiver gauge theory characterized by the data $(B,\eN,\zee)$, where $B$ is a quiver, $\eN$ is an $n$-tuple of ranks and $\zee$ is an $m$-tuple of K\"ahler coordinates:
\begin{multline}
\label{ZBNz}
Z_{S^2} (B,\eN,\zee) = \prod_{i=1}^n \Bigg[ \prod_{I_i = 1}^{N_i} \, \sum_{\m_{i, I_i} \in \bZ} \, \int \frac{d\s_{i,I_i}}{2\pi} \,  \big( e^{i\pi(N_f(i)-1)} z_i \big)^{\s_{i,I_i +}} \big( e^{-i\pi(N_f(i)-1)} \bz_i)^{\s_{i,I_i -}} \Bigg] \\
\times \prod_{i=1}^n \Bigg[ \prod_{I_i <J_i}^{N_i} \big( {-\S^{I_i}_{{J_i}+} \S^{I_i}_{{J_i}-}} \big) \Bigg] \;
\prod_{i < j}^n \Bigg[ \prod_{\alpha_{ij}=1}^{|b_{ij}|} \, \prod_{I_i,J_j} \frac{\Ga \big( \frac12 r^{ij}_{\alpha_{ij}}-\sign(b_{ij})\S^{I_i}_{J_j+} \big) }{ \Ga \big( 1- \frac12 r^{ij}_{\alpha_{ij}}+\sign(b_{ij})\S^{I_i}_{J_j-} \big)} \Bigg] \;.
\end{multline}
Here, $\s_{I_i \pm}$ are the complex Coulomb branch parameters of the gauge group $U(N_i)$ as defined in (\ref{def sigma+-}), where the indices $I_i, J_i$ run over $[N_i]$. We have used the notation $\S^{I_i}_{J_j \pm}$ to denote the differences between Coulomb branch parameters:
\be
\S^{I_i}_{J_j \pm} = \s_{i,I_i\pm}-\s_{j,J_j\pm} \;.
\ee
The indices $\alpha_{ij}$ label the bifundamental matter between node $i$ and $j$ (the multiple arrows), and $r^{ij}_{\alpha_{ij}}$ are their R-charges. For notational simplicity we have assumed that all nodes of the quiver are gauge nodes, although the following analysis is valid for theories with flavor nodes as well. If a node $i$ is a flavor node, we can simply remove the summation over fluxes and the integration over the Coulomb branch, $\sum_\fm \int \frac{d\sigma}{2\pi}$, upon which the complex Coulomb branch parameters become components of the complex twisted masses of the flavor symmetries. The corresponding FI parameters then become formal parameters that keep track of the contact terms: we can simply choose to set $t=0$, although $t$ will then take some other values in dual descriptions, as described in section \ref{sec: partition function} for SQCD-like theories.

Now let us examine the duality at the node $k$. It is simple to observe that the integrand in (\ref{ZBNz}) can be factorized into a product of two factors $G_k$ and $G_k^c$, where $G_k$ depends on the parameters $\sigma_{k, I_k\pm}$ of the $k$-th node, while $G_k^c$ does not. One can first perform the integral/summation of $G_k$ over $\prod_{I_k} \sum_{\fm_{k,I_k}} \int d\sigma_{k,I_k}$---we call this integral $Z_k$---and then integrate/sum $G_k^c Z_k$ over the remaining parameters. In fact, $Z_k$ is the partition function of a $U(N_k)$ theory with $N_f(k)$ fundamentals and $N_a(k)$ antifundamentals, and we can readily apply the identity (\ref{UNUNprime compact}). The Coulomb branch parameters of the adjacent gauge nodes to $k$ should be plugged into the formula (\ref{UNUNprime compact}) as twisted masses, as (a subgroup of) the flavor symmetry has been gauged; notice that if the gauging is performed with an embedding index $I>1$, then $I$ twisted masses should be set equal to the same Coulomb branch parameter. Eventually, the various dual flavors and singlets can be recognized as bifundamentals of the dual theory, as exemplified in figure \ref{fig: UNtoGen}. The function $f_\text{imp}$ in (\ref{UNUNprime compact}) does not depend on integrated/summed parameters and can be pulled out of the integral/sum; the same is not true for $f_\text{ctc}$.

\begin{figure}[t]
\centering\includegraphics[width=.6\textwidth]{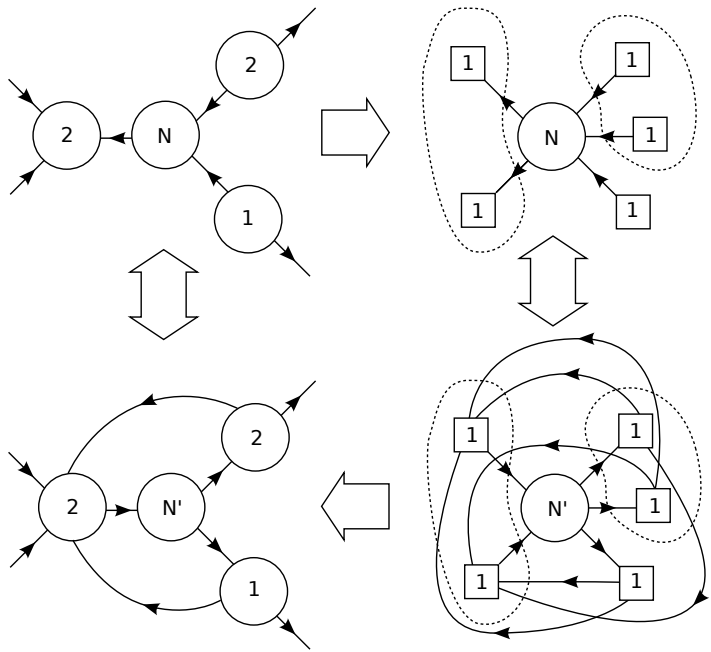}
\caption{Local picture of cluster duality. From the point of view of the dualized node $U(N)$, the Coulomb branch parameters of adjacent nodes play the role of twisted masses. The duality properties of $Z_{S^2}$ can be inferred by: 1) isolating the terms of the partition function participating in the duality and treating the Coulomb branch parameters as twisted masses (upper arrow); 2) dualizing (right arrow); 3) regrouping the twisted masses into Coulomb parameters of neighboring gauge groups (lower arrow). In this example all embedding indices are 1.
\label{fig: UNtoGen}}
\end{figure}

The duality at the partition function level encapsulates the map for the quiver and the ranks of the nodes stated before. In particular, this procedure reproduces the steps 1) and 2) in the mutation of the quiver $B$, as well as the transformation of ranks. It is less obvious how step 3) arises at the level of the partition function. Under the assumption of non-degeneracy, every 2-cycle generated on the $\fB$-side is accompanied by a quadratic superpotential term. This constrains the two chiral multiplets involved, $X_{1,2}$, which have opposite gauge and flavor charges, to have R-charges that sum up to 2. It follows that the product of their one-loop determinants is simply a flux-dependent sign:
\be
\prod_{I_i,J_j}  \frac{ \Ga \big( \S^{I_i}_{J_j+}+ \frac r2 \big) }{ \Ga \big( 1-\S^{I_i}_{J_j-} - \frac r2 \big)} \;\cdot\; \prod_{I_i,J_j} \frac{ \Ga\big( 1 -\S^{I_i}_{J_j+} - \frac r2) }{ \Ga \big( \S^{I_i}_{J_j-} + \frac r2 \big)} =(-1)^{N_j \sum_{I_i} \fm_{I_i} - N_i \sum_{J_j} \m_{J_j} } \;,
\ee
which is a consequence of gamma function identities and the fact that $\m_{I_i} = \S_{I_i+}-\S_{I_i-}$ is integral. The only trace of their existence is a shift of the theta angles of the gauge groups $i$ and $j$ by $N_j \pi$ and $N_i \pi$, respectively.

The map of the K\"ahler coordinates can be read off from the partition function from the factors that are exponentiated by the Coulomb branch parameters. These factors involve crucial contributions from $f_\text{ctc}$. For the dualized node we clearly have $z_k' = z_k^{-1}$ from (\ref{UNUNprime compact}). The map (\ref{Kahler coord transform}) for $i \neq k$ can be derived from the following expressions:
\be
\label{compcoeff}
\frac{ e^{i\pi (N_f(j)'-1)} z_j' }{ e^{i\pi (N_f(j)-1)} z_j} = \begin{cases}
\big( e^{i\pi N_k'} z_k \big)^{[b_{kj}]_+} \big( e^{i\pi N_k'} \big)^{-[-b_{kj}]_+} \prod\limits_{i \neq k} e^{i\pi N_i a_{ij}} & N_f > N_a \\
\Big[ \dfrac{e^{i\pi N_k'} z_k }{ 1+ z_k } \Big]^{[b_{kj}]_+} \Big[ e^{i\pi N_k'} (1+z_k) \Big]^{-[-b_{kj}]_+} \prod\limits_{i \neq k} e^{i\pi N_i a_{ij}} & N_f = N_a  \\
\big( {e^{i\pi N_k'}} \big)^{[b_{kj}]_+} \big( e^{i\pi(N_f(k)-N_k)}z_k \big)^{-[-b_{kj}]_+} \prod\limits_{i \neq k} e^{i\pi N_i a_{ij}} &
N_f < N_a \,.
\end{cases}
\ee
The first two terms on the right-hand side of these equations come from $f_\text{ctc}$; see equation \eq{f ctc}. The last terms come from integrating out pairs of massive fields with quadratic superpotential couplings, and $a_{ij}$ denotes the number of ``annihilated'' 2-cycles between the nodes $i$ and $j$. By isolating the $z_k$ dependence on the right-hand side of this equation, one obtains the map (\ref{Kahler coord transform}) up to shifts of the theta angles by multiples of $\pi$. The cancellation of these shifts can be shown by using some non-trivial relations among the variables involved, which we work out in appendix \ref{ap:thetacancel}.

\subsection{The twisted chiral ring}
\label{ss:CAofTCR}

The operator map of the twisted chiral ring under cluster dualities can be deduced from the operator map studied in section \ref{sec: twisted chiral ring}. This is described efficiently using the $Q$-polynomial introduced in that section. In a quiver theory, we have a $Q$-polynomial for each gauge group:
\be
Q_i (x) = \det(x-\sigma_i) \;,
\ee
where $\s_i$ is the lowest component of the adjoint twisted chiral multiplet of $U(N_i)$. Consider the cluster duality with respect to node $k$, and denote the $Q$-polynomials of the dual theory as $Q_i' (x)= \det(x-\sigma_i')$. Similarly, we define $Q$-polynomials for the flavor nodes:
\be
Q_f (x)= \prod_{F=1}^{N_f} (x-s_{f,F}) \;,
\ee
where $s_{f,F}$ are the twisted masses lying in the Cartan subalgebra of the $U(N_f)$ flavor symmetry. Note that while the coefficients of the gauge $Q$-polynomials are operators, those of the flavor nodes are constant parameters.

Since the duality does not modify the twisted chiral operators of nodes other than $k$,
\be
Q_i'(x) =  Q_i(x) \qquad\qquad \text{for $i \neq k$} \;.
\ee
Meanwhile, upon repeating the argument of section \ref{sec: twisted chiral ring}, the map for $Q_k$ can be derived from the twisted chiral ring relation
\be
\label{QTCR}
\prod\nolimits_j Q_j (x)^{[b_{kj}]_+} + i^{N_a(k)-N_f(k)} z_k \prod\nolimits_j Q_j (x)^{[-b_{kj}]_+} = C_k (z_k) \, Q_k (x) \, Q_k' (x) \;,
\ee
which is the same as \eqref{TCR}. The function $C_j(z)$ is defined as
\be
C_j (z) = \begin{cases}
1 & \text{when $N_f(j) > N_a(j)$} \\
1+ z & \text{when $N_f(j)=N_a(j)$} \\
i^{N_a(j)-N_f(j)}z & \text{when $N_f(j) < N_a (j)$} \;.
\end{cases}
\ee
We can rewrite the operator map as
\be
\label{Qprime}
Q'_k(x) = \frac{ i^{N_a(k)-N_f(k)} z_k \prod_j Q_j (x)^{[b_{jk}]_+} + \prod_j Q_j (x)^{[-b_{jk}]_+}}{ C_k(z_k) \, Q_k(x)} \;,
\ee
which resembles a cluster algebra exchange relation. In fact, we can make the connection more precise. Assuming that there exist cluster variables $x_i$ such that $z_i = \prod_j x_j^{b_{ji}}$, we find that they transform as%
\footnote{If the exchange matrix $b_{ij}$ is invertible, clearly $x_i = \prod_j z_j^{b_{ji}^{-1}}$. The matrix, however, in general is not invertible: \eg{}, when the number of nodes is odd, the exchange matrix is never invertible. When this is the case, one can introduce one or more auxiliary flavor nodes suitably connected to the quiver, with rank zero but with an FI term: these extra nodes can be used to make $b_{ij}$ effectively invertible.}
\be
\label{modclust}
x_i' = \begin{cases}
x_k^{-1} \prod_j x_j^{[b_{jk}]_+} & \text{if $i=k$ and $\b_k>0$} \\
x_k^{-1} \Big( \prod_j x_j^{[b_{jk}]_+} + \prod_j x_j^{[-b_{jk}]_+} \Big) & \text{if $i=k$ and $\b_k=0$} \\
x_k^{-1} \prod_j x_j^{[-b_{jk}]_+} & \text{if $i=k$ and $\b_k<0$} \\
x_i & \text{if $i\neq k$} \;. \end{cases}
\ee
under cluster dualities. These rules can be derived by taking the $u\to\infty$ limit of the standard transformation of cluster variables in (\ref{cluster algebra transformations}). One can verify that
\be
x_k'= i^{N_f(k) - N'_k - N_k} \, C_k(z_k) \, x_k^{-1} \, \prod\nolimits_j x_j^{[-b_{jk}]_+}
\ee
where $N_f(k) - N'_k - N_k = -[\beta_k]_+$. We now define the ``dressed $Q$-polynomials":
\be
\cQ_j (x) =  i^{N_k} x_j ~\!Q_j (x) \,.
\ee
The auxiliary variable $x$ should not be confused with the cluster variables $x_i$. The relation \eq{Qprime} can then be written as
\be
\cQ'_k (x) = \frac{\prod_j \cQ_j(x)^{[b_{jk}]_+} + \prod_j \cQ_j(x)^{[-b_{jk}]_+} }{\cQ_k(x)} \;.
\ee
For $i \neq k$, $\cQ_i(x)$ remains invariant, therefore
\be
\label{Q map}
\cQ'_i(x) = \begin{cases}
\cQ_k(x)^{-1} \Big( \prod_j \cQ_j(x)^{[b_{jk}]_+} + \prod_j \cQ_j(x)^{[-b_{jk}]_+} \Big) & \text{if $i=k$} \\
\cQ_i(x) &\text{otherwise} \;.
\end{cases}
\ee
This is exactly the transformation law of cluster variables with trivial coefficients. While the operator relations are nicely summarized by (\ref{Qprime}), some work must be done to recover the actual map of operators implied in this equation. As in the case of SQCD-like theories, the map between the operators of theories $\fA$ and $\fB$ can be obtained by expanding the equation \eq{QTCR} in the variable $x$ and comparing the coefficients.

\

The coefficients of the $Q$-polynomials generate the twisted chiral ring, but are subjected to the relations (\ref{QTCR}). Let us rewrite them as
\be
\label{vacuum equations}
\cG_i ( B, \eN,\zee;\Qyoo) \;\equiv\; \prod\nolimits_j Q_j (x)^{[b_{ij}]_+} + i^{N_a(i)-N_f(i)} z_i \prod\nolimits_j Q_j (x)^{[-b_{ij}]_+} \;=\; C_i (z_i)  \, Q_i (x) \, T_i (x) \;,
\ee
where $\Qyoo = (Q_i)$ denotes the array of $Q$-polynomials. These equations are to be solved for monic polynomials $Q_i(x)$ of degree $N_i$ and $T_i(x)$ of degree $\max\big( N_f(i), N_a(i) \big) - N_i$, where $i$ runs over the gauge nodes. Unfortunately, in general this is not the whole set of relations that define the twisted chiral ring. Meanwhile, in the dual theory $\fB$ the $Q$-polynomials $Q'_i(x)$ satisfy
\be
\label{vacuum equations B}
\cG_i ( B',\eN',\zee';\Qyoo' ) = C_i' (z_i')  \, Q_i' (x) \, T_i' (x) \,.
\ee
It is natural to expect that the twisted chiral rings of theories $\fA$ and $\fB$ are equivalent under \eq{Qprime}; while we do not have a proof, we present some evidence.

A solution $V = (Q_i)$ to the equations \eq{vacuum equations}, \ie{}, an array of polynomials, represents a Coulomb branch configuration $\s_i$ that is a critical point of the effective twisted superpotential $\widetilde W_\text{eff}$. Such a configuration can be safely interpreted as a Coulomb branch vacuum of the theory if all chiral multiplets and off-diagonal vector multiplets are massive around it \cite{Witten:1993yc}, because then the approximation used to derive the vacuum is self-consistent. Therefore, $V$ is surely a vacuum if all $Q_i(x)$ have non-degenerate roots and all pairs $Q_i(x)$, $Q_j(x)$ at adjacent nodes ($b_{ij} \neq 0$) do not have any common roots. We call such solutions, \emph{strong solutions}. On the other hand, if any one of the conditions is not met, a more refined analysis is necessary to determine whether such solutions represent true vacua of the theory: we call these \emph{weak solutions}.

We show that every strong solution $V$ to the equations (\ref{vacuum equations}) of theory $\fA$ are mapped to solutions
\be
V' = \mu_k (V) \,\equiv\, \big(\, \mu_k (Q_i) \, \big)
\ee
to the equations (\ref{vacuum equations B}) of theory $\fB$ under \eq{Qprime}. By construction, $V'$ solves \eq{vacuum equations B} at node $k$. Also, since $\cG_i$ remain unaltered for nodes $i$ that are not adjacent to $k$, $V'$ solves (\ref{vacuum equations B}) for all $i$ such that $b_{ik}=0$.

It remains to show that $V'$ solves (\ref{vacuum equations B}) for $i$ adjacent to $k$. For $i\neq k$, the polynomial $Q_i$ remains invariant under duality and hence is non-degenerate. Then, the equation (\ref{vacuum equations B}) at node $i$ is equivalent to the condition that every root of $Q'_i(x)$ is also a root of $\cG_i(B', \eN', \zee';V')$, \ie{}, we need to show that given an $\alpha$ such that $Q_i'(\alpha) = Q_i(\alpha) = 0$, it implies $\cG_i ( B',\eN',\zee';V') \big|_{x=\alpha}=0$. The equation (\ref{vacuum equations}) on the $\fA$ side already implies that
\be
\cG_i ( B,\eN,\zee;V) \big|_{x=\alpha}=0 \;.
\ee
It is useful to define the rational functions
\be
\cZ_j (x) = i^{N_a(j)-N_f(j)} z_j \prod\nolimits_l Q_l (x)^{b_{lj}}
=  \prod\nolimits_l \cQ_l (x)^{b_{lj}} \,.
\ee
From the last identity, we see that they transform precisely as the conformal K\"ahler coordinates $z_j$ do under cluster dualities, due to the transformation rules \eq{Q map} of the dressed $Q$-polynomials. Let us assume that $b_{ik} >0$. There are two cases we need to consider:
\begin{enumerate}[\, 1)]
\item When $Q_j(\alpha) \neq 0$ for all $b_{jk}<0$, the rational function $\cZ_k (x)$ has a zero at $x=\alpha$, \ie{},
\be
\label{cZk}
\cZ_k (\alpha) =0 \;.
\ee
Also, since
\be
\cG_i (B,\eN,\zee;V) \big|_{x=\alpha} = \prod\nolimits_j Q_j(\alpha)^{[-b_{ji}]_+} \big( 1 + \cZ_i (\alpha) \big) =0 \;,
\ee
and since $Q_j(\alpha) \neq 0$ for all nodes $j$ adjacent to $i$ because $V$ is a strong solution, it follows that
\be
\label{cZi}
\cZ_i (\alpha) = -1 \;.
\ee
Upon dualizing, it is simple to verify that
\be
\cG_i ( B',\eN',\zee';V') = \prod\nolimits_{j \neq k} Q_j(x)^{[-b_{ji}']_+} \Big(1 + \cZ_i (x) \big (1+\cZ_k (x) \big)^{b_{ik}} \Big) \;.
\ee
Equations \eq{cZk} and \eq{cZi} hence imply
\be
\cG_i ( B',\eN',\zee';V')\big|_{x=\alpha} =0 \,,
\ee
as desired.

\item When there exists a node $j$ with $b_{jk}<0$ such that $Q_j (\alpha)=0$, we find that
\be
\cG_k (B,\eN,\zee;V)\big|_{x=\alpha} =0 \;.
\ee
Since we have assumed that $V$ is a strong solution, $i$ and $j$ are not adjacent. Meanwhile, the vacuum equation at node $k$ implies that
\be
C_k (z_k) \, Q_k (\alpha) \, Q_k'(\alpha) =\cG_k (B,\eN,\zee;V)\big|_{x=\alpha} =0 \;.
\ee
Since we have assumed that $V$ is a strong solution, $Q_k (\alpha) \neq 0$: hence, $Q_k'(\alpha) =0$. In the dual theory $\fB$, $b'_{ik}=-b_{ik}<0$, while $b'_{ij} = b_{ij}+\sign (b_{ik}) [b_{ik} b_{kj}]_+ = b_{ik} b_{kj} >0$. Therefore, it follows that
\be
\cG_i ( B',\eN',\zee';V') \big|_{x=\alpha} = Q_k'(\alpha)^{b'_{ki}}\prod_{l \neq k} Q_l (\alpha)^{[b_{li}']_+} + Q_j(\alpha)^{-b'_{ji}}\prod_{l \neq j} Q_l (\alpha)^{[-b_{li}']_+}  = 0 \;,
\ee
as desired.
\end{enumerate}
An analogous procedure shows that $V'$ solves the equations at nodes $i$ with $b_{ik} <0$.

Some comments are in order. First, in case 2) presented above, a strong solution $V$ of $\fA$ is mapped to a weak solution $V'$ of $\fB$. Second, the assumption of a solution being strong plays a crucial role in the proof. In fact, it is possible to construct examples where the map \eq{Q map} of a weak solution does not produce a solution of $\fB$. A better understanding of weak solutions, hopefully, would lead to a proof of the equivalence of twisted chiral rings of cluster dual theories.%
\footnote{Let us give an example of what might happen. Consider $U(N)$ SQCD with $N_f = N+1$, $N_a = 1$ and tune $\tilde s_1 = s_1$. The equation in theory $\fA$ is $(x-s_1) \big( \prod_{F=2}^N (x-s_F) - i^N z \big) = Q(x) Q'(x)$, with $\deg Q = N$, $\deg Q'=1$. There is a strong solution where $Q(x)$ is not divisible by $(x-s_1)$ and $N$ weak solutions where it is. In fact, the weak solutions are not Coulomb branch vacua, but nevertheless represent vacua: they include a non-compact Higgs branch direction, and are non-normalizable. In the dual theory, the strong solution is mapped to a weak solution $Q'(x) = (x-s_1)$: this is, in fact, a massive vacuum on the Higgs branch, thanks to the superpotential $W_\text{dual}$. The weak solutions of $\fA$ are mapped to solutions of $\fB$ which are strong for what concerns the gauge node; there is, however, a single gauge singlet in each of these vacua that represents a non-compact flat direction. We would like to thank Kentaro Hori for helping us understand this example.}

\subsection{Superpotentials and single-trace chiral operators}
\label{ss:chiralring}

The map of superpotential terms under the dualities of quiver gauge theories is a straightforward generalization of the one for SQCD-like theories, presented in section \ref{sec: chiral ring}. Let us consider taking the cluster dual of a quiver theory at the $k$-th gauge node. We denote the bifundamental fields connecting two nodes $i \ra j$ as $q^{ij}_{\a}$ with $\a \in \big[ [b_{ij}]_+ \big]$. Let us assume that the superpotential of theory $\fA$, \ie{}, the theory before dualization, is given by
\be
W = W_0 \big( q^{ik}_\a q^{kj}_\b ,\, X_k \big) + W_1 (X_k ) \;,
\ee
where we have made use of the fact that $q^{ik}_{\a}$ and $q^{kj}_{\b}$ can only appear in the $U(N_k)$-invariant combinations $q^{ik}_\a q^{kj}_\b$.  We have denoted the set of all other fields, uncharged under node $k$, as $X_k$.

In theory $\fB$, the combinations of fields $q^{ik}_\alpha q^{kj}_\beta$ are substituted with the bifundamentals $M^{ij}_{\alpha\beta}$, with $(\alpha,\beta) \in \big[ [b_{ik}]_+ \big] \times \big[ [b_{kj}]_+ \big]$, which are singlets of $U(N_k)$. Also, the dual quarks $q'^{ki}_{\a}$ and $q'^{jk}_{\b}$ are added to the theory. The superpotential is then given by
\be
\widehat W' = \sum_{\a, \b}  \Tr \big( M^{ij}_{\a\b} \: q'^{jk}_\b \: q'^{ki}_\a \big) + W_0 \big( M^{ij}_{\a\b},\, X_k \big) + W_1 (X_k) \;.
\ee
We have put a hat on this superpotential, obtained by the prescription in (\ref{suppot complete}), to emphasize that we need to take a further step to complete the duality in the case of non-degenerate quiver theories: we need to integrate out bifundamentals that have a quadratic coupling in the superpotential---this is step 3) described at the end of section \ref{ss:CA}. Whenever a term $M^{ij}_\gamma q^{ji}_{\gamma'} \,\in\, W_0 \big( M^{ij}_{\a\b},\, X_k \big)$ is present in the superpotential, we solve the constraints
\be
\frac{ \p \widehat W' }{\p M^{ij}_\gamma} = \frac{\p \widehat W'}{\p  q^{ji}_{\gamma'}} = 0 \;,
\ee
and replace $M^{ij}_\gamma$ and $q^{ji}_{\gamma'}$ in $\widehat W'$ accordingly to obtain the superpotential $W'$ of theory $\fB$. This procedure is well-defined for superpotentials with single-trace as well as multi-trace terms. Note, however, that this map preserves the single-tracedness of the superpotential, \ie{}, if we start with a single-trace superpotential $W$, the procedure generates a single-trace superpotential $W'$ of the dual theory.

The transformation rule for the superpotential is consistent with the sphere partition function, paralleling the discussion of section \ref{sec: partition function}. The partition function does not depend on the chiral ring deformations---\ie{}, on the coefficients in the superpotential---but it does depend on the R-charges of chiral multiplets. Denoting by $r^{ij}_\alpha$ the R-charge of $q^{ij}_\a$, \eq{UNUNprimewithR} implies that the R-charges of $q'^{jk}_\beta$ and $q'^{ki}_\alpha$ are $(1-r^{kj}_\beta)$ and $(1-r^{ik}_\a)$ respectively, while $M^{ij}_{\alpha\beta}$ have R-charges $r^{ik}_\alpha + r^{kj}_\beta$ (up to mixing with the gauge charges). This is consistent with identifying the fields $M^{ij}_{\alpha\beta}$ with the mesonic operators $q^{ik}_\alpha q^{kj}_\beta$, as well as with the presence of the superpotential terms $\Tr ( M^{ij}_{\a\b} q'^{jk}_\b q'^{ki}_\a )$ which have R-charge 2. Moreover, we have seen that the procedure of integrating out pairs of massive chiral fields with quadratic superpotential terms is consistent with the partition function, as the one-loop determinants of such fields in the Coulomb branch integral cancel out up to a shift of theta angles.

We conclude by summarizing some results of \cite{Derksen1} with regard to the mapping of the single-trace chiral ring under dualities. This has been discussed in the mathematics literature, from the point of view of the path algebra and the Jacobian algebra of quiver diagrams with potentials, in \cite{Derksen1, Derksen2}. A result of \cite{Derksen1} is that the ``deformation spaces" of two quivers with potentials are isomorphic under cluster mutations (Proposition 6.9). The cluster mutation rules of quivers and potentials are given precisely by the dualities we have described (Definition 5.5). The deformation space (Definition 5.7) is defined as the space of directed loops in the quiver modulo constraints obtained from the potential, endowed with some algebraic structure. This is precisely the single-trace chiral ring: it consists of all single-trace chiral operators $\Tr \big( q^{i_1 i_2}_{\a_1} ~\!q^{i_2 i_3}_{\a_2}~\! \cdots ~\!q^{i_n i_1}_{\a_n} \big)$, which can be thought of as directed loops in the quiver, subject to the set of constraints $\p W / \p q^{ij}_\a =0$. The result of \cite{Derksen1} is restrictive in the sense that only single-trace operators---which are neutral under flavor charges---are allowed to be present in the potential. It would be interesting to see if their result can be extended to more general potentials and operators.

\subsection{Non-degenerate quiver theories}
\label{sec: nondegenerateFTs}

Let us come back to the question whether a given quiver gauge theory with a superpotential is such that all 2-cycles that are generated for any arbitrary chain of Seiberg-like dualities are cancelled---in other words, whether every dual theory obtained by successive applications of Seiberg-like dualities has enough quadratic terms in the superpotential that, after integrating out the chiral multiplets involved, no oriented 2-cycles are left. It is relatively easy to write down the condition for which this happens after a single Seiberg-like duality, as it turns out to be a condition on the cubic terms in the superpotential. It is, however, quite nontrivial to make sure that this happens indefinitely, \ie{}, after an arbitrary sequence of cluster dualities.

Following the mathematics literature, we call a quiver gauge theory with a superpotential \emph{non-degenerate} if it has the property described above, and we call an R-symmetry preserving superpotential that does the job a \emph{non-degenerate graded potential}. We also say that a quiver diagram is non-degenerate when the corresponding quiver gauge theory admits a non-degenerate graded potential.

Non-degeneracy is a crucial property, if we wish to make contact between Seiberg-like dualities of 2d quiver gauge theories and cluster algebras. This is because, if the multiplets forming a 2-cycle cannot be integrated out in some duality frame, cluster duality is not applicable to that particular dual theory, as the quiver diagram is not of cluster type. If we proceed to perform a Seiberg-like duality on this theory with respect to one of the two nodes along the 2-cycle, the quiver obtained develops a 1-cycle on the other node: an adjoint chiral multiplet. We cannot apply a further Seiberg-like duality on the node with a 1-cycle, since such a duality is not defined in this case (although a duality of a different nature might exist). Thus the mathematics of cluster algebras have limited applicability for degenerate theories.

It is possible to construct examples of degenerate quivers, \ie{}, quivers that do not admit any non-degenerate graded potentials (see \eg{}, remark 4.41 in \cite{2013arXiv1307.3379D}), and therefore non-degeneracy is a nontrivial condition.%
\footnote{While there is no obstruction to applying mutations indefinitely to a degenerate quiver from the point of view of cluster algebra, the cluster algebra structure obtained in this way cannot be reproduced by a 2d gauge theory.}
Unfortunately, it is an open problem to classify non-degenerate quivers and non-degenerate theories. There are, however, many classes of quivers which are known to admit non-degenerate superpotentials, a list of which can be found in \cite{2013arXiv1307.3379D}. For instance, all acyclic quivers with vanishing superpotential are non-degenerate; then, all quivers that can be made acyclic by mutations are non-degenerate, and their non-degenerate superpotential can be constructed via cluster dualities.

An intriguing entry in the list is the class of quivers that are dual to ideal triangulations of surfaces with marked points \cite{Labardini, Geiss}. Such quivers and their mutations have frequently appeared in the study of class $\mathcal{S}$ theories \cite{Gaiotto:2009hg, Alim:2011ae, Alim:2011kw}, which are a class of four-dimensional $\cN=2$ theories obtained by wrapping M5-branes on punctured Riemann surfaces \cite{Gaiotto:2009we} (see also \cite{Benini:2009gi, Gaiotto:2009hg}). It would be interesting to understand whether there is a connection between the two-dimensional gauge theories associated to these quivers and the corresponding four-dimensional theories.

\section{Some geometric implications}
\label{s:applications}

The dualities of two-dimensional $\cN=(2,2)$ quiver gauge theories, possibly with superpotentials, have some interesting geometric implications. This is because the gauge theory can flow in the IR to a non-linear sigma model (NLSM) on a subvariety of a K\"ahler quotient manifold. The complexified FI terms, or equivalently the K\"ahler coordinates $z_i$, are coordinates on the ``extended K\"ahler moduli space" of the manifold. In the case when the IR theory is conformal, the moduli space extends to include those of manifolds related by flops or other transitions, which are different large volume limits of the same conformal field theory. The cluster algebra mutations give rise to different gauged linear sigma model (GLSM) descriptions of the same geometry, as well as to special birational changes of coordinates on the K\"ahler moduli space.

As hinted above, the theories that flow to a fixed point in the IR are particularly interesting: in this case, the IR geometry is a (possibly non-compact) CY manifold, and the K\"ahler coordinates parametrize its K\"ahler moduli space. The sphere partition function $Z_{S^2}$ can be used to compute the quantum corrected metric on this moduli space \cite{Jockers:2012dk, Gerchkovitz:2014gta}---in fact, it produces the K\"ahler potential of the Zamolodchikov metric in the CFT. Our result implies that this metric is the same in all different GLSM descriptions as long as we transform the coordinates according to cluster algebra mutation rules.%
\footnote{In the non-conformal case, the mutation of a node with $|N_f - N_a|=1$ leads to a correction of the K\"ahler potential that should be taken into account.}

\begin{figure}[t]
\centering\includegraphics[width=12cm]{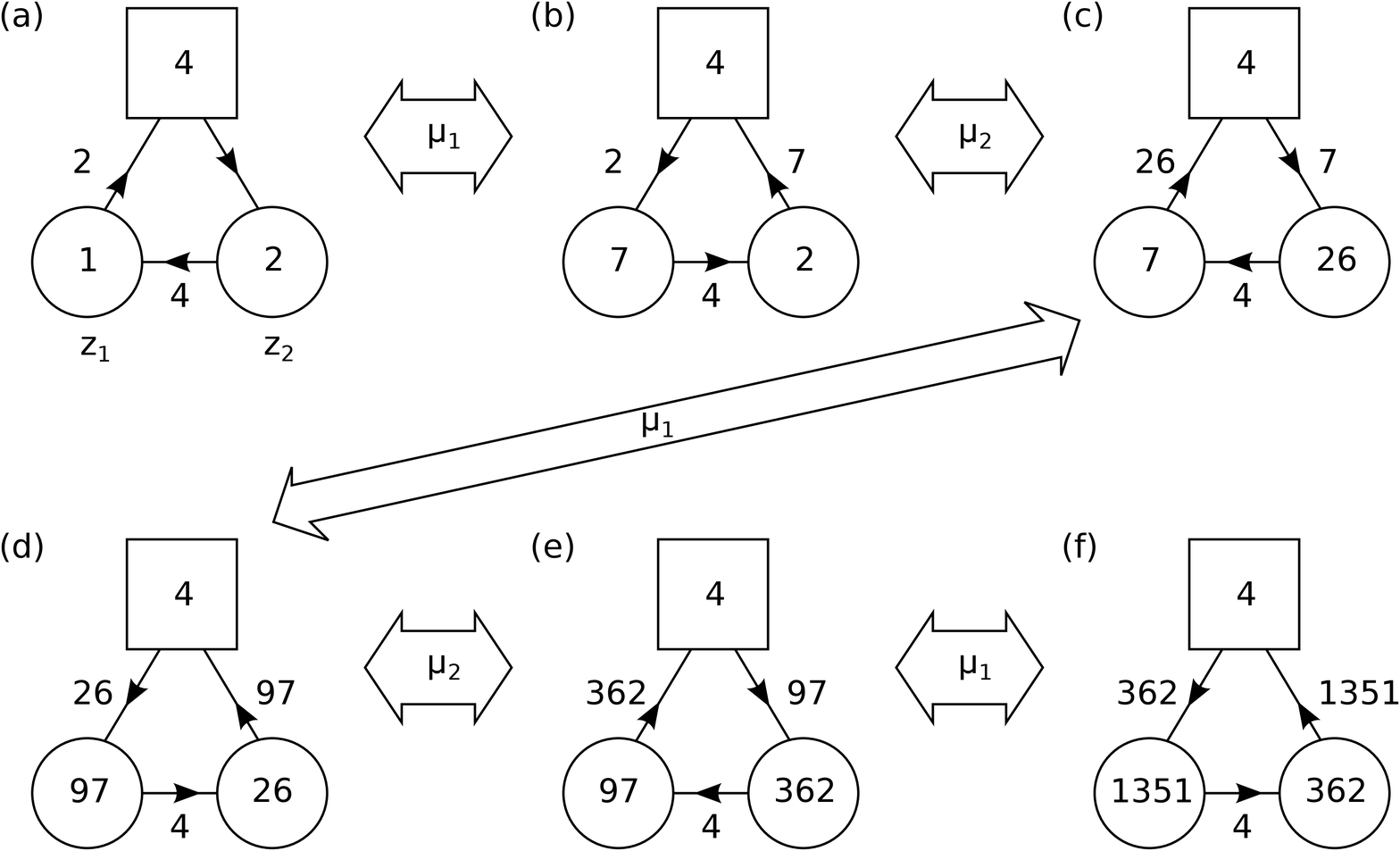}
\caption{Cluster mutations on the quiver diagram for the Gulliksen-Neg\r{a}rd Calabi-Yau threefold. The numbers next to the arrows denote the multiplicity. Figure (a) is the GLSM model proposed in \cite{Jockers:2012zr}, while the other ones are obtained via mutation on the left ($\mu_1$) or the right ($\mu_2$) node.
\label{f:GN}}
\end{figure}

The cluster algebra structure of the K\"ahler moduli space implies many interesting properties. For instance, it implies that there is a natural (possibly degenerate) Poisson structure \cite{gekhtman2003cluster} on the moduli space, defined by
\be
\{z_i, z_j \} = z_i z_j b_{ij} \;,
\ee
which can be extended to Laurent polynomials by the Leibniz rule and to analytic functions by continuity. It follows that
\be
\{f,g\} = \sum_{k,l} \frac{\partial f}{\partial z_k} \frac{\partial g}{\partial z_l} z_k z_l b_{kl} \;,\qquad\qquad \{\log z_i, \log z_j \} = b_{ij} \;.
\ee
This equation implies that $\log z_i$ can be interpreted as a set of ``canonical coordinates," although $b_{ij}$ can be degenerate. This Poisson structure is compatible with the cluster algebra structure, in the sense that the Poisson bracket is invariant under mutation of $b_{ij}$ and $z_i$ (\ref{cluster algebra transformations})-(\ref{zrule}): $\{ z_i'(z), z_j'(z) \}_{b_{kl}} = z_i' z_j' b_{ij}'$. Therefore, mutations can be thought of as canonical transformations.

There are other properties of the K\"ahler moduli space following from cluster algebra, which we do not know how to take full advantage of. One is total positivity: there is a canonical way of restricting the coordinates $z_i$ to $\bR_{>0}$, in the sense that performing the restriction in any frame would yield the same submanifold of the moduli space, as is apparent from (\ref{zmapconformal}). From the field theory point of view, this means that there is a canonical choice of theta angles.

One could be worried that the CY geometries that arise from quiver gauge theories with R-symmetric superpotentials might be uninteresting. For example, one can wonder whether there are compact Calabi-Yau manifolds in the list. The answer is yes: an illustrative example is the Gulliksen-Neg\r{a}rd Calabi-Yau threefold \cite{Gulliksen}, which is studied from the GLSM point of view in \cite{Jockers:2012zr,Jockers:2012dk}. This CY threefold can be engineered by a $U(1) \times U(2)$ gauge theory whose matter content consists of eight fundamentals of $U(1)$, four fundamentals of $U(2)$ and four bifundamentals, coupled by a cubic superpotential. The Hodge numbers $(h^{1,1}, h^{2,1})= (2,34)$ imply that there are two K\"ahler parameters $z_{1,2}$ identified with the K\"ahler coordinates in the GLSM. The quiver is depicted in figure \ref{f:GN} (a). We thus learn that the K\"ahler moduli space of this CY has a cluster algebra structure. In fact, in this example, we obtain infinitely many distinct GLSM dual descriptions by successive applications of cluster dualities. We have depicted some of these GLSMs in figure \ref{f:GN}, and listed their K\"ahler coordinates as functions of the original ones in table \ref{t:GN}: after a few mutations, they soon become quite intricate.

\begin{table}[t]
\center
\begin{tabular}{ | c | c | c | c |}
\hline
  & mutations & \multicolumn{2}{|c|}{K\"ahler Coordinates} \\ \hline\hline
  (a) & $\cdot$ & $z_1$ & $z_2$  \\ \hline
  (b) & $\mu_1$ & $z_1^{-1}$ & $ z_2 (1+z_1)^4$  \\ \hline
  (c) & $\mu_2 \mu_1$ & $ z_1^{-1} \big( 1+z_2(1+z_1)^4 \big)^{4 \rule{0pt}{.6em}} $ & $z_2^{-1}(1+z_1)^{-4}$ \\ \hline
  (d) & $\mu_1 \mu_2 \mu_1$ & $ z_1 \big( 1+z_2(1+z_1)^4 \big)^{-4 \rule{0pt}{.6em}}$ &
  $ z_2^{-1}(1+z_1)^{-4} \Big( 1 + z_1^{-1} \big( 1+z_2 (1+z_1)^4 \big)^4 \Big)^{4 \rule{0pt}{.6em}}$ \\ \hline
\end{tabular}
\caption{Map of K\"ahler coordinates of the Gulliksen-Neg\r{a}rd CY threefold under cluster dualities.
\label{t:GN}}
\end{table}

The cluster algebra structure has also interesting implications for the topology of the moduli space of certain Calabi-Yau manifolds. A cluster algebra is said to be of \emph{finite mutation type} if there is a finite number of exchange matrices in all seeds: only a finite number of quiver diagrams appear upon applying arbitrary sequences of mutations. Then mutations define an automorphism of $\bC^m$ (where $m$ is the number of gauge nodes), since certain sequences of mutations map a quiver diagram to itself, with a non-trivial transformation of the cluster variables and K\"ahler coordinates $z_i$: then the actual moduli space of K\"ahler parameters of the field theory is the quotient of $\bC^m$ by the automorphism group, and the group defines a tessellation of $\bC^m$. For example, quiver diagrams constructed from ideal triangulations of Riemann surfaces with punctures are of this nature. It would be interesting to see if this structure has any physical consequence, for the phase structure and singular points of the moduli space, and so on.

Finally, let us suggest an application of quantum field theory to the theory of cluster algebras. As described in section \ref{sec: dualities}, if a gauge node in a quiver has more colors than flavors, \ie{}, if
$$
\max\big( N_f(i), N_a(i) \big) < N_i
$$
for some gauge node $i$, then the theory breaks supersymmetry. An attempt to apply the cluster mutation $\mu_i$ would generate a node with negative rank. Then, given a quiver with ranks which can be represented by elements $\mathsf{n}_i \in \bP$ transforming as in (\ref{ranks mutation}) under mutations, it is an interesting combinatorial problem to determine whether the supersymmetry of the theory is broken in this way, \ie{}, whether there is a duality frame where some node has negative rank. ``Supersymmetry breaking" as defined above is a mutation-invariant property, that, to our knowledge, has not been defined or utilized in the mathematical literature. We note that a necessary condition for the defined supersymmetry breaking is that
$$
Z_{S^2} = 0 \;.
$$
This is a mutation invariant condition! We thus see that the sphere partition function provides a useful handle on the mathematical problem of discerning whether a quiver with given ranks can develop any negative rank nodes upon mutations.

\section{$\cN = (2,2)^*$ dualities}
\label{sec: 22star}

We conclude by discussing some dualities, similar to the Seiberg-like (or cluster) dualities we considered so far, of $\cN=(2,2)^*$ quiver gauge theories. In the case of quivers with a single gauge node, these dualities have been analyzed by Nekrasov and Shatashvili \cite{Nekrasov:2009uh, Nekrasov:2009ui} in the context of the ``Gauge/Bethe correspondence," and were called ``Grassmannian dualities"; in fact, the avatars of such dualities are also known in the integrable systems literature \cite{Gromov:2007ky, Bazhanov:2010ts}. Here, we discuss how these dualities are extended to general quivers, as well as the map of parameters and the equality of $S^2$ partition functions under the dualities.%
\footnote{We mention that some dualities between certain quivers have been considered in \cite{Orlando:2010uu}. They can be understood as examples of the Seiberg-like dualities of section \ref{sec: dualities}.}
We stress that the $\cN=(2,2)^*$ dualities are \emph{not} special cases of the Seiberg-like dualities of section \ref{sec: dualities}, although geometrically they are tightly related. In particular, the latter can be embedded into the former in some limit, as we explain later on. A more thorough analysis of this relation will appear elsewhere.

\begin{figure}[t]
\centering\includegraphics[width=10cm]{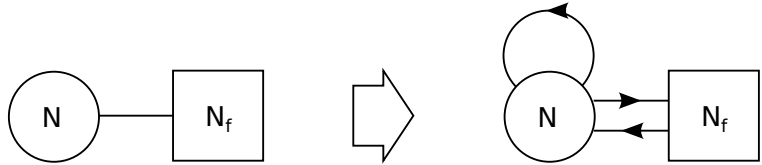}
\caption{Left: diagram representing an $\cN=(4,4)$ quiver gauge theory. Notice that lines are unoriented since they represent hypermultiplets. Right: quiver diagram of the same theory, thought of as an $\cN=(2,2)$ theory. Each gauge node has an adjoint chiral field, while bifundamental hypermultiplets break into pairs of chiral multiplets.
\label{fig: 44to22star}}
\end{figure}

The $\cN=(2,2)^*$ theories can be obtained from $\cN=(4,4)$ gauge theories by softly breaking the supersymmetry by turning on a twisted mass for a particular R-symmetry of the theory. So let us first present some relevant facts about quiver gauge theories with $\cN=(4,4)$ supersymmetry and gauge group $U(N_1) \times \cdots \times U(N_m)$. The matter content is organized into bifundamental hypermultiplets of the gauge and flavor factors, and can be represented by a diagram with gauge (circles) and flavor (squares) nodes connected by unoriented lines, as in figure \ref{fig: 44to22star} on the left. Letting $n$ be the total number of gauge and flavor nodes, we can introduce an $n\times n$ symmetric matrix $C = c_{ij}$ whose non-negative entries are the number of bifundamental hypermultiplets between node $i$ and $j$. We set the diagonal entries of $C$ to vanish. The $\cN=(4,4)$ vector multiplet decomposes into an $\cN=(2,2)$ vector multiplet and an adjoint chiral multiplet, which we denote $\Phi_i$, for each node $i$; a hypermultiplet charged under the nodes $i$ and $j$ decomposes into two chiral multiplets $q_{ij}$ and $q_{ji}$ in conjugate representations. The corresponding quiver diagram is depicted in figure \ref{fig: 44to22star} on the right. For each gauge node $i$, the terms $\sum_\a \Tr \big( q^\a_{ji}~\!\Phi_i~\! q^\a_{ij} \big)$ are added to the superpotential, where $\a$ parametrizes all the hypermultiplets charged under $U(N_i)$. Since no other superpotential is allowed by $\cN=(4,4)$ supersymmetry (with the exception of complex mass terms), the diagram completely specifies the theory.

These theories have a global $U(1)_\Phi$ symmetry, part of the R-symmetry, under which the adjoint scalars $\Phi_i$ have charge $-1$ while the bifundamental chiral multiplets have charge $\frac12$. The $\cN=(2,2)^*$ theories are obtained by turning on a complex twisted mass $s_\Phi$ associated to this symmetry, thereby breaking $(4,4)$ to $(2,2)$ supersymmetry. We can introduce the parameters
\be
v_+ = i s_\Phi \;, \qquad\qquad v_- = i\bar s_\Phi \;,
\ee
which are useful in expressing the sphere partition function and the twisted chiral ring relations.

The $\cN=(2,2)^*$ theories enjoy dualities defined with respect to a gauge node $k$. They leave the (quiver) diagram and the $U(1)_\Phi$ charges invariant, while modifying the gauge group ranks simply as
\be
\label{22starrankmap}
N_i'= \begin{cases} \sum_j c_{kj} N_j  - N_k \qquad &\text{if $i=k$} \\ N_i &\text{otherwise} \;.
\end{cases}
\ee
The exponentiated FI terms, that we denote by
\be
\zeta_i = e^{-t_i} \;,
\ee
to avoid confusion with the parameters $z_i$ (which include suitable signs), are mapped according to the rules:
\be
\label{22starFImap}
\zeta_i'= \begin{cases} \zeta_k^{-1} &\text{if $i=k$} \\ \zeta_i \prod_j \zeta_j^{c_{ij}} \quad &\text{otherwise} \;. \end{cases}
\ee
In the case of a single gauge node, the basic $\cN=(2,2)^*$ duality is depicted in figure \ref{fig: 22starbasic}. Notice that these dualities do not have a direct interpretation in cluster algebra.

\begin{figure}[t]
\centering\includegraphics[width=10cm]{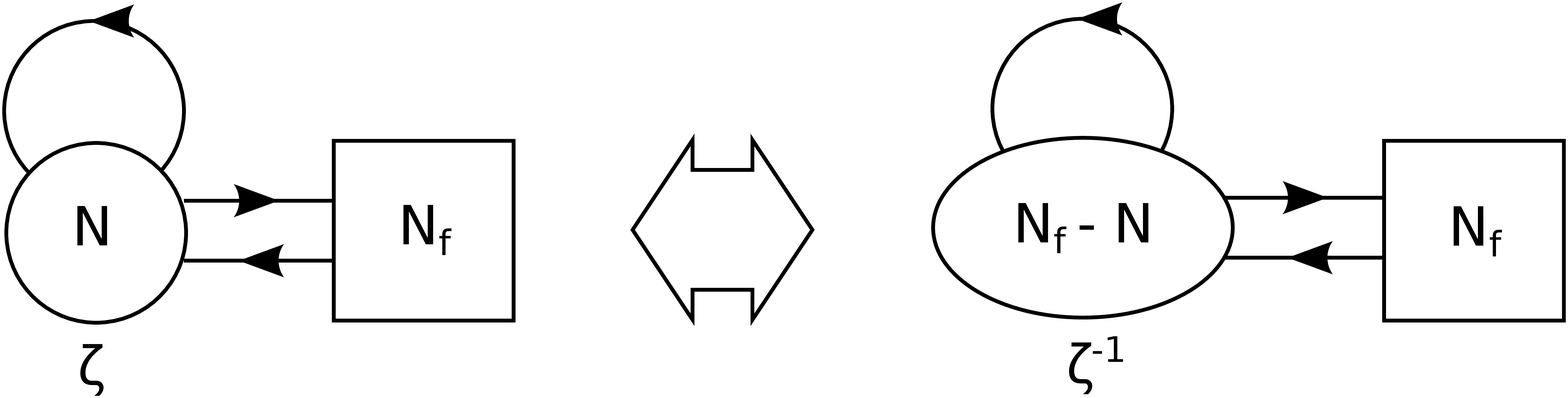}
\caption{The basic $\cN=(2,2)^*$ duality between a pair of single-node gauge theories with $N_f$ flavors. The K\"ahler coordinates $\zeta_i$ are indicated.
\label{fig: 22starbasic}}
\end{figure}

In the case of the single-node theory in figure \ref{fig: 44to22star} and \ref{fig: 22starbasic}, \ie{} $\cN=(4,4)$ or $\cN=(2,2)^*$ SQCD, the duality has a simple geometric interpretation. At low energies the GLSM flows%
\footnote{More precisely, $\cN=(4,4)$ SQCD leads to two different CFTs \cite{Witten:1997yu}, one on the Higgs branch which is the hyper-K\"ahler NLSM we are talking about here, and one on the Coulomb branch.}
to the NLSM of $T^* \text{Gr}(N,N_f)$, the total space of the cotangent bundle to the Grassmannian, possibly deformed by equivariant parameters for the global symmetry $U(1)_\Phi \times SU(N_f)$. The duality simply expresses the identity of spaces $T^* \text{Gr}(N,N_f) = T^* \text{Gr}(N_f - N, N_f)$. The $\cN=(2,2)^*$ dualities also have an interpretation in terms of integrable systems \cite{Nekrasov:2009uh, Nekrasov:2009ui}, and the map of twisted chiral ring operators can be obtained from there \cite{Bazhanov:2010ts}.

Let us consider the duality for $\cN=(2,2)^*$ SQCD as in figure \ref{fig: 22starbasic}. We denote the $N_f$ complexified twisted mass parameters associated to the flavors as $s_F$ and then introduce $\S_{F\pm}$ as in (\ref{def Sigma+-}), possibly shifted by $\Sigma_{F\pm} \to \Sigma_{F\pm} + \frac{r_F}2$ according to the R-charges $r_F$ of the fundamentals. We also redefine
\be
v_+ = i \re s_\Phi + \frac{\fm_\Phi}2 + \frac{r_\Phi}2 - 1 \;,\qquad\qquad v_- = i \re s_\Phi - \frac{\fm_\Phi}2 + \frac{r_\Phi}2 - 1
\ee
where $\fm_\Phi$ is the quantized magnetic flux associated to $U(1)_\Phi$ equal to the imaginary part of the twisted mass, and $r_\Phi$ is the R-charge of $\Phi$. Notice that, with respect to the notation of section \ref{sec: partition function}, $\Sigma_{\Phi\pm} \equiv 1+v_\pm$. When the theory flows to a fixed point, $v_\pm = 0$ corresponds to the Higgs branch CFT while $v_\pm = - 1$ corresponds to the Coulomb branch CFT. The sphere partition function is computed by the integral:
\begin{multline}
Z^{(4,4)}_{U(N)} (\S_{F\pm},v_\pm; \zeta) = \frac1{N!} \sum_{\m_I  \in \mathbb{Z}^N} \int \prod_{I=1}^N \frac{d\s_I}{2\pi} \, \big( e^{i\pi(N-1)} \zeta \big)^{\s_{I+}} \big( e^{-i\pi(N-1)} \bar \zeta \big)^{\s_{I-}} \;\cdot\; \prod_{I<J} \big( {-\S^I_{J+} \S^I_{J-}} \big) \\
\times \prod_{I,J}  \frac{\Ga \big( 1-\S^I_{J+} + v_+ \big)}{\Ga \big( \S^I_{J-} - v_- \big)} \;\cdot\; \prod_{I, F} \frac{\Ga \big( {-\S^I_{F+} - \frac{v_+}2} \big) ~ \Ga \big( \S^I_{F+} - \frac{v_+}2 \big) }{ \Ga \big( 1+\S^I_{F-} + \frac{v_-}2 \big) ~ \Ga \big( 1-\S^I_{F-} + \frac{v_-}2 \big) } \;.
\end{multline}
We have assigned appropriate R-charges to the fields compatible with the superpotential; different R-charge assignments simply correspond to improvement transformations. The integral is invariant under the charge-conjugation map:
\be
Z^{(4,4)}_{U(N)} (\S_{F\pm}, v_\pm; \zeta) = Z^{(4,4)}_{U(N)} ( - \S_{F\pm}, v_\pm; \zeta^{-1}) \;.
\ee
The integral can be written in terms of vortex partition functions:
\be
\label{22stardecomp}
Z^{(4,4)}_{U(N)} (\S_{F\pm},v_\pm;\zeta) = \sum_{\vF \in C(N,N_f)} \cZ_0^\vF \, \cZ_+^\vF \, \cZ_-^\vF \;.
\ee
Just as in the $\cN=(2,2)$ case, the vector $\vF \in C(N,N_f)$ labels vortex sectors; $\cZ^\vF_0$ encodes the contribution of the classical and one-loop piece to the partition function in a given sector:
\be
\cZ^\vF_0 = (-1)^{\frac{N(N+1)}2 \fm_\Phi} \, \zeta^{\sum\limits_I \S_{F_I+} - \frac N2 v_+} \; \bar \zeta^{\sum\limits_I \S_{F_I-} - \frac N2 v_-} \, \prod_{I,\Id} \frac{ \Ga \big( {-\S^{F_I}_{F^c_\Id+} } \big) ~ \Ga \big( \S^{F_I}_{F^c_\Id+} - v_+ \big) }{ \Ga\big( 1+\S^{F_I}_{F^c_\Id-} \big) ~ \Ga\big( 1-\S^{F_I}_{F^c_\Id-} + v_- \big) } \;.
\label{Z022star}
\ee
Then $\cZ^\vF_\pm$ are the vortex partition functions:
\be
\cZ^\vF_+ = \cZ^\vF \big( \S_{F+},v_+;\, \zeta \big) \;, \qquad\qquad \cZ^\vF_-= \cZ^\vF \big( \S_{F-},v_-;\, \bar \zeta \big) \;,
\ee
with \cite{Benini:2012ui}
\be
\label{22starVPF}
\cZ^\vF \big( \S_F,v;\zeta \big) = \sum_{n\geq 0} \zeta^n \sum_{\sabs{(n_I)}=n} \prod_I \frac{ \prod_J \big( \S^{F_I}_{F_J} - v + n_I - n_J \big)_{n_J} \, \prod_\Jd \big( \S^{F_I}_{F^c_\Jd} - v \big)_{n_I} }{ \prod_J \big( {-\S^{F_I}_{F_J}-n_I} \big)_{n_J} \, \prod_\Jd \big( {-\S^{F_I}_{F^c_\Jd}-n_I} \big)_{n_I} } \;,
\ee
where $(n_I)$ are $N$-tuples of non-negative integers.

Notice the special cases:
\be
\cZ^\vF(\Sigma_F,0;\zeta) = 1 \;,\qquad\qquad \cZ^\vF(\Sigma_F,-1;\zeta) = \Big( 1 - (-1)^{N_f} \zeta \Big)^{-N} \;.
\ee
For $v_\pm = 0$ (Higgs branch CFT) the vortex partition function is trivial, as there are no instanton corrections, while
\be
\cZ_0^\vF(\Sigma_{F\pm}, 0; \zeta) = \zeta^{\sum\limits_I \S_{F_I+}} \, \bar \zeta^{\sum\limits_I \S_{F_I-}} \prod_{I,I'} \frac{(-1)^{\fm_{F_I} - \fm_{F_{I'}}} }{ \big( {-\Sigma^{F_I}_{F^c_{I'}+} \Sigma^{F_I}_{F^c_{I'}-} } \big)} \;.
\ee
For $v_\pm = -1$ we find that
\be
\label{zvf when z=-1}
\cZ_0^\vF(\Sigma_{F\pm}, -1; \zeta) = \zeta^{\sum\limits_I \S_{F_I+} + \frac N2} \, \bar \zeta^{\sum\limits_I \S_{F_I-} + \frac N2} \prod_{I,I'} (-1)^{\fm_{F_I} - \fm_{F_{I'}}} \;,
\ee
while the vortex partition function is independent of the masses.%
\footnote{Notice that the original integral expression for the sphere partition function is ill-defined in this case, and for $U(1)$ formally gives $\Gamma(0) \delta\big( \log(-1)^{N_f}\zeta \big) \delta\big( \log(-1)^{N_f}\bar\zeta \big)$.}

It turns out that the $\cN=(2,2)^*$ vortex partition function satisfies the following identity:
\be
\label{22star VPF identity}
\cZ^\vF (\S_F,v;\zeta) = \Big( 1-(-1)^{N_f} \zeta \Big)^{(2N-N_f)v} \; \cZ^\vFc (-\S_F,v;\zeta) \;,
\ee
motivated below.%
\footnote{This identity has also been noticed in \cite{Honda:2013uca}.}
The one-loop piece satisfies
\begin{multline}
\cZ_0^\vF \big( \Sigma_{F\pm}, v_\pm;\zeta\big) = (-1)^{\frac{(2N-N_f)(N_f+1)}2 \fm_\Phi} \; \zeta^{\sum_F \Sigma_{F+} + \frac12(N_f-2N)v_+} \; \bar\zeta^{\sum_F \Sigma_{F-} + \frac12 (N_f - 2N)v_-} \\
\times \cZ_0^{\vF^c} \big( { -\Sigma_{F\pm}, v_\pm ; \zeta} \big) \;,
\end{multline}
which follows from the definition \eq{Z022star}. Exploiting the charge conjugation map, we arrive at the following identity of partition functions:
\begin{multline}
\label{22starZdual}
Z^{(4,4)}_{U(N)} (\S_{F\pm},v_\pm;\zeta) = \\
= (-1)^{\frac{(2N-N_f)(N_f+1)}2 \fm_\Phi} \; \Big( 1-(-1)^{N_f} \zeta \Big)^{(2N-N_f)v_+} \; \Big( 1-(-1)^{N_f} \bar\zeta \Big)^{(2N-N_f) v_-} \\
\times \zeta^{\sum\limits_F \S_{F+} + \frac12 (N_f-2N)v_+} \;\; \bar\zeta^{\sum\limits_F \S_{F-} + \frac12(N_f-2N) v_-} \;\; Z^{(4,4)}_{U(N_f-N)} (\S_{F\pm},v_\pm ; \zeta^{-1}) \;.
\end{multline}
This equality confirms the duality between the $U(N)$ and $U(N_f-N)$ theories; moreover, it detects the contact terms responsible for the map of FI parameters of neighboring nodes. Keeping only the factors that have a dependence on the twisted masses associated to the flavor symmetry, we can write
\be
Z^{(4,4)}_{U(N)} \big( \S_{F_\pm} ,v_\pm ;\zeta) \;\simeq\; \zeta^{\sum\limits_F \S_{F+}} \; \bar\zeta^{\sum\limits_F \S_{F-}} \; Z^{(4,4)}_{U(N_f-N)} \big( \S_{F\pm},v_\pm;\zeta^{-1} \big) \;.
\ee
As in the $\cN=(2,2)$ case, this relation can be used to derive the duality maps of general $\cN=(2,2)^*$ quiver theories.

Although we do not have a full analytic proof of the vortex partition function identity \eqref{22star VPF identity}, we have verified it up to high order for a range of values of $N_f$ and $N$. Before we explain how this relation is motivated, we note that it should be possible to prove it by taking a similar strategy as in the $\cN=(2,2)$ case. The coefficient of $\zeta^n$ in (\ref{22starVPF}) can be given an integral representation:
\begin{multline}
\nn
\frac{(-1)^n}{n!} \int_\cC \prod_{\a=1}^n \frac{d \varphi_\a}{2\pi i} \, \frac{ \prod_I \big( {- \varphi_\a + \S_{F_I} - v - 1} \big) \, \prod_\Id \big( \varphi_\a - \S_{F^c_\Id} - v \big) }{ \prod_I \big( \varphi_\a - \S_{F_I} \big) \, \prod_\Id \big( {- \varphi_\a+\S_{F^c_\Id}  - 1} \big) } \\
\times \prod_{\a < \b}^n \frac{ (\varphi_\a -\varphi_\b)^2 }{ (\varphi_\a -\varphi_\b)^2 -1} \;\cdot\; \prod_{\a \leq \b}^n \frac{ (\varphi_\a -\varphi_\b)^2 - v^2 }{ (\varphi_\a -\varphi_\b)^2 - (v+1)^2} \;.
\end{multline}
The contour is taken to encircle the poles of the integrand situated at $(\varphi_\a)$ that satisfy
\be
\{  \varphi_\a \}  = \bigcup_{I=1}^N \, \Big\{ \S_{F_I},\; \S_{F_I}+1,\ldots,\; \S_{F_I}+ n_I-1 \Big\}
\ee
for some non-negative $(n_I)$ such that $\sum_I n_I = n$. A convenient way to construct such a contour is to prescribe $0 < \re \Sigma_{F_I} < 1$ and $1 + \re v < 0$, and then take $\cC$ to be a product of contours winding counterclockwise along the imaginary axis and closed in the right half-plane for each of the $\varphi_\alpha$. Any other range of parameters is obtained by analytic continuation. By ``flipping'' this contour in $\mathbb{C}^n$ one gets a contour that encircles the poles at $(\varphi_\a)$ with
\be
\{  \varphi_\a \}  = \bigcup_{I'=1}^{N_f-N} \, \Big\{ \S_{F^c_\Id} - 1,\; \S_{F^c_\Id}-2,\ldots,\; \S_{F^c_\Id} - n'_\Id \Big\}
\ee
for non-negative $(n'_{I'})$ such that $\sum_{I'} n'_{I'} = n$; the generating function for these coefficients is in fact given by $\cZ^\vFc ( -\S_{F},v;\zeta)$. One, however, also gets contributions from poles at infinity, which should be properly accounted for in order to relate $\cZ^\vF (\S_F,v;\zeta)$ to $\cZ^\vFc (-\S_F,v;\zeta)$.

To get the full relation,%
\footnote{Another piece of information comes from the case $N = N_f=1$, which can be easily resummed: $\cZ(\Sigma_F,v;\zeta) = (1+\zeta)^v$. This suggests the form of the general relation.}
we can exploit the operator map between the twisted chiral rings of the dual theories. The twisted chiral ring of the original theory can be derived in the usual way by evaluating the effective twisted superpotential on the Coulomb branch, leading to the vacuum equations:
\be
\label{22star TCR}
P\big( x - \tfrac{iv}2 \Big) \, Q\big( x + iv\big) + q\, P\big( x + \tfrac{iv}2 \big)\, Q\big( x - iv \big) = (1+q)\, Q(x) \, T(x) \;,
\ee
where
\be
P(x) = \prod_{F=1}^{N_f} (x-s_F) \;,\qquad Q(x) = \det (x-\sigma) \;,\qquad q = (-1)^{N_f}\zeta \;,\qquad \deg T(x) = N_f \;.
\ee
The equation has to be solved for $Q(x)$ and a monic polynomial $T(x)$ of degree $N_f$. It can be proven with elementary methods \cite{Gromov:2007ky} that, given a $Q(x)$ that solves the equation above, there exists a unique monic polynomial $Q'(x)$ of degree $N_f - N$ such that
\be
\label{22star operator map}
P(x) = \frac1{1-q} \Big[ Q\big( x + \tfrac{iv}2 \big)\, Q'\big( x - \tfrac{iv}2 \big) - q\, Q\big( x - \tfrac{iv}2 \big) \, Q'\big( x + \tfrac{iv}2 \big) \Big] \;.
\ee
The converse is also true: given $Q(x)$ and $Q'(x)$ that solve this equation, $Q(x)$ solves (\ref{22star TCR}) with
\be
T(x) = \frac1{1-q^2} \Big[ Q\big( x + iv \big)\, Q'\big( x - iv \big) - q^2 \, Q\big( x -iv\big) \, Q'\big( x + iv \big) \Big] \;.
\ee
Therefore, we see that (\ref{22star TCR}) and (\ref{22star operator map}) are in fact equivalent equations. On the other hand, since $Q'(x)$ solves the twisted chiral ring equation of the dual theory with the same $T(x)$, we can identify $Q'(x) = \det(x-\sigma')$ as the $Q$-polynomial of the dual theory, while $T'(x) = T(x)$ remains invariant. Then, the equation \eqref{22star operator map} gives the operator map between the twisted chiral rings of the dual theories. In particular, the coefficient of $x^{N_f-1}$ yields the relation between the trace operators:
\be
\label{22startrace}
\Tr \s = -\Tr\sigma' + \sum_F {s_{F}} -  \frac{iv}2 (N_f-2N) \frac{1+(-1)^{N_f} \zeta }{ 1-(-1)^{N_f} \zeta} \;.
\ee
Now, we can compute expectation values on the sphere with localization following section \ref{sec: operator map}. For linear operators, we have seen that the twisted chiral ring relations (including the operator map) have analogs on the sphere.%
\footnote{For higher order operators this is not the case: this is because all operators have to be inserted at the same point---the pole of $S^2$---and contact terms with interesting physical meaning appear.}
Therefore, recalling that $\zeta'=\zeta^{-1}$, we infer the following relations:
\bea
\frac{\p \log Z^{(4,4)}_{U(N)} }{\p \log \zeta} &= \sum_F \Sigma_{F+} + \frac{v_+}2 (N_f-2N) \frac{1+(-1)^{N_f} \zeta}{ 1-(-1)^{N_f} \zeta} - \frac{\p \log Z^{(4,4)}_{U(N_f-N)} }{ \p \log \zeta} \\
\frac{\p \log Z^{(4,4)}_{U(N)} }{\p \log \bar\zeta} &= \sum_F \Sigma_{F-} + \frac{v_-}2 (N_f-2N) \frac{1+(-1)^{N_f} \bar\zeta}{ 1-(-1)^{N_f} \bar\zeta} - \frac{\p \log Z^{(4,4)}_{U(N_f-N)} }{ \p \log \bar\zeta} \;.
\eea
By integrating these equations, we can reproduce all the $\zeta$-dependent factors in (\ref{22starZdual}), which suggests the relation (\ref{22star VPF identity}).

\

We conclude this section with an observation relating the $\cN=(2,2)^*$ theories to the $\cN=(2,2)$ theories we have studied for most of the paper. The $\cN=(2,2)^*$ dualities do not give rise to a cluster algebra structure. One can, however, consider a ``decoupling limit" in which an $\cN=(2,2)^*$ quiver theory flows to an $\cN=(2,2)$ quiver theory. The decoupling limit consists of taking the adjoint mass $s_\Phi$ to be very large, and at the same time scaling the other flavor twisted masses in such a way that one of the two chirals in each hypermultiplet remains light. The UV FI parameters should also be scaled appropriately, if we want to keep the IR couplings finite. In the limit, the Coulomb branch vacua of the dual theory split into groups: roughly, some vacua remain around the origin while others ``fly" to infinity. To each well-separated group of vacua, one can associate an effective IR description, which is an $\cN=(2,2)$ quiver gauge theory.

\begin{figure}[!t]
\centering\includegraphics[width=13cm]{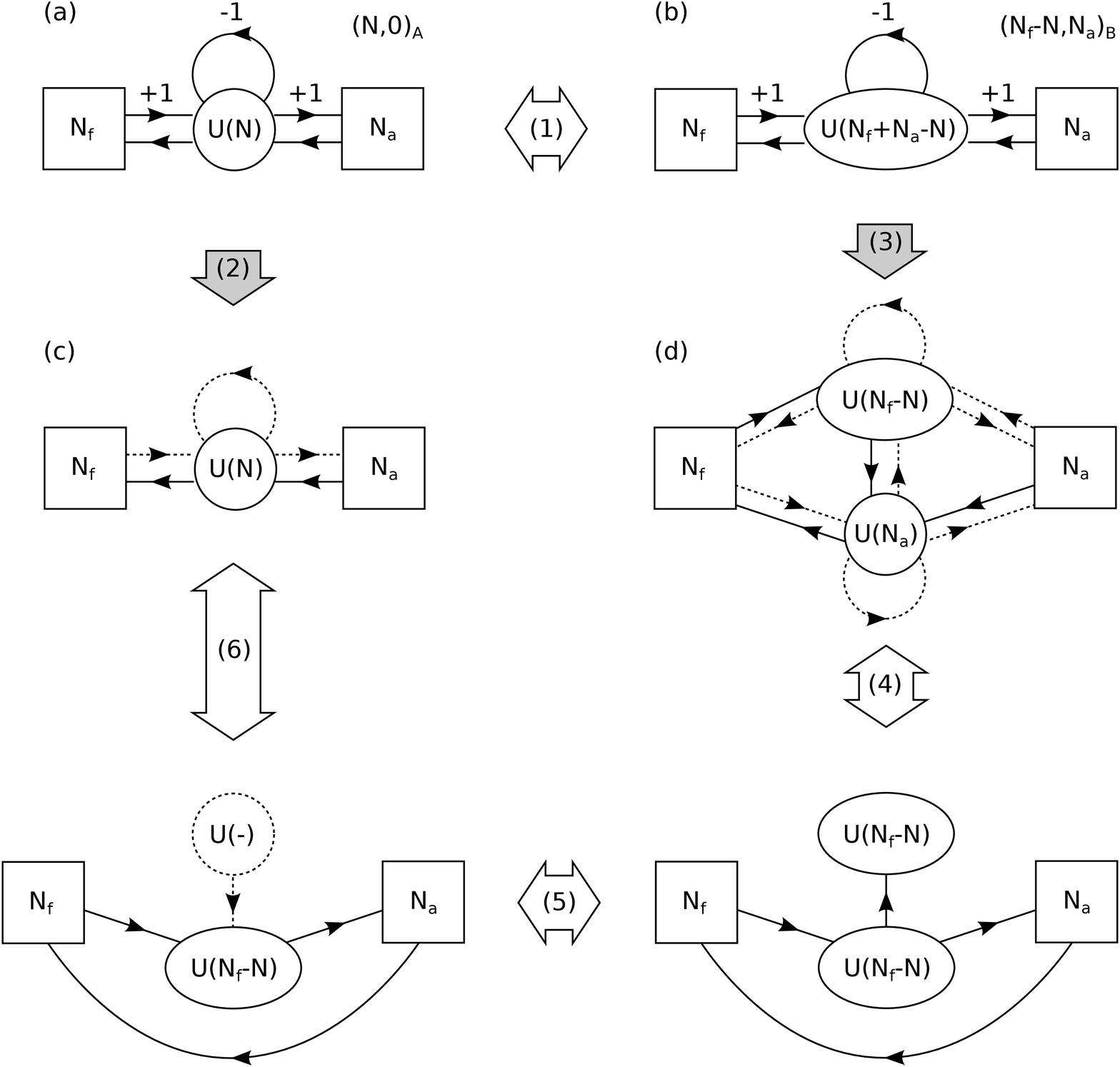}
\caption{\small Various dualities of quiver theories. In (a) we consider the $(N,0)_\fA$ vacuum sector of the $\cN=(2,2)^*$ $U(N)$ theory. This is dual to the $(N_f-N,N_a)_\fB$ sector of the dual $U(N_f+N_a-N)$ theory (b). The effective theory of (a) in the large $s_\text{d}$ limit is the $\cN=(2,2)$ theory (c) with gauge group $U(N)$. The dotted lines denote fields that become very massive, and hence are integrated out. Meanwhile, the effective theory in the decoupling limit of the dual theory is the $U(N_f-N) \times U(N_a)$ theory (d). The theories (c) and (d) are related by a series of cluster dualities (4), (5) and (6).
\label{f:22starweb}}
\end{figure}

As a simple example, we can consider an $\cN=(2,2)^*$ $U(N)$ theory with $N_f+N_a$ flavors where $N_f\geq N_a \geq N$, which we call theory $\fA$. To take the decoupling limit, we consider a $U(1)_\text{d}$ subgroup of the flavor symmetry characterized by the charges depicted on the quiver diagram in figure \ref{f:22starweb} (a), and assign a large twisted mass parameter $s_\text{d}$ to it---this breaks the flavor symmetry as $SU(N_f+N_a) \to S\big[ U(N_f) \times U(N_a) \big]$. Let us denote the flavor masses associated to the flavor groups as $s_F$ and $\tilde s_A$. We also scale the FI parameter as $\zeta = s_\text{d}^{N_a-N_f} z$ keeping $z$ fixed. The theory has $\binom{N_f+N_a}{N}$ distinct vacua on the Coulomb branch. In the $s_\text{d} \to \infty$ limit, $\binom{N_f}{N}$ of them remain around the origin (the $\sigma_I$ are of order $s_F, \tilde s_A$) while the other ones fly to infinity along with $s_\text{d}$. We call the vacua that stay around the origin, $(N,0)_\fA$ vacua. The effective theory describing the $(N,0)_\fA$ vacua is the $\cN=(2,2)$ quiver theory in figure \ref{fig: Seiberg duality} (arrow (2) in figure \ref{f:22starweb}), with K\"ahler coordinate $z$.

On the other hand, theory $\fA$ is dual to theory $\fB$ in figure \ref{f:22starweb} (b) by the $\cN=(2,2)^*$ duality of arrow (1): theory $\fB$ is a $U(N_f + N_a - N)$ gauge theory with $N_f + N_a$ flavors and $\zeta'=\zeta^{-1}$. In the decoupling limit, theory $\fB$ also experiences a separation of vacua into groups: the duality, though, maps the $(N,0)_\fA$ vacua of theory $\fA$ to vacua in theory $\fB$ where $\sigma'$ has $N_f-N$ eigenvalues of order $s_\text{d}$ and $N_a$ of order $s_F, \tilde s_A$---we call them $(N_f-N,N_a)_\fB$ vacua. The effective theory for them is an $\cN=(2,2)$ $U(N_f - N) \times U(N_a)$ quiver theory whose diagram is in (d). It turns out that the two low energy theories are related by a chain of cluster dualities: arrows (4)-(5)-(6) in figure \ref{f:22starweb}.

\section{Future directions}
\label{s:future}

In this paper, we have studied certain dualities of two-dimensional $\cN=(2,2)$ supersymmetric quiver gauge theories, and have shown that they realize all elements of cluster algebras: the quiver diagram transforms according to quiver mutations, the FI beta functions transform as the coefficients, the exponentiated complexified FI terms transform as the dual cluster variables, and the dressed $Q$-polynomials transform as cluster variables. To arrive at this result, we have analyzed the sphere partition functions $Z_{S^2}$ of the theories, and have proven that they exactly coincide when the parameters are transformed correctly.

Since 2d $\cN=(2,2)$ theories turn out to be related to various branches of mathematics, one can then claim the existence of a cluster algebra structure in other contexts. We have commented in section \ref{s:applications} that GLSMs can flow in the IR to NLSMs of compact or non-compact Calabi-Yau manifolds: this implies that the extended K\"ahler moduli spaces of such manifolds that arise from non-degenerate quiver gauge theories have a cluster algebra structure and a mutation-invariant quantum metric. It would be interesting to understand if such a structure can be directly derived from the corresponding Picard-Fuchs equations in a mirror model.

Another interesting problem is to see if the $\cN=(2,2)$ quiver gauge theories have any direct connection to quantum integrable systems such as spin chains \cite{Nekrasov:2009uh, Nekrasov:2009ui}. If so, our results would then imply the existence of a cluster algebra structure in those integrable systems. The spin chains associated to the $\cN=(2,2)^*$ gauge theories presented in \cite{Nekrasov:2009uh, Nekrasov:2009ui} are explicitly known. We have shown that the $S^2$ partition functions of those theories are indeed invariant under the ``Grassmannian'' or ``particle-hole'' duality (and its generalizations to quiver theories) presented in \cite{Nekrasov:2009uh, Nekrasov:2009ui}. It would then be interesting to find the quantum integrable systems corresponding to the $\cN=(2,2)$ quiver gauge theories. As mentioned in section \ref{sec: 22star}, one can at least embed the $\cN=(2,2)$ theories into  $\cN=(2,2)^*$ theories as special decoupling limits, and the web of dualities turns out to be self-consistent in such a limit (more details will appear elsewhere). It remains to see what are, if any, the integrable systems that arise in the decoupling limit.

Finally, an interesting direction to explore would be to ``uplift'' the two-dimensional $\cN=(2,2)$ quiver gauge theories to three-dimensional $\cN=2$ theories, possibly on a circle. Such theories may well be related to a $q$-deformed version of cluster algebra. In fact, the Seiberg-like dualities considered in this paper have a direct three-dimensional analog proposed in \cite{Benini:2011mf} and explored in \cite{Closset:2012eq, Xie:2013lya}, with an interesting M-theory realization \cite{Benini:2011cma, Closset:2012ep}. We leave this matter for the future.

\acknowledgments{%
We thank Mohammed Abouzaid, Cyril Closset, Stefano Cremonesi, Mike Douglas, Davide Gaiotto, Chris Herzog, Kentaro Hori, Peter Koroteev, Vijay Kumar, Bruno Le Floch, Sungjay Lee, Dave Morrison, Nikita Nekrasov, Martin Ro\v{c}ek and Dan Xie for useful discussions and comments on the draft. We would especially like to thank Vijay Kumar for inspiring us to work on this project.
FB and PZ thank the KITP, where part of this work was carried out, for its warm hospitality.
DP thanks the Perimeter Institute for Theoretical Physics, the Center for Theoretical Physics at MIT, and the Carg\`ese Summer Institute for their hospitality during the completion of this work.
This research was supported in part by DOE grant DE-FG02-92ER-40697, and in part by the National Science Foundation under Grant No. NSF PHY11-25915.
}

\appendix

\section{Vortex partition function equalities for $|N_f -N_a| \leq 1$}
\label{ap:NfNa10}

In this appendix, we derive the vortex partition function relations \eq{zvortrel} for $N_f = N_a+1$ and $N_f =N_a$. The relations for $N_f \geq N_a+2$ have been derived in \cite{Benini:2012ui} by the method described in section \ref{sec: partition function}. The cases $N_f - N_a = 1$ or $0$ can be derived from the case $N_f =N_a+2$ by considering a limit where one or two of the fundamental twisted masses become very large.

Consider first the equation \eq{NfNa2} for $N_f =N_a+2$:
\begin{multline}
\nonumber
\sum_{\sabs{(n_I)}=n} \; \prod_I \frac{\prod_A \big( \S^{F_I}_A \big)_{n_I}}{\prod_J \big( {-\S^{F_I}_{F_J} -n_I} \big)_{n_J} \, \prod_\Jd \big( {-\S^{F_I}_{F^c_\Jd} - n_I} \big)_{n_I} } = \\
=(-1)^{nN_a} \sum_{\sabs{(n_\Id)}=n} \; \prod_\Id \frac{\prod_A \big(1 - \S^{F^c_\Id}_A \big)_{n_\Id} }{ \prod_\Jd \big( \S^{F^c_\Id}_{F^c_\Jd} - n_\Id \big)_{n_\Jd} \, \prod_J \big( \S^{F^c_\Id}_{F_J} - n_\Id \big)_{n_\Id} } \;.
\end{multline}
Let us denote
\be
x=\S_{F^c_{N_f-N}}=\S_{F^c_{N_a-N+2}} \;,
\ee
and restrict the indices $\Id, \Jd$  to take values in $[N_a-N+1]$ as opposed to $[N_a-N+2]$. This relation can then be written as
\bea
\label{NfNa1from2}
&\sum_{ \sabs{(n_I)}=n} \prod_I \frac{\prod_A \big( \S^{F_I}_A \big)_{n_I} }{ \prod_J \big( {-\S^{F_I}_{F_J}-n_I} \big)_{n_J} \, \prod_{\Jd} \big( {-\S^{F_I}_{F^c_\Jd}-n_I} \big)_{n_I}} \;\cdot\;
\prod_I \frac1{\big( x-\S_{F_I}-n_I \big)_{n_I}} = \\
&\qquad\qquad = (-1)^{nN_a} \sum_{\sabs{(n_\Id)} + p = n} \prod_\Id \frac{ \prod_A \big( \S^A_{F^c_\Id}+1 \big)_{n_\Id} }{ \prod_\Jd \big(\S^{F^c_\Id}_{F^c_\Jd} -n_\Id \big)_{n_\Jd} \, \prod_J \big( \S^{F^c_\Id}_{F_J} -n_\Id \big)_{n_\Id} } \\
&\qquad\qquad\qquad\times \frac{(-1)^p \prod_A  \big( -x+\tS_A+1 \big)_p }{p! \, \prod_{\Jd} \Big[ \big( x-\S_{F^c_\Jd}-p \big)_{n_\Jd} \big( {-x+\S_{F^c_\Jd}-n_\Jd} \big)_p \Big] \prod_J \big( x-\S_{F_J}-p \big)_p} \;,
\eea
where the summation on the right-hand side is over the $(N_a-N+1)$-tuples $(n_{I'})$ and the non-negative integer $p$. Since the left- and right-hand sides of this equation are both rational functions of $x$, we can expand them as a Laurent series in $y=x^{-1}$. Since
\be
\prod_I  \frac1{(x-\S_{F_I}-n_I)_{n_I}}= y^n + \cO(y^{n+1})
\ee
and
\be
\frac{ (-1)^p \, \prod_A \big( {-x+\tS_A+1} \big)_p }{
p! \, \prod_{\Jd} \Big[ \big( x-\S_{F^c_\Jd}-p \big)_{n_\Jd} \big( {-x+\S_{F^c_\Jd}-n_\Jd} \big)_p \Big]
\prod_J \big( x-\S_{F_j}-p \big)_p} = \frac{(-1)^{-pN} }{p!} y^n + \cO(y^{n+1}) \;,
\ee
we obtain the relation
\begin{multline}
\label{NfNa1}
\sum_{\sabs{(n_I)}=n} \prod_I \frac{ \prod_A  \big( \S^{F_I}_A \big)_{n_I} }{
\prod_J \big( {-\S^{F_I}_{F_J} -n_I} \big)_{n_J} \, \prod_\Jd \big( {-\S^{F_I}_{F^c_\Jd} -n_I} \big)_{n_I} } = \\
= \sum_{p=0}^n \frac{ (-1)^{p(N+N_a)+(n-p)N_a} }{p!} \sum_{\sabs{(n_\Id)}=n-p}  \prod_\Id \frac{ \prod_A \big( \S^{A}_{F^c_\Id}+1 \big)_{n_\Id} }{
\prod_\Jd \big( \S^{F^c_\Id}_{F^c_\Jd} -n_\Id \big)_{n_\Jd} \, \prod_J \big( \S^{F^c_\Id}_{F_J} -n_\Id \big)_{n_\Id} } \;,
\end{multline}
by taking the coefficient of $y^n$ in the Laurent expansion of \eq{NfNa1from2}. This implies the relation \eq{zvortrel} for $N_f = N_a+1$. We have extracted the leading order behavior of \eq{NfNa2} in the limit that one of the twisted masses $x$ becomes very large, \ie{}, in a decoupling limit of the theory with $N_f =N_a+2$.

Starting from equation \eq{NfNa1}, we can pursue a similar strategy and get the vortex partition function equality for $N_f = N_a$. As before, we denote
\be
x=\S_{F^c_{N_a-N+1}} \;,
\ee
and restrict the indices $\Id, \Jd$  to take values in $[N_a-N]$. Without loss of generality, we can further assume that $F^c_{N_a-N+1} = N_a+1$, so that the index $F$ enumerating fundamental fields takes values in $[N_a]$. Equation \eq{NfNa1} can then be written as
\bea
\label{NfNa0from1}
&\hspace{-1em} \sum_{\sabs{(n_I)}=n} \prod_I \frac{ \prod_A \big( \S^{F_I}_A \big)_{n_I} }{
\prod_J \big( {-\S^{F_I}_{F_J}-n_I} \big)_{n_J} \, \prod_{\Jd} \big( {-\S^{F_I}_{F^c_\Jd}-n_I} \big)_{n_I}} \;\cdot\;
\prod_I \frac1{\big( x-\S_{F_I}-n_I \big)_{n_I}} = \\
&\quad = \sum_{\sabs{(n_\Id)} +m = n} (-1)^{(n-m)N_a} \prod_\Id \frac{ \prod_A \big( \S^A_{F^c_\Id}+1 \big)_{n_\Id} }{
\prod_\Jd \big( \S^{F^c_\Id}_{F^c_\Jd} -n_\Id \big)_{n_\Jd} \, \prod_J \big( \S^{F^c_\Id}_{F_J} -n_\Id \big)_{n_\Id} } \\
&\quad\quad \times \frac{(-1)^{m(N_a-N)} }{m!} \sum_{q=0}^m (-1)^q \dbinom{m}{q} \prod_\Jd \frac{ x-\S_{F^c_\Jd}+n_\Jd-q }{ \big( x-\S_{F^c_\Jd} \big)_{n_\Jd+1}} \: \frac{\prod_A \big( {-x+\tS_A +1} \big)_q }{ \prod_F \big( {-x+\S_F+1} \big)_q} \;,
\eea
where we have used the relation
\be
1+ \frac{m-l}a = \frac{(a+1)_m\, (-a+1)_l}{(a-l)_m \, (-a-m)_l} \qquad\qquad \text{for $m,l\in\bZ$} \;.
\ee
Taking the Laurent expansion with respect to $y=x^{-1}$, the left-hand side of \eq{NfNa0from1} is given by
\be
\label{lhsof01}
\text{LHS}(\ref{NfNa0from1}) = \sum_{\sabs{(n_I)}=n} \prod_I \frac{ \prod_A \big( \S^{F_I}_A \big)_{n_I} }{ \prod_J \big( {-\S^{F_I}_{F_J}-n_I} \big)_{n_J} \, \prod_{\Jd} \big( {-\S^{F_I}_{F^c_\Jd}-n_I} \big)_{n_I}} \, y^n + \cO( y^{n+1} ) \;.
\ee
The right-hand side is more involved. Defining
\begin{multline}
\label{def Qm}
Q_m \big( y,T_\Jd,\tS_A,\S_F \big) = \sum_{q=0}^m (-1)^q \dbinom{m}{q} \prod_\Jd \big( 1+(T_\Jd-q)y \big) \\
\times \prod_{A=1}^{N_a} \prod_{k=1}^q \big( 1-(\tS_A+k)y \big) \, \prod_{F=1}^{N_f} \prod_{k=q+1}^m \big( 1-(\S_F+k)y \big) \;,
\end{multline}
we find that the third line of \eq{NfNa0from1} is given by
\be
\label{rhsof01}
\text{3$^\text{rd}$ line of (\ref{NfNa0from1})} = \frac{(-1)^{m(N_a-N)} }{m!} \, \frac{ y^{n-m} \, Q_m \big( y, -\S_{F^c_\Jd}+n_\Jd, \tS_A, \S_F \big) }{ \prod_\Jd \prod_{k=0}^{n_\Jd} \big( 1-(\S_{F^c_\Jd}-k)y \big) \, \prod_F \prod_{k=1}^{m} \big( 1-(\S_{F}+k)y \big) } \;.
\ee
We show in appendix \ref{ap:forNfNa0} that
\be
\label{qlead}
Q_m \big( y,T_\Jd,\tS_A,\S_F \big) = \frac{ \Big( \sum_A \tS_A-\sum_F \S_F  + N' \Big)! }{ \Big( \sum_A \tS_A- \sum_F \S_F+ N'-m \Big)! } \, y^m + \cO(y^{m+1}) \;,
\ee
where $N' = N_a -N$.  From \eq{NfNa0from1}, \eq{lhsof01}, \eq{rhsof01} and \eq{qlead} we arrive at the equality:
\bea
\label{NfNa0}
&\sum_{\sabs{(n_I)}=n} \prod_I \frac{ \prod_A \big( \S^{F_I}_A \big)_{n_I} }{ \prod_J \big( {-\S^{F_I}_{F_J}-n_I} \big)_{n_J} \, \prod_{\Jd} \big( {-\S^{F_I}_{F^c_\Jd}-n_I} \big)_{n_I}} = \\
&\qquad = \sum_{\sabs{(n_\Id)} + m = n} (-1)^{m(N_a-N)} \, \dbinom{\sum_A \tS_A - \sum_F \S_F + N_a-N}{m} \\
&\qquad\qquad \times  (-1)^{(n-m)N_a} \prod_\Id \frac{\prod_A \big( \S^{A}_{F^c_\Id}+1 \big)_{n_\Id} }{ \prod_\Jd \big( \S^{F^c_\Id}_{F^c_\Jd} -n_\Id \big)_{n_\Jd} \, \prod_J \big( \S^{F^c_\Id}_{F_J} -n_\Id \big)_{n_\Id} } \;.
\eea
This implies the relation \eq{zvortrel} for $N_f=N_a$.

\section{Some useful identities}
\label{ap:forNfNa0}

In this appendix we prove that the polynomial $Q_m \big( y,T_\Jd,\tS_A,\S_F \big)$ defined in (\ref{def Qm}) with $N_f = N_a$ satisfies \eq{qlead}:
\be
\label{appbmain}
Q_m \big( y,T_\Jd,\tS_A,\S_F \big) = \frac{( S_a -S_f + N')! }{ (S_a -S_f + N' -m)!} \, y^m + \cO(y^{m+1}) \;,
\ee
where we have defined $S_f = \sum_{F=1}^{N_f} \S_F$ and $S_a = \sum_{A=1}^{N_a} \tS_A$. We present a brute force proof of this identity, consisting mainly of rearrangements of sums and basic identities involving binomial coefficients. In particular we extensively use the following identities:
\be
\sum_{\sum_ i m_i = n} \prod_i \dbinom{a_i}{m_i} = \dbinom{\sum_i a_i}{n} \;, \qquad\qquad
\dbinom{m}{q} \dbinom{q}{r} = \dbinom{m}{r} \dbinom{m-r}{q-r} \;.
\ee

To prove \eq{appbmain} in full generality, it is useful to prove the identities for simpler values of the arguments. Let us denote $Q_m$ with empty $(T_\Jd)$ as
\be
\label{Qm.}
Q_m \big( y, \tS_A,\S_F \big) = \sum_{q=0}^m (-1)^q \dbinom{m}{q} \prod_A \prod_{k=1}^q \big( 1-(\tS_A+k)y \big) \, \prod_F \prod_{k=q+1}^m \big( 1-(\S_F+k)y \big) \;,
\ee
and $Q_m$ with all $T_\Jd =0$ as $Q_m\big( y, 0^{N'}, \ts_A, \S_F \big)$. We first prove the following
\begin{proposition} \label{prop.}
The $y$-expansion of $Q_m(y, \tS_A ,\S_F)$ is given by
\be
Q_m \big( y, \tS_A ,\S_F \big) = \frac{(S_a-S_f)! }{ (S_a-S_f-m)!} \, y^m + \cO(y^{m+1}) \;.
\label{prop1}
\ee
\end{proposition}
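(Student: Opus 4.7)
The plan is to recognize $Q_m$ as an alternating sum of characteristic polynomials of a $q$-dependent multiset, and then combine the method of finite differences with an asymptotic analysis of the power sums of that multiset as $q$ grows.

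First I would introduce the multiset
\be
\cS_q = \big\{ \tS_A + k \,:\, A \in [N_a],\; 1 \leq k \leq q \big\} \cup \big\{ \S_F + k \,:\, F \in [N_f],\; q+1 \leq k \leq m \big\} \;,
\ee
so that $G(q,y) \equiv \prod_{s \in \cS_q}(1-sy) = \sum_{j \geq 0} (-y)^j e_j(\cS_q)$ with $e_j$ the elementary symmetric polynomials, and $Q_m(y) = \sum_{q=0}^m (-1)^q \dbinom{m}{q} G(q,y)$. The whole argument rests on the finite difference identity
\be
\sum_{q=0}^m (-1)^q \dbinom{m}{q} P(q) = (-1)^m \Delta^m P(0) \;,
\ee
which vanishes for any polynomial $P(q)$ of degree strictly less than $m$ and equals $(-1)^m m!$ times the leading coefficient when $\deg P = m$. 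Applying it order-by-order in $y$ will both kill the low-order terms and isolate the leading $y^m$ coefficient.

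The heart of the computation is to bound the $q$-degree of $[y^j] G(q,y)$. I would compute $\log G(q,y) = -\sum_{k \geq 1} p_k(\cS_q) \, y^k/k$, with
\be
p_k(\cS_q) = \sum_A \sum_{l=1}^q (\tS_A + l)^k + \sum_F \sum_{l=q+1}^m (\S_F + l)^k \;.
\ee
Expanding $(\tS_A + l)^k$ binomially and using Faulhaber's formula $\sum_{l=1}^q l^i = q^{i+1}/(i+1) + q^i/2 + \cO(q^{i-1})$, the $q^{k+1}$ contributions from the two terms are $N_a/(k+1)$ and $-N_f/(k+1)$, which cancel precisely because $N_f = N_a$ here. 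A parallel bookkeeping at order $q^k$ yields the clean leading coefficient $(N_a - N_f)/2 + S_a - S_f = S_a - S_f$, so $p_k(\cS_q) = (S_a - S_f)\, q^k + \cO(q^{k-1})$ and
\be
\log G(q,y) = (S_a - S_f) \log(1 - qy) + \tilde L(q,y) \;,
\ee
where $[y^k] \tilde L$ has $q$-degree strictly less than $k$. Exponentiating, $G(q,y) = (1-qy)^{S_a - S_f} \, e^{\tilde L(q,y)}$, and matching powers of $y$ gives
\be
[y^j] G(q,y) = (-1)^j \dbinom{S_a-S_f}{j} q^j + (\text{lower degree in } q) \;.
\ee

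To conclude, I would apply the finite-difference identity to each $y$-order: for $j < m$ the $q$-degree is below $m$ and the sum vanishes, so $[y^j] Q_m = 0$; for $j = m$ only the leading $q^m$ term survives, giving $[y^m] Q_m = (-1)^m \cdot (-1)^m \dbinom{S_a-S_f}{m} m! = (S_a-S_f)!/(S_a-S_f-m)!$, which is exactly the claimed leading coefficient. The most delicate step is the two-term Faulhaber expansion, ensuring both the cancellation at order $q^{k+1}$ (which relies crucially on $N_f = N_a$) and the clean coefficient $S_a - S_f$ at order $q^k$; without these $p_k$ would have $q$-degree $k+1$ and $e_j$ could grow as fast as $q^{2j}$, defeating the vanishing argument.
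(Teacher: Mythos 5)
Your proof is correct, and it takes a genuinely different route from the paper's. The paper proceeds by induction on $m$: it writes the recursion $Q_M(y,\tS_A,\S_F) = Q_{M-1}(y,\tS_A,\S_F)\prod_F(1-(\S_F+M)y) - Q_{M-1}(y,\tS_A+1,\S_F+1)\prod_A(1-(\tS_A+1)y)$, expresses the coefficients $C_{m,r}$ through elementary symmetric functions, and then evaluates the difference $C_{M-1,M}(\tS_A,\S_F)-C_{M-1,M}(\tS_A+1,\S_F+1)$ by a fairly lengthy resummation of shifted symmetric functions and binomial coefficients. You instead give a direct, non-inductive argument: the alternating sum $\sum_q(-1)^q\binom{m}{q}$ is an $m$-th finite difference, which annihilates any polynomial in $q$ of degree below $m$, so everything reduces to bounding the $q$-degree of $[y^j]G(q,y)=(-1)^j e_j(\cS_q)$. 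Your Faulhaber computation of the power sums is right: the $q^{k+1}$ terms cancel because $N_f=N_a$, and the $q^k$ coefficient is $S_a-S_f+(N_a-N_f)/2=S_a-S_f$, so $G(q,y)=(1-qy)^{S_a-S_f}e^{\tilde L}$ with $[y^k]\tilde L$ of $q$-degree at most $k-1$; the final bookkeeping $(-1)^m m!\cdot(-1)^m\binom{S_a-S_f}{m}=(S_a-S_f)!/(S_a-S_f-m)!$ (falling factorial, as in the paper) is correct. The implicit step that $e_j(\cS_q)$ is genuinely a polynomial in $q$ is justified by Newton's identities together with Faulhaber, so there is no gap. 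What your approach buys is brevity and transparency---in particular it isolates exactly where $N_f=N_a$ enters, as you note---while the paper's induction keeps everything at the level of explicit symmetric-function coefficients, which is closer in spirit to the manipulations used in its subsequent corollaries; either proof feeds into Corollaries \ref{cor0} and \ref{corfinal} equally well.
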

\begin{proof}
We proceed by induction on $m$. Equation \eq{prop1} can be checked explicitly for $m=0,1$. Let us now assume that \eq{prop1} holds for $m \leq M-1$. Using the symmetric functions $e_p (x_1 ,\ldots,x_k)$ defined as
\be
\prod_{i=1}^k (1+x_i t) = \sum_{p=0}^k e_p (x_1, \ldots, x_k) \, t^p \;,
\ee
we can express the coefficients of $Q_m(y, \tS_A ,\S_F) = \sum_{r=0}^{mN_a} C_{m,r} (\tS_A,\S_F) (-y)^r$ as
\be
C_{m,r} (\tS_A,\S_F) = \sum_{|\vec t| + |\vec r| = r} \prod_{i=1}^m \, i^{r_i} \, \dbinom{N_a-t_i}{r_i} \;\cdot\; \sum_{q=0}^m (-1)^q \dbinom{m}{q} \prod_{j=1}^q e_{t_j}(\tS_A) \prod_{l=q+1}^m e_{t_l}(\S_F) \;.
\ee
Here $t_i$ and $r_i$ are $m$-tuples of non-negative integers, and we use the notation $|\vec s| = \sum_{i=1}^m s_i$ for $m$-tuples $\vec s$. For $m \leq M-1$, the assumption that \eq{prop1} holds true implies
\be
C_{m,r}(\tS_A,\S_F) =0 \qquad\text{for $r <m$} \;,\qquad\qquad C_{m,m}(\tS_A,\S_F) = (-1)^m \frac{(S_a-S_f)! }{ (S_a -S_f -m)!} \;.
\ee

We can observe from the definition \eq{Qm.} that
\bea
Q_M(y, \tS_A,\S_F)
& =  Q_{M-1} (y, \tS_A,\S_F) \prod_F \big( 1-(\S_F+M)y \big) \\
&\quad - Q_{M-1} (y, \tS_A+1,\S_F+1) \prod_A \big( 1-(\tS_A+1)y \big) \\
&= (-1)^M \, C_{M,M}(\tS_A,\S_F) \, y^M + \cO(y^{M+1}) \;,
\eea
due to the inductive assumption. It hence follows that $C_{M,r}=0$ for $r <M$.
The inductive assumption further implies that
\be
\label{CasdiffC}
C_{M,M}(\tS_A,\S_F) = (-1)^M \, \Big( S_a -S_f-(M-1)N_a \Big) \, \frac{ (S_a-S_f)! }{ (S_a-S_f-M+1)!} + \Delta C \;,
\ee
where we have defined $\Delta C = C_{M-1,M}(\tS_A,\S_F) - C_{M-1,M}(\tS_A+1,\S_F+1)$.
Using the relation $e_p \big( x_1 +1 ,\ldots, x_k +1 \big) = \sum_{r=0}^p \binom{k-p+r}{k-p} \, e_{p-r} (x_1,\ldots,x_k)$, we can write
\begin{multline}
\Delta C = -\sum_{|\vec t| + |\vec r| = M}
\left[ \prod_{i=1}^{M-1} i^{r_i} \dbinom{N_a-t_i}{r_i} \right]
\sum_{\substack{0\leq r_i' \leq t_i \\ |\vec r~\!\!'| \geq 1} }
\prod_{i=1}^{M-1} \dbinom{N_a-t_i+r_i'}{N_a-t_i} \\
\times \sum_{q=0}^{M-1} (-1)^q \dbinom{M-1}{q}
\prod_{i=1}^q e_{t_i-r_i'}(\tS_A)
\prod_{i=q+1}^{M-1} e_{t_i-r_i'}(\S_F) \;,
\end{multline}
where $t_i$, $r_i$ and $r_i'$ are $(M-1)$-tuples of non-negative integers. After the change of variables $R = |\vec r~\!'|$, $t_i' = t_i -r_i'$, this equation can be re-written as
\begin{multline}
\label{diffC}
\Delta C = -\sum_{R=1}^M \, \sum_{|\vec t'| + |\vec r| =M-R}
\left[ \prod_{i=1}^{M-1} i^{r_i} \right]
\sum_{|\vec r~\!\!'| = R}
\left[ \prod_{i=1}^{M-1} \dbinom{N_a-t_i'-r_i'}{r_i}
\dbinom{N_a-t_i'}{N_a -t_i'-r_i'} \right] \\
\times \sum_{q=0}^{M-1} (-1)^q \dbinom{M-1}{q}
\prod_{i=1}^q e_{t_i'}(\tS_A)
\prod_{i=q+1}^{M-1} e_{t_i'}(\S_F) \;.
\end{multline}
A simple computation shows that
\be
\sum_{|\vec r~\!\!'| = R} \prod_{i=1}^{M-1} \dbinom{N_a-t_i'-r_i'}{r_i} \dbinom{N_a-t_i'}{N_a -t_i'-r_i'}
= \dbinom{(M-1)N_a-M+R}{R} \prod_{i=1}^{M-1} \dbinom{N_a-t_i'}{r_i} \;.
\ee
Hence equation \eq{diffC} becomes
\bea
\Delta C &= -\sum_{R=1}^M \dbinom{(M-1)N_a-M+R}{R} \sum_{|\vec t'| + |\vec r| =M-R}
\prod_{i=1}^{M-1} i^{r_i} \dbinom{N_a-t_i'}{r_i} \\
&\quad \times \sum_{q=0}^{M-1} (-1)^q \dbinom{M-1}{q} \,
\prod_{i=1}^q e_{t_i'}(\tS_A) \, \prod_{i=q+1}^{M-1} e_{t_i'}(\S_F) \\
&= -\sum_{R=1}^M \dbinom{(M-1)N_a-M+R}{R} \, C_{M-1,M-R}(\tS_A,\S_F) \\
&= - (-1)^{M-1} (M-1)(N_a-1) \, \frac{(S_a -S_f)! }{ (S_a-S_f -M+1)!} \;,
\eea
by the inductive assumption. Plugging this expression into equation \eq{CasdiffC}, we find that
\be
C_{M,M}(\tS_A,\S_F) = (-1)^M \, \frac{(S_a-S_f)! }{ (S_a -S_f-M)!} \;,
\ee
which concludes our proof.
\end{proof}

The $y$-expansion of $Q_m (y, T_\Jd,\tS_A,\S_F)$ with non-empty $T_\Jd$ then follow as  corollaries.

\begin{corollary}\label{cor0}
The $y$-expansion of $Q_m \big( y, 0^{N'},\tS_A,\S_F \big)$ is given by
\be
Q_m \big( y, 0^{N'},\tS_A,\S_F \big) = \frac{(S_a-S_f + N')! }{ (S_a-S_f + N'-m)!} \, y^m + \cO(y^{m+1}) \;.
\ee
\end{corollary}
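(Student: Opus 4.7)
The strategy is to reduce the corollary to Proposition~\ref{prop.} by expanding the extra factor $(1-qy)^{N'}$ in the basis of falling factorials of $q$. Concretely, I would write
\[
(1-qy)^{N'} = \sum_{j=0}^{N'} c_j(y) \, (q)_j \;, \qquad (q)_j = q(q-1)\cdots(q-j+1) \;,
\]
where the coefficients $c_j(y)$ are obtained by inverting the standard expansion $q^i = \sum_k S(i,k) (q)_k$ in terms of Stirling numbers of the second kind; the crucial input is that $c_j(y) = \binom{N'}{j}(-y)^j + O(y^{j+1})$, since $S(j,j)=1$. The other ingredient I need is the elementary identity
\[
(q)_j \binom{m}{q} = \frac{m!}{(m-j)!} \binom{m-j}{q-j} \;,
\]
which reindexes a binomial sum weighted by $(q)_j$ as a binomial sum of size $m-j$ in $q' = q - j$.

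Using these two ingredients inside the definition of $Q_m(y,0^{N'},\tS_A,\S_F)$, and then factoring the $q$-independent pieces out of the shifted products $\prod_{k=1}^{q+j}(1-(\tS_A+k)y)$ and $\prod_{k=q+j+1}^{m}(1-(\S_F+k)y)$ (via the relabeling $k' = k - j$ together with $\tS_A \to \tS_A + j$, $\S_F \to \S_F + j$), I expect to arrive at the identity
\[
Q_m\big( y, 0^{N'}, \tS_A, \S_F \big) \;=\; \sum_{j=0}^{N'} c_j(y) \, \frac{m!}{(m-j)!} \, (-1)^j \, A^{(j)}(y) \; Q_{m-j}\big(y, \tS_A + j, \S_F + j\big) \;,
\]
where $A^{(j)}(y) = \prod_A \prod_{k=1}^{j}(1 - (\tS_A+k)y) = 1 + O(y)$ collects the antifundamental factors with $k \in [1,j]$ that are pulled out of the shift, and $Q_{m-j}$ on the right-hand side is the polynomial of Proposition~\ref{prop.}.

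Applying Proposition~\ref{prop.} to each $Q_{m-j}$ and using that $N_f = N_a$ makes $S_a - S_f$ invariant under the simultaneous shift $\tS_A \to \tS_A + j$, $\S_F \to \S_F + j$, each term in the sum is $O(y^m)$, so no lower powers of $y$ appear. Collecting the coefficient of $y^m$ then gives
\[
\sum_{j=0}^{\min(N',m)} \binom{N'}{j} \, \frac{m!}{(m-j)!} \, \frac{(S_a-S_f)!}{(S_a-S_f-m+j)!} \;=\; (N'+S_a-S_f)_m \;=\; \frac{(S_a-S_f+N')!}{(S_a-S_f+N'-m)!} \;,
\]
where the first equality is the Chu--Vandermonde identity $(x+y)_m = \sum_j \binom{m}{j}(x)_j(y)_{m-j}$ with $x = N'$ and $y = S_a - S_f$.

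The main obstacle is the bookkeeping of the shift $\tS_A \to \tS_A + j$, $\S_F \to \S_F + j$ when factoring out the $q$-independent pieces and identifying the residual sum with $Q_{m-j}$ of the proposition; the reduction is only clean because the $N_f = N_a$ condition forces the leading proposition coefficient $(S_a-S_f)!/(S_a-S_f-(m-j))!$ to be \emph{independent} of the shift, which is precisely what makes the final summation collapse to the stated binomial quantity via Vandermonde.
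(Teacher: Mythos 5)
Your proof is correct and follows essentially the same route as the paper's: both expand $(1-qy)^{N'}$ in a binomial-coefficient basis in $q$ (your falling factorials $(q)_j = j!\binom{q}{j}$ versus the paper's monomials $q^s$ converted via $q^s = \sum_r f_{r,s}\binom{q}{r}$ with $f_{s,s}=s!$), use the same reindexing identity to recognize $Q_{m-j}(y,\tS_A+j,\S_F+j)$, invoke Proposition~\ref{prop.} with the shift-invariance of $S_a-S_f$ under $N_f=N_a$, and close with Chu--Vandermonde. The only difference is presentational: your direct falling-factorial expansion makes the fact that only the diagonal terms contribute at order $y^m$ slightly more transparent than the paper's Stirling-number bookkeeping.
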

\begin{proof}
Let us first note that for integer $s$, $q^s = \sum_{r=0}^s f_{r,s} \binom{q}{r}$ for some combinatorial factors $f_{r,s}$ independent of $q$. In particular, $f_{s,s}=s!$. Using this fact, we see that
\be
\dbinom{m}{q} \, (1-qy)^{N'} = \sum_{s=0}^{N'} \dbinom{N'}{s} \dbinom{m}{q} \, q^s (-y)^s = \sum_{s=0}^{N'} \sum_{r=0}^s \dbinom{N'}{s} \dbinom{m}{r} \dbinom{m-r}{q-r} f_{r,s}\, (-y)^s \;.
\ee
Then $Q_m \big( y, 0^{N'},\tS_A,\S_F \big)$ can be written as
\bea
& Q_m (y, 0^{N'},\tS_A,\S_F) = \sum_{s=0}^{N'} \sum_{r=0}^s \dbinom{N'}{s} \dbinom{m}{r} \, f_{r,s} \, y^s (-1)^{s} \prod_A \prod_{k=1}^r \big( 1-(\tS_A+k)y \big) \\
&\qquad \times \sum_{q'=0}^{m-r} (-1)^{q'+r}\dbinom{m-r}{q'}
\prod_A \prod_{k=1}^{q'} \big( 1-(\tS_A+r+k)y \big) \prod_F \prod_{k=q'+1}^{m-r} \big( 1-(\tS_A+r+k)y \big) \\
&=\sum_{s=0}^{N'} \sum_{r=0}^s \dbinom{N'}{s} \dbinom{m}{r} \, f_{r,s} y^s (-1)^{r+s}
\prod_A \prod_{k=1}^r \big( 1-(\tS_A+k)y \big) \;\cdot\; Q_{m-r} \big( y, \tS_A+r, \S_F+r \big) \;.
\eea
The variable $q'$ of this equation is related to the variable $q$ of the defining equation (\ref{def Qm}) by $q' =q-r$. Applying proposition \ref{prop.} to $Q_{m-r} (y,  \tS_A+r, \S_F+r)$, it follows that
\begin{align}
\!\! Q_m \big( y, 0^{N'},\tS_A,\S_F \big) &= \sum_{s=0}^{N'} \sum_{r=0}^s \dbinom{N'}{s} \dbinom{m}{r} \frac{(S_a -S_f)! }{(S_a-S_f-m+r)!} f_{r,s} (-1)^{r+s} y^{m+s-r} + \cO (y^{m+1}) \nn \\
&= \sum_{s=0}^{N'} \dbinom{N'}{s} \dbinom{m}{s} \frac{(S_a -S_f)! \, s! }{ (S_a-S_f-m+s)!} \, y^{m} + \cO (y^{m+1}) \nn \\
&= \frac{(S_a -S_f+ N')! }{ (S_a-S_f+ N'-m)!} \, y^m +\cO (y^{m+1}) \;,
\end{align}
where we have used the fact that $f_{s,s}=s!$ in going from the first to the second line.
\end{proof}

\begin{corollary}\label{corfinal}
The $y$-expansion of $Q_m \big( y,T_\Jd,\tS_A,\S_F \big)$ is given by equation \eq{appbmain}.
\end{corollary}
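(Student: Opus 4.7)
The plan is to reduce Corollary \ref{corfinal} to Corollary \ref{cor0} via an algebraic manipulation that isolates the $T_{J'}$ dependence from the $q$-dependence. The key observation is the trivial identity
\be
1 + (T_{J'} - q)y = (1 - qy) + T_{J'} y \;,
\ee
which, upon distributing the product over $J'$, gives
\be
\prod_{J'=1}^{N'} \big( 1 + (T_{J'} - q)y \big) \;=\; \sum_{k=0}^{N'} y^k \, e_k(T_1, \ldots, T_{N'}) \; (1 - qy)^{N' - k} \;,
\ee
where $e_k$ is the $k$-th elementary symmetric polynomial. The $q$-dependence on the right matches exactly the one appearing in $Q_m\big( y, 0^{N'-k}, \tS_A, \S_F \big)$ through the definition \eq{Qm.}.

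First I would substitute this expansion into the definition \eq{def Qm} of $Q_m$ and exchange the orders of summation, obtaining
\be
Q_m\big( y, T_{J'}, \tS_A, \S_F \big) \;=\; \sum_{k=0}^{N'} y^k \, e_k(T_1, \ldots, T_{N'}) \; Q_m\big( y, 0^{N'-k}, \tS_A, \S_F \big) \;.
\ee
Next, I would apply Corollary \ref{cor0}, with the role of $N'$ there played by $N'-k$, to each summand:
\be
Q_m\big( y, 0^{N'-k}, \tS_A, \S_F \big) \;=\; \frac{(S_a - S_f + N' - k)!}{(S_a - S_f + N' - k - m)!} \, y^m + \cO(y^{m+1}) \;,
\ee
where $S_a = \sum_A \tS_A$ and $S_f = \sum_F \S_F$. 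The $k$-th summand is therefore of order $y^{k+m}$; for $k \geq 1$ it is absorbed into $\cO(y^{m+1})$, while the $k = 0$ contribution yields exactly the leading coefficient $(S_a - S_f + N')!/(S_a - S_f + N' - m)!$, establishing \eq{appbmain}.

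There is no real obstacle in this approach: the entire argument rests on the elementary expansion above together with a direct appeal to Corollary \ref{cor0}. The only fine print is that Corollary \ref{cor0} must be applicable with an arbitrary non-negative integer in place of $N'$, but its proof uses $N'$ only as the exponent of $(1-qy)^{N'}$ in \eq{Qm.}, which is a well-defined expression for any non-negative integer; since $0 \leq N' - k \leq N'$ throughout the sum, the hypothesis is always met, and the reduction is valid.
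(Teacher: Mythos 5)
Your proposal is correct and is essentially identical to the paper's own proof: both expand $\prod_{J'}\big(1+(T_{J'}-q)y\big)$ in elementary symmetric polynomials as $\sum_{s} y^s e_s(T_{J'})(1-qy)^{N'-s}$, reduce to $Q_m\big(y,0^{N'-s},\tS_A,\S_F\big)$, apply Corollary \ref{cor0} termwise, and observe that only the $s=0$ term survives at order $y^m$. No differences worth noting.
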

\begin{proof}
From the relation $\prod_{\Jd = 1}^{N'} \big( 1+(T_\Jd-p)y \big) = \sum_{s=0}^{N'} e_s(T_\Jd) \, y^s \, (1-py)^{N'-s}$, it follows that
\bea
Q_m \big( y,T_\Jd,\tS_A,\S_F \big)
&= \sum_{s=0}^{N'} y^s \, e_s(T_\Jd) \, Q_m \big( y,0^{N'-s},\tS_A,\S_F \big) \\
&= \sum_{s=0}^{N'} e_s(T_\Jd) \, \frac{(S_a-S_f+ N'-s)! }{ (S_a-S_f+ N'-s-m)!} \, y^{m+s} + \cO(y^{m+1}) \\
&= \frac{ (S_a-S_f+ N')! }{ (S_a-S_f+ N'-m)!} \, y^m + \cO(y^{m+1}) \;,
\eea
since $e_0(T_\Jd)=1$ for any $N'$-tuple $T_{\Jd}$.
\end{proof}

\section{Conformal embedding of non-conformal theories}
\label{ap:UVIR}

We showed in section \ref{ss:SDandCA} that a quiver theory whose nodes are not all conformal can be thought of as the IR limit of a conformal UV theory. The latter can be constructed by adding a flavor node $U(1)_\text{UV}$ and bifundamental matter charged under this node and the gauge nodes to offset the non-conformality of each gauge group. Taking the twisted mass $s_0$ associated to the flavor node to infinity, and scaling the UV couplings appropriately, one recovers the IR theory. We introduce the parameter $u$:
\be
\label{defu}
\mu\, u = -is_0 = - s_{0+} \;,\qquad\qquad \mu\, \bar u = i\bar s_0 = s_{0-} \;,
\ee
where $\mu$ is the renormalization scale at which the IR couplings are defined.
We want to show that the sphere partition function behaves consistently with the scale matching formula
\be
\label{UVIR2}
z^\text{UV}_i = u^{\beta_i} z_i(\mu) \;,
\ee
where $z^\text{UV}_i$ and $z_i(\mu)$ are the K\"ahler parameters of the UV and IR theory respectively, $\mu = 1/r$ is fixed by the radius $r$ of the sphere, and $\beta_i = N_a(i) - N_f(i)$ is the FI beta function for the $i$-th node in the IR theory.

We do not need to deal with the full quiver to check \eq{UVIR2}: since we integrate out some (anti)fundamentals separately at each node, we can simply consider a single-node $U(N)$ theory with flavors. Let us first consider the case with $N_f$ fundamentals and $N_a (<N_f)$ antifundamentals in the IR theory. The UV theory has $N_f^\text{UV} = N_a^\text{UV} = N_f$ flavors. The additional antifundamental matter has charge 1 with respect to the $U(1)_\text{UV}$ flavor node. The partition function of the UV theory is then given by the following Coulomb branch integral:
\begin{multline}
\label{ZUV}
Z^\text{UV}_{U(N)} \big( \S_{F\pm},\tS_{A\pm},u;z^\text{UV} \big) = \frac1{N!} \sum_{\m_I \in \mathbb{Z}^N} \int \prod_{I=1}^N \frac{d\s_I}{2\pi} \big( e^{i\pi(N^\text{UV}_f-1)} z^\text{UV} \big)^{\s_{I+}}
\big(e^{-i\pi(N^\text{UV}_f-1)} \bz^\text{UV} \big)^{\s_{I-}} \\
\times \prod_{I<J}^N \big( {-\S^I_{J+} \S^I_{J-}} \big) \times
\prod_{I=1}^N \prod_{F=1}^{N_f} \frac{\Ga(-\S^I_{F+})}{\Ga (1+\S^I_{F-})} \prod_{A=1}^{N_a} \frac{\Ga(\S^I_{A+})}{\Ga (1-\S^I_{A-})}
\left[ \frac{\Ga(\s_{I+} - s_{0+}) }{ \Ga (1-\s_{I-} +s_{0-}) } \right]^{-\b} \;,
\end{multline}
where $\b = (N_a-N_f)$. Note that, as usual, all masses are expressed in units of $1/r$. Using the asymptotic expansion of $\Ga(z)$ for large $z$,%
\footnote{From Stirling's approximation, valid for $|z| \to \infty$ and $|\arg z| < \pi - \epsilon$, we get:
$$
\Gamma(z) = \sqrt\frac{2\pi}z \, \Big( \frac ze \Big)^z \big( 1 + \cO(z^{-1}) \big) \qquad\Rightarrow\qquad \Gamma(a+z) = \sqrt{2\pi}\, e^{-z} z^{a+z - \frac12} \big( 1 + \cO(z^{-1}) \big) \;.
$$
}
we have
\be
\label{GammaLimit}
\frac{\Ga(x-s_{0+})}{\Ga(1-y + s_{0-})} = \frac{\Ga(x+u)}{\Ga(1-y +\bar u)} = \left( \frac ue \right)^u
\left( \frac{\bar u}e \right)^{-\bar u} u^{x-1/2} \, \bar u^{y-1/2} \Big( 1 + \cO\big( |u|^{-1} \big)  \Big)
\ee
for fixed $x$ and $y$ when $s_0$---or more precisely, the real part of $s_0$---is large. Hence, in the limit of large $s_0$, $Z^\text{UV}_{U(N)}$ can be written as
\begin{multline}
\label{ZUVscaling}
Z^\text{UV}_{U(N)} \Big{/} \left( e^{-u+\bar u} u^{u-1/2} \bar u ^{-\bar u-1/2}\right)^{-\b N} = {} \\
{} = \frac1{N!} \sum_{\m_I \in \mathbb{Z}^N} \int \prod_{I=1}^N
\frac{d\s_I}{2\pi} \, \big( e^{i\pi(N^\text{UV}_f-1)} z^\text{UV} u^{-\b} \big)^{\s_{I+}}
\big( e^{-i\pi(N^\text{UV}_f-1)} \bz^\text{UV} \bar u^{-\b} \big)^{\s_{I-}} \\
\times \prod_{I<J} \big( {-\S^I_{J+} \S^I_{J-}} \big) \times \prod_I \prod_F \frac{\Ga(-\S^I_{F+})}{\Ga (1+\S^I_{F-})} \prod_A \frac{\Ga(\S^I_{A+})}{\Ga (1-\S^I_{A-})} \quad + \quad \cO \big( |u|^{-1} \big) \;.
\end{multline}
The leading term on the right-hand side is the partition function of a theory with $N_f$ fundamentals and $N_a$ antifundamentals. The relative factor on the left-hand side does not depend on any of the twisted masses of the IR theory and can be neglected.%
\footnote{If the RG flow were between two fixed points, this relative factor would contain information about the difference of the two central charges.}
We can thus write:
\be
\label{ZUV final}
Z^\text{UV}_{U(N)} \big( \S_{F\pm},\tS_{A\pm},u;z^\text{UV} \big) \cong Z^{N_f,N_a}_{U(N)} \big( \S_{F\pm},\tS_{A\pm};\, z^\text{UV}u^{-\b} \big) + \cO\big( |u|^{-1} \big) \;,
\ee
hence the identification $z= u^{-\b} z^\text{UV}$.

One might be concerned that applying the limit \eq{GammaLimit} to the integral \eq{ZUV} is not justified, since $\s_{I\pm}$ are values that are summed/integrated over and can become arbitrarily large, so let us clarify this point. For $N_f > N_a$ we have $\beta < 0$, therefore if we send $u\to\infty$ with $z(\mu)$ fixed, we let $z_\text{UV} \to 0$ and $t \to +\infty$. We can then close the integrals in \eq{ZUV} in the lower complex half-planes, and pick the residues at the poles of the gamma functions related to the fundamentals. Since poles are located at $\s_{I\pm} = \S_{F_I} + n_{I\pm}$, whose positions are independent of $s_0$, taking the large $s_0$ limit in this formulation is legitimate. In the limit, the vortex partition functions of the UV theory (in the $N_f$ vacua) asymptote to those of the IR theory.

All the statements go through in the case $N_a > N_f$. In this case we add fundamentals to obtain the UV theory (\ie{}, $N_f^\text{UV} = N_a^\text{UV} = N_a$) and its partition function is given by
\begin{multline}
\label{ZUV2}
Z^\text{UV}_{U(N)} \big( \S_{F\pm},\tS_{A\pm},u; z^\text{UV} \big) = \frac1{N!} \sum_{\m_I \in \mathbb{Z}^N} \int \prod_{I=1}^N \frac{d\s_I}{2\pi} \, \big( e^{i\pi(N^\text{UV}_f-1)} z^\text{UV} \big)^{\s_{I+}}
\big( e^{-i\pi(N^\text{UV}_f-1)} \bz^\text{UV} \big)^{\s_{I-}} \\
\times \prod_{I<J} \big( {-\S^I_{J+} \S^I_{J-}} \big) \prod_I \prod_F \frac{\Ga(-\S^I_{F+})}{\Ga (1 + \S^I_{F-})} \prod_A \frac{\Ga(\S^I_{A+})}{\Ga (1-\S^I_{A-})} \left[ \frac{\Ga(-\s_{I+} -u)}{\Ga (1+\s_{I-} - \bar u)} \right]^\b \;.
\end{multline}
The ratio of gamma functions involving $u$ now has the large $|u|$ limit
\be
\frac{\Ga(-x-u)}{\Ga(1+y -\bar u)} = \left( \frac{e^{i\pi}u}e \right)^{-u} \left( \frac{e^{-i\pi}\bar u}e \right)^{\bar u} (e^{i\pi}u)^{-x-1/2} \, (e^{-i\pi}\bar u)^{-y-1/2} \Big( 1 + \cO\big( |u|^{-1} \big)  \Big)
\ee
for fixed $x$ and $y$. Noticing that $e^{i\pi(N_f^\text{UV}-\b+1)} = e^{i\pi(N_f-1)}$, one finally arrives at the same relation as in \eq{ZUV final}. To justify the limit in this case, we notice that $\beta>0$ therefore $z_\text{UV} \to \infty$ and $t \to -\infty$ as $u \to \infty$. Then, the Coulomb branch integrals can be closed in the upper half-planes, and is reduced to a sum over the residues at the poles from the antifundamentals. As in the case when $N_f > N_a$, the position of these poles do not depend on $s_0$; the large $s_0$ limit can be taken at the level of vortex partition functions.

\section{Cancellation of theta-angle shifts}
\label{ap:thetacancel}

We show that the extra phases in equation \eq{compcoeff} from ``annihilations'' cancel out. In other words, upon rewriting \eq{compcoeff} as
\be
\label{compcoeff2}
z_j' = \begin{cases}
e^{i\pi \varphi_j} \, z_j z_k^{[b_{kj}]_+} &\text{if $N_f > N_a$}  \\
e^{i\pi \varphi_j} \, z_j \Big( \dfrac{ z_k }{1+ z_k } \Big)^{[b_{kj}]_+}
(1+z_k)^{-[-b_{kj}]_+} &\text{if $N_f = N_a$} \\
e^{i\pi \varphi_j} \, z_j z_k^{-[-b_{kj}]_+} &\text{if $N_f < N_a$}
\end{cases}
\ee
for $j \neq k$, we show that $\varphi_j \equiv 0 \mod{2}$. Throughout this section we use the symbol ``$\equiv$" to denote equivalence modulo $2$. The following lemma turns out to be useful:

\begin{lemma}\label{aij}
Let us consider a mutation of a quiver with respect to node $k$, and denote the number of annihilations that occur between nodes $i$ and $j$ upon mutation as $a_{ij} = a_{ji}$. Then
\be
a_{ij} \equiv [b_{ij}']_+ + [b_{ij}]_+ +[b_{ik}]_+ [b_{kj}]_+ \,.
\label{aijlemma}
\ee
\end{lemma}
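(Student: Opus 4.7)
The plan is to simply count arrows between nodes $i$ and $j$ at each of the three steps of the mutation $\mu_k$ described at the end of section \ref{ss:CA}, and extract the parity of the number of annihilations from the bookkeeping.

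Fix $i,j \neq k$. Before mutation, there are $[b_{ij}]_+$ arrows $i \to j$ and $[-b_{ij}]_+$ arrows $j \to i$. Step 1 of the mutation adds one arrow $i \to j$ for every path $i \to k \to j$, which yields $[b_{ik}]_+ [b_{kj}]_+$ new arrows from $i$ to $j$; symmetrically, it produces $[-b_{ik}]_+ [-b_{kj}]_+$ new arrows from $j$ to $i$. Step 2 does not affect the pair $(i,j)$ since neither index equals $k$. Thus just before step 3 the total number of arrows pointing in each direction is
\be
A \,=\, [b_{ij}]_+ + [b_{ik}]_+ [b_{kj}]_+ \;, \qquad\qquad B \,=\, [-b_{ij}]_+ + [-b_{ik}]_+ [-b_{kj}]_+ \;.
\ee

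Step 3 annihilates arrows in pairs until only one direction survives, so the number of annihilations is $a_{ij}=\min(A,B)$, while the remaining arrows give the mutated entry $b_{ij}'$ via $[b_{ij}']_+ = A - a_{ij}$ and $[-b_{ij}']_+ = B - a_{ij}$. In particular, regardless of whether $A\geq B$ or $B\geq A$, one has the identity
\be
a_{ij} \,=\, [b_{ij}]_+ + [b_{ik}]_+ [b_{kj}]_+ \,-\, [b_{ij}']_+ \;.
\ee
Reducing modulo $2$ (so that $-[b_{ij}']_+\equiv [b_{ij}']_+$) immediately yields \eq{aijlemma}.

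There is no real obstacle here: the only point requiring any care is to verify that the expression $a_{ij}=A-[b_{ij}']_+$ holds in \emph{both} cases $A\geq B$ and $A<B$ (in the latter, $[b_{ij}']_+=0$ and $a_{ij}=A$), so that one does not need a case analysis after passing to the mod-$2$ statement. Consistency with Fomin--Zelevinsky's formula $b_{ij}' = b_{ij} + \sign(b_{ik}) [b_{ik} b_{kj}]_+$ is automatic, since $\sign(b_{ik})[b_{ik} b_{kj}]_+ = [b_{ik}]_+ [b_{kj}]_+ - [-b_{ik}]_+ [-b_{kj}]_+ = (A-B) - ([b_{ij}]_+ - [-b_{ij}]_+)$, which gives $b_{ij}' = A - B$, in agreement with the counting above.
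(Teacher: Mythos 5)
Your proof is correct, and it takes a genuinely different route from the paper's. The paper writes $a_{ij} = \big[-\sign(b_{ij}b_{ik})\big]_+ \min\big(|b_{ij}|,\,[b_{ik}b_{kj}]_+\big)$ and then verifies the congruence \eq{aijlemma} case by case, tabulating all eight sign patterns of $(b_{ik},b_{kj},b_{ij},b_{ij}')$ and comparing both sides modulo $2$ in each row. You instead track the arrow counts through the three mutation steps directly: with $A=[b_{ij}]_+ + [b_{ik}]_+[b_{kj}]_+$ and $B=[-b_{ij}]_+ + [-b_{ik}]_+[-b_{kj}]_+$, you get $a_{ij}=\min(A,B)$ and $b_{ij}'=A-B$, whence $[b_{ij}']_+=[A-B]_+=A-\min(A,B)$ and the \emph{exact} integer identity $a_{ij}=[b_{ij}]_+ + [b_{ik}]_+[b_{kj}]_+ - [b_{ij}']_+$, of which the lemma is the trivial mod-$2$ reduction. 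Your handling of the one delicate point---that $a_{ij}=A-[b_{ij}']_+$ holds whether $A\geq B$ or $A<B$---is right, and your check that $A-B$ reproduces the Fomin--Zelevinsky formula $b_{ij}'=b_{ij}+\sign(b_{ik})[b_{ik}b_{kj}]_+$ confirms that your $\min(A,B)$ agrees with the paper's definition of $a_{ij}$. What your approach buys is a strictly stronger statement (an identity over $\bZ$ rather than $\bZ/2$) with no case analysis; what the paper's table buys is a fully explicit, mechanically checkable verification that doubles as a record of $a_{ij}$ itself in each sign regime, which is reused implicitly when interpreting the annihilations physically.
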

\begin{proof}
Recall that $b_{ij}' = b_{ij} + \sign(b_{ik}) [b_{ik}b_{kj}]_+$. The number of annihilations is zero if the two summands have the same sign, and equals the minimum of their absolute values if they have opposite sign, namely:
\be
a_{ij} = \big[ -\sign(b_{ij}b_{ik}) \big]_+ \, \min\big( |b_{ij}|,\, [b_{ik}b_{kj}]_+ \big) \;.
\ee
The values of $a_{ij}$ modulo 2 are given in table \ref{t:aij}, depending on the signs of $b_{ik}$, $b_{kj}$, $b_{ij}$ and $b_{ij}'$. By explicit comparison with the values of the right-hand side of \eq{aijlemma}, also tabulated in table \ref{t:aij}, we find that equation \eq{aijlemma} is true.
\end{proof}

\begin{table}[t]
\center
\begin{tabular}{ | c | c | c| c || c || c | c| c || c| }
\hline
$b_{ik}$ & $b_{kj}$ & $b_{ij}$ & $b_{ij}'$ &
$a_{ij}$ & $[b_{ij}']_+$ & $[b_{ij}]_+$ & $[b_{ik}]_+ [b_{kj}]_+$ &
$[b_{ij}']_+\!\! +\! [b_{ij}]_+\!\! +\! [b_{ik}]_+ [b_{kj}]_+$    \\ \hline\hline
$+$ & $+$ & $+$ & $+$ & $0$ & $b_{ij}\!+\!b_{ik}b_{kj}$ & $b_{ij}$ & $b_{ik} b_{kj}$ &
$2b_{ij}+2b_{ik} b_{kj}$ \\ \hline
$+$ & $+$ & $-$ & $+$ & $b_{ij}$ & $b_{ij}\!+\!b_{ik}b_{kj}$ & $0$ & $b_{ik} b_{kj}$ &
$b_{ij}+2b_{ik} b_{kj}$ \\ \hline
$+$ & $+$ & $-$ & $-$ & $b_{ik}b_{kj}$ & $0$ & $0$ & $b_{ik} b_{kj}$ &
$b_{ik} b_{kj}$ \\ \hline
$+$ & $-$ & $\pm$ & $\pm$ & $0$ & $[b_{ij}]_+$ & $[b_{ij}]_+$ & $0$ & $2[b_{ij}]_+$ \\ \hline
$-$ & $+$ & $\pm$ & $\pm$ & $0$ & $[b_{ij}]_+$ & $[b_{ij}]_+$ & $0$ & $2[b_{ij}]_+$ \\ \hline
$-$ & $-$ & $+$ & $+$ & $b_{ik}b_{kj}$ & $b_{ij}\!-\!b_{ik}b_{kj}$ & $b_{ij}$ & $0$ &
$2b_{ij}-b_{ik}b_{kj}$ \\ \hline
$-$ & $-$ & $+$ & $-$ & $b_{ij}$ & $0$ & $b_{ij}$ & $0$ & $b_{ij}$ \\ \hline
$-$ & $-$ & $-$ & $-$ & $0$ & $0$ & $0$ & $0$ & $0$ \\ \hline
\end{tabular}
\caption{\small The value of $a_{ij}$ and other functions of the adjacency matrix for all possible combinations of signs of $b_{ik}$, $b_{kj}$, $b_{ij}$ and $b_{ij}'$. Note that $a_{ij}$ is reported modulo $2$.
\label{t:aij}}
\end{table}

\begin{proposition}\label{propphase}
The phases $\varphi_j$ in equation \eq{compcoeff2} satisfy $\varphi_j \equiv 0$.
\end{proposition}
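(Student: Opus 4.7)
The strategy is a direct bookkeeping argument: extract $\varphi_j$ explicitly from equation \eq{compcoeff} by comparing with the form \eq{compcoeff2}, expand everything into non-negative exponents $[\,\cdot\,]_+$, and then reduce modulo $2$ using Lemma \ref{aij} together with the rank map $N_i' = N_i$ for $i \neq k$ and $N_k' = \max\bigl(N_f(k),N_a(k)\bigr) - N_k$.

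First, in the case $N_f(k) > N_a(k)$, collecting the phase factors from \eq{compcoeff} gives
\be
\varphi_j \;=\; \bigl( N_f(j)-N_f(j)'\bigr) \;+\; N_k'\, b_{kj} \;+\; \sum_{i\neq k} N_i\, a_{ij} \pmod 2 \;,
\nn
\ee
where I have used $[b_{kj}]_+ - [-b_{kj}]_+ = b_{kj}$. The first piece can be written as
$\sum_{i\neq k}\bigl([b_{ji}]_+ - [b_{ji}']_+\bigr) N_i + [b_{jk}]_+ N_k - [-b_{jk}]_+ N_k'$,
since $b_{jk}' = -b_{jk}$ and $N_i = N_i'$ for $i\neq k$. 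Working mod $2$, $-1 \equiv 1$, so $[b_{ji}]_+ - [b_{ji}']_+ \equiv [b_{ji}]_+ + [b_{ji}']_+$. Substituting Lemma \ref{aij} for $a_{ji} \equiv [b_{ji}']_+ + [b_{ji}]_+ + [b_{jk}]_+[b_{ki}]_+$ into the last sum, the $[b_{ji}]_+ + [b_{ji}']_+$ terms appear twice and cancel. What remains is
\be
\varphi_j \;\equiv\; [b_{jk}]_+\, N_k \;+\; [-b_{jk}]_+\, N_k' \;+\; N_k'\, b_{kj} \;+\; [b_{jk}]_+ \sum_{i\neq k} N_i\, [b_{ki}]_+ \pmod 2 \;.
\nn
\ee
Recognising $\sum_{i\neq k} N_i[b_{ki}]_+ = N_f(k)$ (since $b_{kk}=0$) and $N_k' b_{kj} = -N_k' b_{jk} \equiv N_k'\bigl([b_{jk}]_+ + [-b_{jk}]_+\bigr)$, this reduces further to
\be
\varphi_j \;\equiv\; [b_{jk}]_+ \bigl( N_k + N_k' + N_f(k) \bigr) \pmod 2 \;.
\nn
\ee
Since $N_k + N_k' = N_f(k)$ in this case, the bracket is $2N_f(k)\equiv 0$, hence $\varphi_j \equiv 0$.

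The case $N_f(k) < N_a(k)$ proceeds in an identical manner. The only difference is that the phase contributed by the second factor in \eq{compcoeff} involves $N_f(k)-N_k$ rather than $N_k'$, but this is compensated in the final reduction by $N_k + N_k' = N_a(k)$, so that again the bracket becomes an even multiple and $\varphi_j \equiv 0$. The conformal case $N_f(k) = N_a(k)$ produces exactly the same phase factors out of the $e^{i\pi N_k'}$ prefactors as the $N_f(k)>N_a(k)$ case; the $(1+z_k)$ pieces are phase-free, so the computation is unchanged.

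The main obstacle is purely combinatorial: keeping track of which terms are integer-valued versus half-integer-valued, and ensuring that each case in \eq{compcoeff} is handled with the correct extraction of the $z_k$-power before identifying $\varphi_j$. Lemma \ref{aij} is the essential non-trivial input, since it converts the \emph{a priori} unwieldy count $a_{ij}$ of annihilated $2$-cycles into a simple mod-$2$ expression in terms of $b$'s. Once that is in hand, the proof is a clean reduction modulo $2$.
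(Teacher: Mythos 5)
Your proof is correct and follows essentially the same route as the paper's: extract $\varphi_j$ from \eq{compcoeff}, expand $N_f(j)$, $N_f(j)'$ and $N_f(k)$ in terms of $[b_{ji}]_+$, $[b_{ji}']_+$ and $[b_{ki}]_+$, and let Lemma \ref{aij} absorb the annihilation counts modulo $2$; the only difference is the (immaterial) order in which the lemma is substituted. One small imprecision: in the $N_f(k)<N_a(k)$ case the cancellation actually closes because the $N_k'[-b_{jk}]_+$ terms and the $\bigl(N_k+N_f(k)\bigr)[b_{jk}]_+$ terms each appear an even number of times, rather than via the identity $N_k+N_k'=N_a(k)$ as you state, but the conclusion is unaffected.
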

\begin{proof}
When $N_f \geq N_a$:
\bea
\varphi_j &\equiv \big( N_k-N_f(k) \big)[b_{kj}]_+ + \big( N_k-N_f(k)\big)[-b_{kj}]_+ + N_f(j)'+N_f(j) + \sum_{i\, ( \neq k)} N_i a_{ij} \;.
\eea
Using the identities
\bea
N_f (j) &= N_k[b_{jk}]_+ + \sum\nolimits_{i\,( \neq k)} N_i [b_{ji}]_+ \;, &
N_f (k) &= \sum\nolimits_{i\,(\neq k)} N_i [b_{ki}]_+ \;, \\
N_f (j)' &= \big( N_f(k)-N_k \big) [-b_{jk}]_+ + \sum\nolimits_{i\,(\neq k)} N_i [b'_{ji}]_+ \;,
\eea
we find that
\be
\varphi_j \equiv
\sum\nolimits_{i\,(\neq k)} N_i \Big( [b_{ji}']_+ + [b_{ji}]_+ + [b_{jk}]_+ [b_{ki}]_+ \Big)
+ \sum\nolimits_{i\,(\neq k)} N_i a_{ij} \equiv 0 \;,
\ee
where we have used $[b_{kj}]_+ - [-b_{kj}]_+ = b_{kj}$ and lemma \ref{aij}.

When $N_f < N_a$:
\be
\varphi_j \equiv \big( N_k-N_a(k) \big)[b_{kj}]_+ + \big( N_k-N_f(k) \big)[-b_{kj}]_+ + N_f(j)' + N_f(j) + \sum_{i \neq k} N_i a_{ij} \;,
\ee
and by a similar reasoning, we conclude $\varphi_j \equiv 0$.
\end{proof}

%%%%%%%%%%%%%%%%%%%%%%%%%%%%%%%%%%%%%%%%%%%%
%
%              Bibliography
%
%%%%%%%%%%%%%%%%%%%%%%%%%%%%%%%%%%%%%%%%%%%%

{%\small
\bibliographystyle{JHEP}
\bibliography{2dClusterAlgebra}
}
\end{document}